\newtheorem{theorem}{Theorem}[section]
\newtheorem{lemma}[theorem]{Lemma}
\newtheorem{proposition}[theorem]{Proposition}
\theoremstyle{definition}
\newtheorem{definition}[theorem]{Definition}
\newtheorem{remark}[theorem]{Remark}
\newcommand{\Z}{{\mathbb Z}}
\newcommand{\R}{{\mathbb R}}
\newcommand{\z}{{\mathbf z}}
\newcommand{\Seed}{{\mathcal S}}
\newcommand{\alg}{{\mathcal A}}
\newcommand{\w}{\overline{w}}
\newcommand{\f}{{\mathfrak f}}
\newcommand{\m}{{\mathfrak m}}
\newcommand{\live}{{\texttt L}}
\newcommand{\block}{{\texttt B}}
\newcommand{\unknown}{{\texttt U}}
\newcommand{\IN}{{\textit IN}}
\DeclareMathOperator*{\E}{{\mathbb E}}
\DeclareMathOperator*{\argmax}{argmax}
\newcommand{\infmax}{{\sc InfMax}\xspace}
\newcommand{\ICM}{\texttt{ICM}\xspace}
\newcommand{\LTM}{\texttt{LTM}\xspace}
\title{Adaptive Greedy versus Non-Adaptive Greedy\\ for Influence Maximization}
\author{\name Wei Chen \email weic@microsoft.com \\
       \addr Microsoft Research, China
       \AND
       \name Binghui Peng \email bp2601@columbia.edu \\
       \addr Department of Computer Science, Columbia University
       \AND
       \name Grant Schoenebeck \email schoeneb@umich.edu \\
       \addr School of Information, University of Michigan
       \AND
       \name Biaoshuai Tao \email bstao@sjtu.edu.cn \\
       \addr School of Electronic Information and Electrical Engineering, \\Shanghai Jiao Tong University}
\begin{document}

\date{}
\maketitle

\begin{abstract}
We consider the \emph{adaptive influence maximization problem}: given a network and a budget $k$, iteratively select $k$ seeds in the network to maximize the expected number of adopters.  In the \emph{full-adoption feedback model}, after selecting each seed, the seed-picker observes all the resulting adoptions.  In the \emph{myopic feedback model}, the seed-picker only observes whether each neighbor of the chosen seed adopts.
Motivated by the extreme success of greedy-based algorithms/heuristics for influence maximization, we propose the concept of \emph{greedy adaptivity gap}, which compares the performance of the adaptive greedy algorithm to its non-adaptive counterpart.
Our first result shows that, for submodular influence maximization, the adaptive greedy algorithm can perform up to a $(1-1/e)$-fraction worse than the non-adaptive greedy algorithm, and that this ratio is tight.
More specifically, on one side we provide examples where the performance of the adaptive greedy algorithm is only a $(1-1/e)$ fraction of the performance of the non-adaptive greedy algorithm in four settings: for both feedback models and both the \emph{independent cascade model} and the \emph{linear threshold model}.
On the other side, we prove that in any submodular cascade, the adaptive greedy algorithm always outputs a $(1-1/e)$-approximation to the expected number of adoptions in the optimal non-adaptive seed choice.
Our second result shows that, for the general submodular diffusion model with full-adoption feedback, the adaptive greedy algorithm can outperform the non-adaptive greedy algorithm by an unbounded factor.
Finally, we propose a risk-free variant of the adaptive greedy algorithm that always performs no worse than the non-adaptive greedy algorithm.
\end{abstract}

\section{Introduction}
The \emph{influence maximization problem} (\infmax) is an optimization problem that asks which seeds a viral marketing campaign should target (e.g. by giving free products) so that the propagation from these seeds influences the most people in a social network.
That is, given a graph, a \emph{stochastic diffusion model} defining how each node is infected by its neighbors, and a limited budget $k$, how to pick $k$ seeds such that the expected number of total infected nodes in this graph at the end of the diffusion is maximized.
This problem has significant applications in viral marketing, outbreak detection, rumor controls, etc, and has been extensively studied (cf.~\shortciteA{chen2013information,li2018influence}).

For \infmax, most of the existing work has considered \emph{submodular} diffusion models, especially the \emph{independent cascade model} and the \emph{linear threshold model}~\shortcite{KempeKT03}.  Likewise, we also focus on submodular diffusion models.  In submodular diffusion models, a vertex $v$'s marginal probability of becoming infected after a new neighbor $t$ is infected given $S$ as the set of $v$'s already infected neighbors is at least the marginal probability that $v$ is infected after $t$ is newly infected given $T\supseteq S$ as the set of $v$'s already infected neighbors
(see the paragraph before Theorem~\ref{thm:submodular} for more details).   Intuitively, this means that the influence of infected nodes are substitutes and never have synergy.

When submodular \infmax is considered, nearly all the known algorithms/heuristics are based on a greedy algorithm that iteratively picks the seed that has the largest marginal influence.
Some of them improve the running time of the original greedy algorithm by skipping vertices that are known to be suboptimal~\shortcite{leskovec2007cost,goyal2011celf++}, while the others improve the scalability of the greedy algorithm by using more scalable algorithms to approximate the expected total influence~\shortcite{borgs2014maximizing,tang2014influence,tang2015influence,cheng2013staticgreedy,ohsaka2014fast} or computing a score of the seeds that is closely related to the expected total influence~\shortcite{chen2009efficient,chen2010scalable,ChenYZ10,goyal2011simpath,jung2012irie,galhotra2016holistic,tang2018online,schoenebeck2019influence}.
\shortciteA{arora2017debunking} benchmark most of the aforementioned variants of the greedy algorithms.

In this paper, we study the \emph{adaptive influence maximization problem}, where seeds are selected iteratively and feedback is given to the seed-picker after selecting each seed.
Two different feedback models have been studied in the past: the \emph{full-adoption feedback model} and the \emph{myopic feedback model}~\shortcite{golovin2011adaptive}.
In the full-adoption feedback model, the seed-picker sees the entire diffusion process of each selected seed, and in the myopic feedback model the seed-picker only sees whether each neighbor of the chosen seed is infected.

Past literature focused on the \emph{adaptivity gap}---the ratio between the performance of the \emph{optimal} adaptive algorithm and the performance of the \emph{optimal} non-adaptive algorithm~\shortcite{golovin2011adaptive,peng2019adaptive,chen2019adaptivity}.
However, even in the non-adaptive setting, \infmax is known to be APX-hard~\shortcite{KempeKT03,schoenebeck2019influence}.
As a result, in practice, it is not clear whether the adaptivity gap can measure how much better an adaptive algorithm can do.

In this paper, we define and consider the \emph{greedy adaptivity gap}, which is the ratio between the performance of the adaptive greedy algorithm and the non-adaptive greedy algorithm.
We focus on the gap between the greedy algorithms for three reasons.
First, as we mentioned, the APX-hardness of \infmax renders the practical implications of the adaptivity gap unclear.
Second, as we remarked at the beginning, the greedy algorithm is used almost exclusively in the context of influence maximization.
Third, the iterative nature of the original greedy algorithm naturally extends to the adaptive setting.

\subsection{Our Results}
We show that, for the general submodular diffusion models, with both the full-adoption feedback model and the myopic feedback model, the infimum of the greedy adaptivity gap is exactly $(1-1/e)$ (Section~\ref{sect:inf}).
In addition, this result can be extended to the two well-studied submodular diffusion models: the independent cascade model and the linear threshold model.
This is proved in two steps.

As the first step, in Section~\ref{sect:inf_gap_tight}, we show that there are \infmax instances where the adaptive greedy algorithm can only produce $(1-1/e)$ fraction of the influence of the solution output by the non-adaptive greedy algorithm.
This result is surprising: one would expect that the adaptivity is always helpful, as the feedback provides more information to the seed-picker, which makes the seed-picker refine the seed choices in future iterations.
Our result shows that this is not the case, and the feedback, if overly used, can make the seed-picker act in a more myopic way, which is potentially harmful.

As the second step, in Section~\ref{sect:inf_lower_bound}, we show that the adaptive greedy algorithm always achieves a $(1-1/e)$-approximation of the non-adaptive optimal solution, so its performance is always at least a $(1-1/e)$ fraction of the performance of the non-adaptive greedy algorithm.
In particular, combining the two steps, we see that when the adaptive greedy algorithm output only obtains a (nearly) $(1-1/e)$-fraction of the performance of the non-adaptive greedy algorithm, the non-adaptive greedy algorithm is (almost) optimal.
This worst-case guarantee indicates that the adaptive greedy algorithm will never be too bad.

As the second result, in Section~\ref{sect:sup}, we show that the supremum of the greedy adaptivity gap is infinity, for the general submodular diffusion model with full-adoption feedback.
This indicates that the adaptive greedy algorithm can perform significantly better than its non-adaptive counterpart.
We also show, with almost the same proof, that the adaptivity gap in this setting (general submodular model with full-adoption feedback) is also unbounded.

All the results above hold for the ``exact'' deterministic greedy algorithm where a vertex with the exact maximum marginal influence is chosen as a seed in each iteration.
However, most variants of the greedy algorithm used in practice are randomized algorithms that find a seed with a marginal influence \emph{close to} the maximum \emph{with high probability} in each iteration.
In Section~\ref{sect:robustness}, we discuss how our results for the exact greedy algorithm can be adapted to those greedy algorithms used in practice.

Finally, in Section~\ref{sect:variant}, we propose a risk-free but more conservative variant of the adaptive greedy algorithm, which always performs at least as well as the non-adaptive greedy algorithm.
In Section~\ref{sect:experiments}, we compare this variant of the adaptive greedy algorithm with the adaptive greedy algorithm and the non-adaptive greedy algorithm by implementing experiments on social networks in our daily lives.

\subsection{Related Work}
The influence maximization problem was initially posed by \shortciteA{DomingosR01,RichardsonD02}.
\shortciteA{KempeKT03} proposed the linear threshold model and the independent cascade model, and show that they are submodular.
Whenever a diffusion model is submodular, the greedy algorithm was shown to obtain a $(1-1/e)$-approximation to the optimal number of infections~\shortcite{Nemhauser78,KempeKT03,KempeKT05,MosselR10}.
For undirected networks, the greedy algorithm provides a $(1-1/e+c)$-approximation for some constant $c>0$ for the independent cascade model~\shortcite{khanna2014influence}, while the approximation guarantee of the greedy algorithm for the linear threshold model is still asymptotically $(1-1/e)$ (i.e., the constant improvement $+c$ to the approximation guarantee in the independent cascade model does not occur in the linear threshold model)~\shortcite{schoenebeck2020limitations}.

For adaptive \infmax, \shortciteA{golovin2011adaptive} showed that \infmax with the independent cascade model and full-adoption feedback is \emph{adaptive submodular}\footnote{Informally, if we have two observations $\varphi_1$ and $\varphi_2$ such that $\varphi_2$ includes $\varphi_1$ (i.e., the status of all the edges/vertices observed in $\varphi_1$ are also observed in $\varphi_2$), adaptive submodularity says that the marginal increment of the total expected number of infections when including an extra seed $s$ given $\varphi_1$ is no less than the marginal increment of the total expected number of infections when including the extra seed $s$ given $\varphi_2$. We refer the readers to Reference~\shortcite{golovin2011adaptive} for the formal definition of adaptive submodularity.}, which implies that the adaptive greedy algorithm obtains a $(1-1/e)$-approximation to the adaptive optimal solution.
On the other hand, \infmax for the independent cascade model with myopic feedback, as well as \infmax for the linear threshold model with both feedback models, are not adaptive submodular.
In particular, the adaptive greedy algorithm fails to obtain a $(1-1/e)$-approximation for the independent cascade model with myopic feedback~\shortcite{peng2019adaptive}.
\shortciteA{peng2019adaptive} showed that the adaptivity gap for the independent cascade model with myopic feedback is at most $4$ and at least $e/(e-1)$, and they also showed that both the adaptive and non-adaptive greedy algorithms perform a $0.25(1-1/e)$-approximation to the adaptive optimal solution.
\shortciteA{d2020improved} improved these results by showing that the adaptivity gap under the same setting is at most $2e/(e-1)\approx3.164$ and non-adaptive greedy algorithms perform a $0.5(1-1/e)$-approximation.
The adaptivity gap for the independent cascade model with full-adoption feedback, as well as the adaptivity gap for the linear threshold model with both feedback models, are still open problems, although there is some partial progress~\shortcite{chen2019adaptivity,d2020better}.

Our paper is not the first work studying the adaptive greedy algorithm.
Previous work focused on improving the running time of the adaptive greedy algorithm~\shortcite{han2018efficient,sun2018multi}.
However, to the best of our knowledge, our work is the first one that compares the adaptive greedy algorithm to its non-adaptive counterpart.

Finally, we remark that there do exist \infmax algorithms that are not based on greedy~\shortcite{BKS07,GL13,angell2017don,schoenebeck2017beyond,toct2019beyond,schoenebeck2019think}, but they are typically for non-submodular diffusion models.

We summarize the existing results about the adaptivity gap and our new results about the greedy adaptivity gap in Table~\ref{tab:results}.

\begin{table}[t]
    \centering
    \begin{tabular}{|l||l|l|l|}
    \hline
    model & AG & GAG inf & GAG sup \\
    \hline
    \hline
    \ICM, full-adoption  & 
    \begin{tabular}{l}
         $\geq e/(e-1)$  \\
         \shortcite{chen2019adaptivity}
    \end{tabular}
     &  
     \begin{tabular}{l}
          $1-1/e$  \\
          (Thm~\ref{thm:inf_gap}) 
     \end{tabular}
       & unknown\\
    \hline
    \ICM, myopic  &   
    \begin{tabular}{l}
         $\geq e/(e-1)$, $\leq4$ \\
         \shortcite{peng2019adaptive}\\
         $\leq 2e/(e-1)$\\
         \shortcite{d2020improved}
    \end{tabular}
    &  \begin{tabular}{l}
          $1-1/e$  \\
          (Thm~\ref{thm:inf_gap}) 
     \end{tabular} & 
    \begin{tabular}{l}
         $\leq4e/(e-1)$  \\
         \shortcite{peng2019adaptive}\\
         $\leq2e^2/(e-1)^2$\\
         \shortcite{d2020improved}
    \end{tabular}\\
    \hline
    \LTM, full-adoption & unknown &  \begin{tabular}{l}
          $1-1/e$  \\
          (Thm~\ref{thm:inf_gap}) 
     \end{tabular} & unknown\\
    \hline
    \LTM, myopic & unknown &  \begin{tabular}{l}
          $1-1/e$  \\
          (Thm~\ref{thm:inf_gap}) 
     \end{tabular} & unknown\\
    \hline
    GSM, full-adoption & $\infty$ (Thm~\ref{thm:adaptivityGap}) &  \begin{tabular}{l}
          $1-1/e$  \\
          (Thm~\ref{thm:inf_gap}) 
     \end{tabular} & $\infty$ (Thm~\ref{thm:sup_gap})\\
    \hline
    GSM, myopic &   
    \begin{tabular}{l}
         $\geq e/(e-1)$ \\
          \shortcite{peng2019adaptive}
    \end{tabular}& \begin{tabular}{l}
          $1-1/e$  \\
          (Thm~\ref{thm:inf_gap}) 
     \end{tabular} & unknown\\
    \hline
    \end{tabular}
    \caption{Results for the adaptivity gap (AG), the infimum of the greedy adaptivity gap (GAG inf) and the supremum of the greedy adaptivity gap (GAG sup), where \ICM stands for the independent cascade model, \LTM stands for the linear threshold model, and GSM stands for the general submodular diffusion model.}
    \label{tab:results}
\end{table}

\section{Preliminaries}
All graphs in this paper are simple and directed.
Given a graph $G=(V,E)$ and a vertex $v\in V$, let $\Gamma(v)$ and $\deg(v)$ be the set of in-neighbors and the in-degree of $v$ respectively.

\subsection{Triggering Model}
We consider the well-studied \emph{triggering model}~\shortcite{KempeKT03}, which is commonly used to capture ``general'' submodular diffusion models.
A more general way to capture submodular diffusion models is the \emph{general threshold model}~\shortcite{KempeKT03} with \emph{submodular local influence functions}. All our results hold under this setting as well. We will discuss this in Appendix~\ref{append:GTM}.
\begin{definition}[\shortciteA{KempeKT03}]\label{def:GTM}
  The \emph{triggering model}, $I_{G,F}$, is defined by a graph $G=(V,E)$ and for each vertex $v$
  a distribution $\mathcal{F}_v$ over all the subsets of its in-neighbors $\{0,1\}^{|\Gamma(v)|}$.
  Let $F=\{\mathcal{F}_v\mid v\in V\}$.

  On an input seed set $S \subseteq V$,  $I_{G,F}(S)$ outputs a set of infected vertices as follows:
  \begin{enumerate}[noitemsep,nolistsep]
    \item[1.] Initially, only vertices in $S$ are infected. Each vertex $v$ samples a subset of its in-neighbors $T_v\subseteq\Gamma(v)$ from $\mathcal{F}_v$ independently. We call $T_v$ the \emph{triggering set} of $v$.
    \item[2.]  In each subsequent round, a vertex $v$ becomes infected if a vertex in $T_v$ is infected in the previous round.
    \item[3.] After a round where no additional vertices are infected, the set of infected vertices is the output.
  \end{enumerate}
\end{definition}

$I_{G,F}$ in Definition~\ref{def:GTM} can be viewed as a random function $I_{G,F}:\{0,1\}^{|V|}\to\{0,1\}^{|V|}$.
In addition, if the triggering set $T_v$ is fixed for each vertex $v$, then $I_{G,F}$ is deterministic.
Given $v$, its triggering set $T_v$, and an in-neighbor $u\in\Gamma(v)$, we say that the edge $(u,v)$ is \emph{live} if $u\in T_v$, and we say that $(u,v)$ is \emph{blocked} if $u\notin T_v$.
It is easy to see that, when the triggering sets for all vertices are sampled, $I_{G,F}(S)$ is the set of all vertices that are reachable from $S$ when removing all blocked edges from the graph.

We define a \emph{realization} of a graph $G=(V,E)$ as a function $\phi:E\to\{\live,\block\}$ such that $\phi(e)=\live$ if $e\in E$ is live and $\phi(e)=\block$ if $e\in E$ is blocked.
Let $I_{G,F}^\phi:\{0,1\}^{|V|}\to\{0,1\}^{|V|}$ be the deterministic function corresponding to the triggering model $I_{G,F}$ with vertices' triggering sets following realization $\phi$.
We write $\phi\sim F$ to indicate that a realization $\phi$ is sampled according to $F=\{\mathcal{F}_v\}$.

The triggering model captures the well-known independent cascade and linear threshold models.
In the two definitions below, we define the two models in terms of the triggering model, which is sufficient for this paper.
In Appendix~\ref{append:originaldefinition}, we present the original definitions and give some intuitions for the two models for those readers who are not familiar with them.

\begin{definition}\label{def:ICM}
  The \emph{independent cascade model} \ICM is a special case of the triggering model $I_{G,F}$ where $G=(V,E,w)$ is an edge-weighted graph with $w(u,v)\in(0,1]$ for each $(u,v)\in E$ and $\mathcal{F}_v$ is the distribution such that each $u\in\Gamma(v)$ is included in $T_v$ with probability $w(u,v)$ independently.
\end{definition}

\begin{definition}\label{def:LTM}
  The \emph{linear threshold model} \LTM is a special case of the triggering model $I_{G,F}$ where $G=(V,E,w)$ is an edge-weighted graph with $w(u,v)>0$ for each $(u,v)\in E$ and $\sum_{u\in\Gamma(v)}w(u,v)\leq1$ for each $v\in V$, and $\mathcal{F}_v$ is the distribution defined as follows: order $v$'s in-neighbors $u_1,\ldots,u_{T}$ arbitrarily (where $T$ is the in-degree of $v$), sample a real number $r$ in $[0,1]$ uniformly, and
  $$T_v=\left\{\begin{array}{ll}
      \{u_t\} & \mbox{if }r\in\left[\sum_{i=1}^{t-1}w(u_i,v),\sum_{i=1}^tw(u_i,v)\right) \\
      \emptyset & \mbox{if }r\geq\sum_{i=1}^Tw(u_i,v)
  \end{array}\right..$$
  Intuitively, $T_v$ includes at most one of $v$'s in-neighbors such that each $u_t$ is included with probability $w(u_t,v)$.
\end{definition}

Given a triggering model $I_{G,F}$, let $\sigma_{G,F}:\{0,1\}^{|V|}\to\R_{\geq0}$ be the \emph{global influence function} defined as $\sigma_{G,F}(S)=\E_{\phi\sim F}[|I_{G,F}^\phi(S)|]$.
We drop the subscripts $G,F$ and write the global influence function as $\sigma(\cdot)$ when there is no ambiguity.

A function $f$ mapping from a set of elements to a non-negative value is \emph{submodular} if $f(A\cup\{v\})-f(A)\geq f(B\cup\{v\})-f(B)$ for any two sets $A,B$ with $A\subseteq B$ and any element $v\notin B$.

\begin{theorem}[\shortciteA{KempeKT03}]\label{thm:submodular}
For any triggering model $I_{G,F}$, $\sigma_{G,F}(\cdot)$ is submodular. In particular, $\sigma_{G,F}(\cdot)$ is submodular for both \ICM and \LTM. 
\end{theorem}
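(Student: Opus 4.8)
The plan is to decouple the randomness from the combinatorics by conditioning on a realization, and then invoke the fact that submodularity is preserved under taking expectations. Concretely, recall from the discussion after Definition~\ref{def:GTM} that once all triggering sets are sampled, i.e. once a realization $\phi:E\to\{\live,\block\}$ is fixed, the set $I_{G,F}^\phi(S)$ is exactly the set of vertices reachable from $S$ in the graph obtained by deleting all blocked edges. Since $\sigma_{G,F}(S)=\E_{\phi\sim F}[|I_{G,F}^\phi(S)|]$ is an average (a non-negative combination) of the deterministic functions $g_\phi(S):=|I_{G,F}^\phi(S)|$, and since a non-negative linear combination of submodular functions is submodular, it suffices to prove that $g_\phi$ is submodular for every fixed $\phi$. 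The first step I would carry out is to record this reduction, noting that the weights $\Pr_{\phi\sim F}[\phi]$ are non-negative and that closure of submodularity under non-negative combinations follows directly from the definition applied termwise.

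The heart of the argument is then the deterministic step: showing each $g_\phi$ is submodular. For a fixed $\phi$, let $R_\phi(v)$ denote the set of vertices reachable from $v$ (including $v$) in the live-edge subgraph. Reachability from a seed set is a union of the individual reachable sets, so $I_{G,F}^\phi(S)=\bigcup_{s\in S}R_\phi(s)$ and hence $g_\phi(S)=\bigl|\bigcup_{s\in S}R_\phi(s)\bigr|$ is a coverage function. To verify the submodularity inequality, take $A\subsetneq B$ and $v\notin B$ and observe that the marginal gain of adding $v$ is exactly the number of newly-covered vertices:
\[
g_\phi(A\cup\{v\})-g_\phi(A)=\Bigl|R_\phi(v)\setminus\textstyle\bigcup_{a\in A}R_\phi(a)\Bigr|,
\qquad
g_\phi(B\cup\{v\})-g_\phi(B)=\Bigl|R_\phi(v)\setminus\textstyle\bigcup_{b\in B}R_\phi(b)\Bigr|.
\]
Because $A\subseteq B$ gives $\bigcup_{a\in A}R_\phi(a)\subseteq\bigcup_{b\in B}R_\phi(b)$, the set on the left is a superset of the set on the right, so the left marginal gain dominates the right one. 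This is precisely the required inequality $g_\phi(A\cup\{v\})-g_\phi(A)\geq g_\phi(B\cup\{v\})-g_\phi(B)$.

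Combining the two steps finishes the proof: each $g_\phi$ is submodular, so their non-negative average $\sigma_{G,F}$ is submodular, and the claim for \ICM and \LTM follows since both are special cases of the triggering model by Definitions~\ref{def:ICM} and~\ref{def:LTM}. I do not anticipate a serious obstacle here, as the coverage-function argument is elementary; the only point that warrants care is justifying the identity $I_{G,F}^\phi(S)=\bigcup_{s\in S}R_\phi(s)$ cleanly (monotone reachability in a fixed directed graph) and making explicit that expectation over $\phi$ is a non-negative combination rather than an arbitrary one, since the preservation of submodularity relies on the coefficients being non-negative.
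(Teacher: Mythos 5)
Your proof is correct and is essentially the canonical argument of \citet{KempeKT03}, which the paper cites for this theorem rather than reproving it: fix a live-edge realization $\phi$, note that $S\mapsto|I_{G,F}^\phi(S)|$ is a coverage function $\bigl|\bigcup_{s\in S}R_\phi(s)\bigr|$ and hence submodular, and conclude since $\sigma_{G,F}$ is a non-negative combination (expectation over the finitely many realizations) of these functions. The one point you flag as needing care, the identity $I_{G,F}^\phi(S)=\bigcup_{s\in S}R_\phi(s)$, is already supplied by the paper's remark after Definition~\ref{def:GTM} that $I_{G,F}^\phi(S)$ is the set of vertices reachable from $S$ once blocked edges are removed, so nothing is missing.
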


\subsection{\infmax and Adaptive \infmax}
\label{sect:Prelim_infmax}
\begin{definition}\label{def:infmax}
  The \emph{influence maximization problem} (\infmax) is an optimization problem which takes inputs $G=(V,E)$, $F$, and $k\in\Z^+$, and outputs a seed set $S$ that maximizes the expected total number of infections: $S\in\argmax_{S \subseteq V: |S| \leq k} \sigma(S)$.
\end{definition}

In the remaining part of this subsection, we define the adaptive version of the influence maximization problem.
We will define two different models: the \emph{full-adoption feedback model} and the \emph{myopic feedback model}.
Suppose a seed set $S\subseteq V$ is chosen by the seed-picker, and an underlying realization $\phi$ is given but not known by the seed-picker.
Informally, in the full-adoption feedback model, the seed-picker sees all the vertices that are infected by $S$ in all future iterations, i.e., the seed-picker sees $I_{G,F}^\phi(S)$. In the myopic feedback model, the seed-picker only sees the states of $S$'s neighbors, i.e., whether each vertex in $\{v\mid \exists s\in S:s\in\Gamma(v)\}$ is infected.

Define a \emph{partial realization} as a function $\varphi:E\to\{\live,\block,\unknown\}$ such that $\varphi(e)=\live$ if $e$ is known to be live, $\varphi(e)=\block$ if $e$ is known to be blocked, and $\varphi(e)=\unknown$ if the status of $e$ is not yet known.
We say that a partial realization $\varphi$ \emph{is consistent with} the full realization $\phi$, denoted by $\phi\simeq\varphi$, if $\phi(v)=\varphi(v)$ whenever $\varphi(v)\neq\unknown$.
For ease of notation, for an edge $(u,v)\in E$, we will write $\phi(u,v),\varphi(u,v)$ instead of $\phi((u,v)),\varphi((u,v))$.

\begin{definition}\label{def:fulladoption}
  Given a triggering model $I_{G=(V,E),F}$ with a realization $\phi$, the \emph{full-adoption feedback} is a function $\Phi_{G,F,\phi}^{\f}$ mapping a seed set $S\subseteq V$ to a partial realization $\varphi$ such that
  \begin{itemize}
      \item $\varphi(u,v)=\phi(u,v)$ for each $u\in I_{G,F}^\phi(S)$, and
      \item $\varphi(u,v)=\unknown$ for each $u\notin I_{G,F}^\phi(S)$.
  \end{itemize}
\end{definition}


\begin{definition}\label{def:myopic}
  Given a triggering model $I_{G=(V,E),F}$ with a realization $\phi$, the \emph{myopic feedback} is a function $\Phi_{G,F,\phi}^\m$ mapping a seed set $S\subseteq V$ to a partial realization $\varphi$ such that
  \begin{itemize}
      \item $\varphi(u,v)=\phi(u,v)$ for each $u\in S$, and
      \item $\varphi(u,v)=\unknown$ for each $u\notin S$.
  \end{itemize}
\end{definition}

An \emph{adaptive policy} $\pi$ is a function that maps a seed set $S$ and a partial realization $\varphi$ to a vertex $v=\pi(S,\varphi)$, which corresponds to the next seed the policy $\pi$ would choose given $\varphi$ and $S$ being the set of seeds that has already been chosen.
Naturally, we only care about $\pi(S,\varphi)$ when $\varphi=\Phi_{G,F,\phi}^\f(S)$ or $\varphi=\Phi_{G,F,\phi}^\m(S)$, although we define $\pi$ that specifies an output for any possible inputs $S$ and $\varphi$.
Notice that we have defined $\pi$ as a deterministic policy for simplicity, and our results hold for randomized policies.
Let $\Pi$ be the set of all possible adaptive policies.

Notice that an adaptive policy $\pi$ completely specifies a seeding strategy in an iterative way.
Given an adaptive policy $\pi$ and a realization $\phi$, let $\Seed^\f(\pi,\phi,k)$ be the first $k$ seeds selected according to $\pi$ with the underlying realization $\phi$ under the full-adoption feedback model.
By our definition, $\Seed^\f(\pi,\phi,k)$ can be computed as follows:
\begin{enumerate}
    \item initialize $S=\emptyset$;
    \item update $S=S\cup\{\pi(S,\Phi_{G,F,\phi}^\f(S))\}$ for $k$ iterations;
    \item output $\Seed^\f(\pi,\phi,k)=S$.
\end{enumerate}
Define $\Seed^\m(\pi,\phi,k)$ similarly for the myopic feedback model, where $\Phi_{G,F,\phi}^\m(S)$ instead of $\Phi_{G,F,\phi}^\f(S)$ is used in Step~2 above.

Let $\sigma^\f(\pi,k)$ be the expected number of infected vertices given that $k$ seeds are chosen according to $\pi$, i.e., $\sigma^\f(\pi,k)=\E_{\phi\sim F}[|I_{G,F}^\phi(\Seed^\f(\pi,\phi,k))|]$.
Define $\sigma^\m(\pi,k)$ similarly for the myopic feedback model.

\begin{definition}\label{def:infmax_adaptive}
  The \emph{adaptive influence maximization problem} (adaptive \infmax) is an optimization problem which takes as inputs $G=(V,E)$, $F$, and $k\in\Z^+$, and outputs an adaptive policy $\pi$ maximizing the expected total number of infections: $\pi\in\argmax_{\pi\in\Pi} \sigma^\f(\pi,k)$ or $\pi\in\argmax_{\pi\in\Pi} \sigma^\m(\pi,k)$ (depending on the feedback model used).
\end{definition}

\subsection{Adaptivity Gap and Greedy Adaptivity Gap}
\label{sect:prelim_AG}
The adaptivity gap is defined as the ratio between the performance of the optimal adaptive policy and the performance of the optimal non-adaptive seeding strategy.
In this paper, we only consider the adaptivity gap for triggering models.

\begin{definition}\label{def:adaptivitygap}
  The \emph{adaptivity gap with full-adoption feedback} is
  $$\sup_{G,F,k}\frac{\max_{\pi\in\Pi}\sigma^\f(\pi,k)}{\max_{S\subseteq V,|S|\leq k}\sigma(S)}.$$
  The \emph{adaptivity gap with myopic feedback} is defined similarly.
\end{definition}

The (non-adaptive) \emph{greedy algorithm} iteratively picks a seed that has the maximum marginal gain to the objective function $\sigma(\cdot)$:
\begin{enumerate}
    \item initialize $S=\emptyset$;
    \item update for $k$ iterations $S\leftarrow S\cup\{s\}$, where $s\in\argmax_{s\in V}(\sigma(S\cup\{s\})-\sigma(S))$ with tie broken in an arbitrarily consistent order;
    \item return $S$.
\end{enumerate}
Let $S^g(k)$ be the set of $k$ seeds output by the (non-adaptive) greedy algorithm.

The \emph{greedy adaptive policy} $\pi^g$ is defined as $\pi^g(S,\varphi)=s$ such that 
$$s\in\argmax_{s\in V}\E_{\phi\simeq\varphi}\left[\left|I_{G,F}^\phi\left(S\cup\{s\}\right)\right|-\left|I_{G,F}^\phi\left(S\right)\right|\right],$$
with ties broken in an arbitrary consistent order.

\begin{definition}\label{def:adaptivitygreedygap}
  Given a triggering model $I_{G,F}$ and $k\in\Z^+$, the \emph{greedy adaptivity gap with full-adoption feedback} is
  $\frac{\sigma^\f(\pi^g,k)}{\sigma(S^g(k))}$.
  The \emph{greedy adaptivity gap with myopic feedback} is defined similarly.
\end{definition}

Notice that, unlike the adaptivity gap in Definition~\ref{def:adaptivitygap}, we leave $G,F,k$ unspecified (instead of taking a supremum over them) when defining the greedy adaptivity gap.
This is because we are interested in both supremum and infimum of the ratio $\frac{\sigma^\f(\pi^g,k)}{\sigma(S^g(k))}$.
Notice that the infimum of the ratio $\frac{\max_{\pi\in\Pi}\sigma^\f(\pi,k)}{\max_{S\subseteq V,|S|\leq k}\sigma(S)}$ in Definition~\ref{def:adaptivitygap} is $1$: the optimal adaptive policy is at least as good as the optimal non-adaptive policy, as the non-adaptive policy can be viewed as a special adaptive policy; on the other hand, it is easy to see that there are \infmax instances such that the optimal adaptive policy is no better than non-adaptive one (for example, a graph containing $k$ vertices but no edges).
For this reason, we only care about the supremum of this ratio.

\section{Infimum of Greedy Adaptivity Gap}
\label{sect:inf}
In this section, we show that the infimum of the greedy adaptivity gap for the triggering model is exactly $(1-1/e)$, for both the full-adoption feedback model and the myopic feedback model.
This implies that the greedy adaptive policy can perform even worse than the conventional non-adaptive greedy algorithm, but it will never be significantly worse.
Moreover, we show that this result also holds for both
\ICM (Definition~\ref{def:ICM}) and \LTM (Definition~\ref{def:LTM}).
The theorem below formalizes those above-mentioned results.

\begin{theorem}\label{thm:inf_gap}
For the full-adoption feedback model,
$$\inf_{G,F,k:\mbox{ }I_{G,F}\text{ is \ICM}}\frac{\sigma^\f(\pi^g,k)}{\sigma(S^g(k))}=\inf_{G,F,k:\mbox{ }I_{G,F}\text{ is \LTM}}\frac{\sigma^\f(\pi^g,k)}{\sigma(S^g(k))}
=\inf_{G,F,k}\frac{\sigma^\f(\pi^g,k)}{\sigma(S^g(k))}=1-\frac1e.$$
The same result holds for the myopic feedback model.
\end{theorem}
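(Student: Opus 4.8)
The target is an exact infimum, so the plan is to prove two matching bounds: (i) a uniform lower bound showing $\sigma^\f(\pi^g,k)/\sigma(S^g(k)) \ge 1-1/e$ for every triggering model and every $k$ (and identically for myopic feedback), and (ii) explicit families of instances in each of the four settings whose ratios tend to $1-1/e$. Since \ICM and \LTM are special cases of the triggering model, the three infima in the statement are sandwiched: the uniform lower bound forces all of them to be $\ge 1-1/e$, while the infimum over all triggering models is at most $\min\{\inf_{\ICM},\inf_{\LTM}\}$, so tight examples in \ICM and \LTM alone pin all three values to $1-1/e$.

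For direction (i) I would first observe that $S^g(k)$ is a feasible $k$-set, so $\sigma(S^g(k)) \le \mathrm{OPT} := \max_{|S|\le k}\sigma(S)$; hence it suffices to prove $\sigma^\f(\pi^g,k) \ge (1-1/e)\,\mathrm{OPT}$. Writing $V_i := \sigma^\f(\pi^g,i)$, I would establish the recurrence $V_{i+1} - V_i \ge \frac1k(\mathrm{OPT} - V_i)$, which telescopes from $V_0=0$ to $V_k \ge (1-(1-1/k)^k)\,\mathrm{OPT} \ge (1-1/e)\,\mathrm{OPT}$. To get the recurrence, condition on the partial realization $\varphi_i$ observed after $i$ rounds, together with the resulting seed set $S_i$. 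For fixed $\varphi_i$ the conditional influence $A \mapsto \E_{\phi\simeq\varphi_i}[|I_{G,F}^\phi(S_i\cup A)|]$ is monotone and submodular, being an expectation of set-reachability (coverage) functions, so the greedy marginal gain dominates a $\frac1k$-share of the gain from adjoining the fixed non-adaptive optimum $S^*$:
\[ \max_s \E_{\phi\simeq\varphi_i}\bigl[|I_{G,F}^\phi(S_i\cup\{s\})| - |I_{G,F}^\phi(S_i)|\bigr] \ge \tfrac1k\,\E_{\phi\simeq\varphi_i}\bigl[|I_{G,F}^\phi(S_i\cup S^*)| - |I_{G,F}^\phi(S_i)|\bigr]. \]
Taking expectation over $\varphi_i$ and applying the tower property turns the left-hand maximum into $V_{i+1}-V_i$ and the subtracted term into $V_i$; the key step is that the averaged positive term equals $\E_\phi[|I_{G,F}^\phi(S_i\cup S^*)|] \ge \E_\phi[|I_{G,F}^\phi(S^*)|] = \mathrm{OPT}$, where the inequality is the \emph{pointwise} monotonicity $I_{G,F}^\phi(S_i\cup S^*)\supseteq I_{G,F}^\phi(S^*)$. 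Crucially this argument never invokes adaptive submodularity, so it applies verbatim to both feedback models (which only change how $\varphi_i$ is defined) and to all triggering models, including those such as \ICM with myopic feedback that are not adaptive submodular.

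For direction (ii) I would construct, separately for each of the four settings, a family of instances parameterized by $k$ (and auxiliary weights or probabilities) in which the non-adaptive greedy selects a nearly optimal, low-overlap seed set of value $\approx \mathrm{OPT}$, while the feedback revealed to the adaptive greedy after its early choices systematically inflates the conditional marginal gains of redundant seeds, steering it toward a set whose coverage is only a $1-(1-1/k)^k$ fraction; letting $k\to\infty$ then drives the ratio to $1-1/e$. A single skeleton can be shared across settings, with the gadget realizing the \ICM, \LTM, full-adoption, and myopic mechanics differing only in how edge weights and triggering distributions are chosen so that the observed feedback is misleading. I expect this direction to be the main obstacle: the delicate part is engineering each instance so that, at every iteration, both greedy algorithms provably make the intended selections (respecting the stated consistent tie-breaking), and then computing the two influence values precisely enough to certify that the ratio converges to exactly $1-1/e$ rather than merely to some constant below $1$. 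The lower bound (i), by contrast, is a clean submodularity-plus-telescoping computation.
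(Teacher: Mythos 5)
Your overall skeleton---sandwiching the three infima between a uniform lower bound and tight \ICM/\LTM examples---is exactly the paper's, and your direction (i) is essentially the paper's proof of Theorem~\ref{thm:lowerbound}: conditional submodularity of $A\mapsto\E_{\phi\simeq\varphi_i}[|I^\phi_{G,F}(S_i\cup A)|]$ (Proposition~\ref{prop:partial_submodular}), the $1/k$-share inequality against the non-adaptive optimum $S^\ast$ (Proposition~\ref{prop:marginal}), and the telescoping recurrence driven by the law of total probability and the pointwise monotonicity $I^\phi_{G,F}(S_i\cup S^\ast)\supseteq I^\phi_{G,F}(S^\ast)$ (Proposition~\ref{prop:generallowerbound}). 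One genuine difference in your favor: you justify conditional submodularity directly as a mixture of fixed-realization reachability functions, each submodular, under whatever conditional distribution $\phi\simeq\varphi_i$ induces. That is valid for triggering models and is simpler than the paper's argument, which constructs a conditioned triggering model $I_{G',F'}$ with renormalized triggering distributions and a coupling; the heavier construction only becomes necessary when the paper extends the result to the general threshold model in its appendix, where fixed-realization influence need not be submodular.

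The genuine gap is direction (ii). You state the intent---instances where feedback misleads the adaptive greedy into a redundant seed set worth only a $1-(1-1/k)^k$ fraction---but you construct nothing, and you yourself flag this as the main obstacle. Without it you have only proved that all three infima are at least $1-1/e$; the theorem's content is the equality, and the examples are arguably the harder, more novel half (the paper calls the phenomenon surprising). The paper's Lemmas~\ref{lem:tightICM} and~\ref{lem:tightLTM} do this work with explicit gadgets: a heavy vertex $s$ (weight $2W$) that both algorithms seed first; a light vertex $t$ that $s$ infects with small probability $2k/\sqrt{W}$; seeds $u_1,\ldots,u_k$ with pairwise disjoint influence versus $v_1,\ldots,v_k$ whose influence overlaps the mass covered by the $u_i$'s; and weights tuned so that non-adaptively $u_1$ beats $v_1$ by a hair (because $t$'s contribution to $v_1$ is discounted by the chance $s$ already infected it), yet conditioned on the high-probability event that $t$ is uninfected, $v_1$ beats $u_1$, after which the adaptive greedy is locked into $v_2,\ldots,v_k$ and forfeits a $(1-1/k)^k$ fraction of the weight. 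Certifying every greedy choice in both algorithms and computing both influences up to $O(k^2\sqrt{W})$ error, then taking $W,k\to\infty$, is where the real work lies, and your proposal does not contain it.
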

In Section~\ref{sect:inf_gap_tight}, we show by providing examples that the greedy adaptive policy in the worst case will only achieve $(1-1/e+\varepsilon)$-approximation of the expected number of infected vertices given by the non-adaptive greedy algorithm, for both \ICM and \LTM.

In Section~\ref{sect:inf_lower_bound}, we show that the greedy adaptive policy has performance at least $(1-1/e)$ of the performance of the non-adaptive optimal seeds (Theorem~\ref{thm:lowerbound}).
Theorem~\ref{thm:lowerbound} provides a lower bound on the greedy adaptivity gap for the triggering model and is also interesting on its own.
At the end of Section~\ref{sect:inf_lower_bound}, we prove Theorem~\ref{thm:inf_gap} by putting the results from Section~\ref{sect:inf_gap_tight} and Section~\ref{sect:inf_lower_bound} together.

\subsection{Tight Examples}
\label{sect:inf_gap_tight}
In this subsection, we show that the adaptive greedy algorithm can perform worse than the non-adaptive greedy algorithm by a factor of $(1-1/e+\varepsilon)$, for both \ICM and \LTM and any $\varepsilon>0$.
This may be surprising, as one would expect that the feedback provided to the seed-picker will refine the seed choices in  future iterations.
Here, we provide some intuitions why adaptivity can sometimes hurt.
Suppose there are two promising sequences of seed selections, $\{s,u_1,\ldots,u_k\}$ and $\{s,v_1,\ldots,v_k\}$, such that
\begin{itemize}
    \item $s$ is the best seed which will be chosen first;
    \item $\{s,u_1,\ldots,u_k\}$ has a better performance;
    \item the influence of $u_1,\ldots,u_k$ are non-overlapping, the influence of $v_1,\ldots,v_k$ are non-overlapping, but the influence of $u_i,v_j$ overlaps for each $i,j$; moreover, if $u_1$ is picked as the second seed, the greedy algorithm, adaptive or not, will continue to pick $u_2,\ldots,u_k$, and if $v_1$ is picked as the second seed, $v_2,\ldots,v_k$ will be picked next;
\end{itemize}
Now, suppose there is a vertex $t$ elsewhere which can be infected by both $s$ and $v_1$, such that
\begin{itemize}
    \item if $t$ is infected by $s$, which slightly reduces the marginal influence of $v_1$, $v_1$ will be less promising than $u_1$;
    \item if $t$ is not infected by $s$, $v_1$ is more promising than $u_1$;
    \item in average, when there is no feedback, $v_1$ is still less promising than $u_1$, even after adding the increment in $t$'s infection probability to $v_1$'s expected marginal influence.
\end{itemize}
In this case, the non-adaptive greedy algorithm will ``go to the right trend'' by selecting $u_1$ as the second seed; the adaptive greedy algorithm, if receiving feedback that $t$ is not infected by $s$, will ``go to the wrong trend'' by selecting $v_1$ next.

As a high-level description of the lesson we learned, both versions of the greedy algorithms are intrinsically myopic, and the feedback received by the adaptive policy may make the seed-picker act in a more myopic way, which could be more hurtful to the final performance.

We will assume in the rest of this section that vertices can have positive integer weights, as justified in the following remark.
\begin{remark}\label{remark:weightedvertices}
For both \ICM and \LTM, we can assume without loss of generality that each vertex has a positive integer weight, so that, in \infmax, we are maximizing the expected total weight of the infected vertices instead of maximizing the expected number of infected vertices as before.
Suppose we want to make a vertex $v$ have weight $W\in\Z^+$.
We can construct $W-1$ vertices $w_1,\ldots,w_{W-1}$, and create $W-1$ directed edges $(v,w_1),\ldots,(v,w_{W-1})$ with weight $1$. (Recall from Definition~\ref{def:ICM} and Definition~\ref{def:LTM} that the graphs in both \ICM and \LTM are edge-weighted, and the weights of edges completely characterize the collection of triggering set distributions $F$.)
It is straightforward from Definition~\ref{def:ICM} and Definition~\ref{def:LTM} that, for both \ICM and \LTM, each of $w_1,\ldots,w_{W-1}$ will be infected with probability $1$ if $v$ is infected.
In addition, both the greedy algorithm and the greedy adaptive policy will never pick any of $w_1,\ldots,w_{W-1}$ as seeds, as seeding $v$ is strictly better.
Therefore, we can consider the subgraph consisting of $v,w_1,\ldots,w_{W-1}$ as a gadget that representing a vertex $v$ having weight $W$.
\end{remark}

The following lemma shows that, for both the full-adoption and the myopic feedback models, under \ICM, the greedy adaptive policy can perform worse than the non-adaptive greedy algorithm by a factor of almost $(1-1/e)$.

\begin{lemma}\label{lem:tightICM}
For any $\varepsilon>0$, there exists $G,F,k$ such that $I_{G,F}$ is an \ICM and
$$\frac{\sigma^\f(\pi^g,k)}{\sigma(S^g(k))}\leq1-\frac1e+\varepsilon,\qquad\frac{\sigma^\m(\pi^g,k)}{\sigma(S^g(k))}\leq1-\frac1e+\varepsilon.$$
\end{lemma}
\begin{proof}
We will construct an \infmax instance $(G=(V,E,w),k+1)$ with $k+1$ seeds allowed.
Let $W\in\Z^+$ be a sufficiently large integer divisible by $k^{2k}(k-1)$ and whose value is to be decided later.
Let $\Upsilon=W/k^2$.
The vertex set $V$ contains the following weighted vertices:
\begin{itemize}
    \item a vertex $s$ that has weight $2W$;
    \item a vertex $t$ that has weight $4k\Upsilon$;
    \item $2k$ vertices $u_1,\ldots,u_k,v_1,\ldots,v_k$ that have weight $1$;
    \item $2k^2$ vertices $\{w_{ij}\mid i=1,\ldots,2k;j=1,\ldots,k\}$
    \begin{itemize}
        \item $w_{11},\ldots,w_{1k}$ have weight $\frac Wk$;
        \item for each $i\in\{2,\ldots,k\}$, $w_{i1},\ldots,w_{ik}$ have weight $\frac1k(1-\frac1k)^{i-1}W+\frac{4k-2}{k-1}\Upsilon$;
        \item for each $i\in\{k+1,\ldots,2k\}$, $w_{i1},\ldots,w_{ik}$ have weight $\frac1k(1-\frac1k)^kW$.
    \end{itemize}
\end{itemize}
The edge set $E$ is specified as follows:
\begin{itemize}
    \item create two edges $(v_1,t)$ and $(s,t)$;
    \item for each $i=1,\ldots,k$, create $2k$ edges $(u_i,w_{1i}),(u_i,w_{2i}),\ldots,(u_i,w_{(2k)i})$, and create $k$ edges $(v_i,w_{i1}),(v_i,w_{i2}),\ldots,(v_i,w_{ik})$.
\end{itemize}
For the weights of edges, all the edges have weight $1$ except for the edge $(s,t)$ which has weight $1/k$.

It is straightforward to check that
\begin{eqnarray}
\sigma(\{s\})&=&\w(s)+\frac1k\w(t)=2W+4\Upsilon,\label{eqn:sigma_s}\\
\forall i\in\{1,\ldots,k\}:\sigma(\{u_i\})&=&\w(u_i)+\sum_{j=1}^{2k}\w(w_{ji})=1+W+(4k-2)\Upsilon,\label{eqn:sigma_ui}\\
\sigma(\{v_1\})&=&\w(v_1)+\w(t)+\sum_{j=1}^k\w(w_{1j})=1+4k\Upsilon+W,\label{eqn:sigma_v1}\\
\forall i\in\{2,\ldots,k\}:\sigma(\{v_i\})&=&\w(v_i)+\sum_{j=1}^k\w(w_{ij})=1+\left(1-\frac1k\right)^{i-1}W+\frac{k(4k-2)}{k-1}\Upsilon,\label{eqn:sigma_vi}
\end{eqnarray}
and the influence of the remaining vertices are significantly less than these.

Since $s$ has the highest influence, both the greedy algorithm and the greedy adaptive policy will choose $s$ as the first seed.

The non-adaptive greedy algorithm will choose $u_1,\ldots,u_k$ iteratively for the next $k$ seeds, and the expected number of infected vertices by the seeds chosen by non-adaptive greedy algorithm is
\begin{equation}\label{eqn:su1touk}
\sigma(\{s,u_1,\ldots,u_k\})=\w(s)+\frac1k\w(t)+\sum_{i=1}^k\w(u_i)+\sum_{i=1}^{2k}\sum_{j=1}^k\w(w_{ij})=(k+2)W+o(W).
\end{equation}
To show the former claim, letting $U_i=\{s,u_1,\ldots,u_i\}$ and $U_0=\{s\}$, and supposing without loss of generality that the non-adaptive greedy algorithm has chosen $U_i$ for the first $(i+1)$ seeds (notice the symmetry of $u_1,\ldots,u_k$), it suffices to show that, for any vertex $x$, we have
\begin{equation}\label{eqn:nonadaptivegreedy_marginal}
    \sigma(U_i\cup\{u_{i+1}\})-\sigma(U_i)\geq\sigma(U_i\cup\{x\})-\sigma(U_i).
\end{equation}
To consider an $x$ that makes the right-hand side large, it is easy to see that we only need to consider one of $u_{i+1},\ldots,u_{k},v_1,v_2$, as the remaining vertices clearly have less marginal influence.
By symmetry, $\sigma(U_i\cup\{u_{i+1}\})=\sigma(U_i\cup\{u_{i+2}\})=\cdots=\sigma(U_i\cup\{u_{k}\})$.
Therefore, we only need to consider $x$ being $v_1$ or $v_2$.
It is straightforward to check that
\begin{align}
\sigma(U_i\cup\{u_{i+1}\})-\sigma(U_i)&=1+W+(4k-2)\Upsilon,\label{eqn:nonadaptivegreedy_marginal_ui+1}\\
\sigma(U_i\cup\{v_1\})-\sigma(U_i)&=1+(4k-4)\Upsilon+\frac{k-i}kW\leq1+W+(4k-4)\Upsilon,\label{eqn:nonadaptivegreedy_marginal_v1}\\
\sigma(U_i\cup\{v_{2}\})-\sigma(U_i)&=1+\frac{k-i}k\left(1-\frac1k\right)W+\frac{(k-i)(4k-2)}{k-1}\Upsilon\nonumber\\
&\leq1+W-\frac{W}k+(4k+5)\Upsilon.\label{eqn:nonadaptivegreedy_marginal_v2}
\end{align}
Recall that $\Upsilon=W/k^2$, straightforward calculations show that $\sigma(U_i\cup\{u_{i+1}\})-\sigma(U_i)$ is maximum.

For the greedy adaptive policy, we have seen that $s$ will be the first seed chosen.
The second seed picked by the greedy adaptive policy will depend on whether $t$ is infected by $s$.
Notice that the status of $t$ is available to the policy in both the full-adoption feedback model and the myopic feedback model, so the arguments here, as well as the remaining part of this proof, apply to both feedback models.
By straightforward calculations, the greedy adaptive policy will pick $v_1$ as the next seed if $t$ is not infected by $s$, and the policy will pick a seed from $u_1,\ldots,u_k$ otherwise.

In the latter case, the policy will eventually pick the seed set $\{s,u_1,\ldots,u_k\}$, which will infect vertices with a total weight of
$$\w(s)+\w(t)+\sum_{i=1}^k\w(u_i)+\sum_{i=1}^{2k}\sum_{j=1}^k\w(w_{ij})=(k+2)W+o\left(W\right)$$
with probability $1$ (notice that we are in the scenario that $t$ has been infected by $s$).

In the former case, we can see that the third seed picked by the policy will be $v_2$ instead of any of $u_1,\ldots,u_k$.
In particular, $v_2$ contributes $1+(1-\frac1k)W+\frac{k(4k-2)}{k-1}\Upsilon$ infected vertices.
On the other hand, since $w_{11},\ldots,w_{ik}$ have already been infected by $v_1$, the marginal contribution for each $u_i$ is $\sigma(\{u_i\})-\w(w_{1i})=1+W+(k-1)\cdot\frac{4k-2}{k-1}\Upsilon-\frac1kW$, which is less than the contribution of $v_2$.
By similar analysis, we can see that the greedy adaptive policy in this case will pick the seed set $\{s,v_1,\ldots,v_k\}$, which will infect vertices with a total weight of
$$
\w(s)+\w(t)+\sum_{i=1}^k\w(v_i)+\sum_{i=1}^k\sum_{j=1}^k\w(w_{ij})=
\left(2+\sum_{i=1}^k\left(1-\frac1k\right)^{i-1}\right)W+o\left(W\right)$$
$$\qquad=
\left(2+k\left(1-\left(1-\frac1k\right)^k\right)\right)W+o\left(W\right)
$$
in expectation.

Since $t$ will be infected with probability $\frac{1}{k}$, the expected weight of infected vertices for the greedy adaptive policy is
$$\frac1k\left((k+2)W+o\left(W\right)\right)+\left(1-\frac1k\right)\cdot\left(\left(2+k\left(1-\left(1-\frac1k\right)^k\right)\right)W+o\left(W\right)\right)$$
$$\qquad\leq\left(3+k\left(1-\left(1-\frac1k\right)^k\right)\right)W+o\left(W\right).$$

Putting this together with Equation~(\ref{eqn:su1touk}), both $\frac{\sigma^\f(\pi^g,k)}{\sigma(S^g(k))}$ and $\frac{\sigma^\m(\pi^g,k)}{\sigma(S^g(k))}$ in this case are at most
$$\frac{\left(3+k\left(1-\left(1-\frac1k\right)^k\right)\right)W+o\left(W\right)}{(k+2)W+o\left(W\right)},$$
which has limit $1-1/e$ when both $W$ and $k$ tend to infinity.
\end{proof}

The following lemma shows that, for both the full-adoption and the myopic feedback models, under \LTM, the greedy adaptive policy can perform worse than the non-adaptive greedy algorithm by a factor of almost $(1-1/e)$.

\begin{lemma}\label{lem:tightLTM}
For any $\varepsilon>0$, there exists $G,F,k$ such that $I_{G,F}$ is an \LTM and
$$\frac{\sigma^\f(\pi^g,k)}{\sigma(S^g(k))}\leq1-\frac1e+\varepsilon,\qquad\frac{\sigma^\m(\pi^g,k)}{\sigma(S^g(k))}\leq1-\frac1e+\varepsilon.$$
\end{lemma}
\begin{proof}
We will construct an \infmax instance $(G=(V,E,w),k+1)$ with $k+1$ seeds allowed.
Let $W\in\Z^+$ be a sufficiently large integer divisible by $k^{2k}(k-1)$ and whose value are to be decided later.
Let $\Upsilon=W/k^2$.
The vertex set $V$ contains the following weighted vertices:
\begin{itemize}
    \item a vertex $s$ that has weight $2W$;
    \item a vertex $t$ that has weight $4k\Upsilon$;
    \item $k$ vertices $u_1,\ldots,u_k$ that have weight $1$;
    \item $k$ vertices $v_1,\ldots,v_k$ such that $\w(v_1)=W$ and $\w(v_i)=W(1-\frac1k)^{i-1}+\frac{4k^2-6k}{k-1}\Upsilon$ for each $i=2,\ldots,k$;
    \item $k$ vertices $v_{k+1},\ldots,v_{2k}$ such that $\w(v_{k+1})=\cdots=\w(v_{2k})=W(1-\frac1k)^k$.
\end{itemize}
The edge set $E$ and the weights of edges are specified as follows:
\begin{itemize}
    \item create two edges $(v_1,t)$ and $(s,t)$ with weights $1-\frac{1}{k}$ and $\frac{1}{k}$ respectively;
    \item create $2k^2$ edges $\{(u_i,v_j)\mid i=1,\ldots,k;j=1,\ldots,2k\}$, each of which has weight $\frac1k$.
\end{itemize}
It is easy to check that each vertex $v$ satisfy $\sum_{u\in\Gamma(v)}w(u,v)\leq1$, as required in Definition~\ref{def:LTM}.

The remaining part of the analysis is similar to the proof of Lemma~\ref{lem:tightICM}.
The first seed chosen by both algorithms is $s$.
After this, each $u_i$ has marginal influence $1+\frac1k\sum_{i=1}^{2k}\w(v_i)=1+W+(4k-6)\Upsilon+(1-\frac1k)\frac1k\cdot4k\Upsilon=1+W+(4k-2-\frac4k)\Upsilon$ (notice that each of $v_1,\ldots,v_{2t}$ is infected with probability $1/k$, and $t$'s probability of infection increases from $1/k$ to $1$ if $v_1$ is infected by $u_i$).
Since $t$ is infected by $s$ with probability $\frac{1}{k}$, the marginal influence of $v_1$ without any feedback is $(1-\frac{1}{k})\w(t)+\w(v_1)=W+(4k-4)\Upsilon$.
If $t$ is known to be infected, the marginal influence of $v_1$ is $W$, and the marginal influence of each $u_i$ is $1+W+(4k-6)\Upsilon$. 
If $t$ is known to be uninfected, then seeding $v_1$ will infect $t$ with probability $1$.
In this case, the marginal influence of $v_1$ is $W+4k\Upsilon$, and the marginal influence of each $u_i$ is $1+W+(4k-2)\Upsilon$.
By comparing these values, the non-adaptive greedy algorithm will pick one of $u_1,\ldots,u_k$ as the second seed, and the greedy adaptive policy will pick $v_1$ as the second seed if $t$ is not infected and one of $u_1,\ldots,u_k$ as the second seed if $t$ is infected.
(Notice that $\w(v_1)>\w(v_2)>\cdots>\w(v_k)>\w(v_{k+1})=\cdots=\w(v_{2k})$.)

Simple analyses show the non-adaptive greedy algorithm will choose $\{s,u_1,\ldots,u_k\}$, which will infect all of $v_1,\ldots,v_{2k}$ with probability $1$, and the greedy adaptive policy will choose $\{s,v_1,\ldots,v_k\}$ with a very high probability $1-\frac{1}{k}$, which will leave $v_{k+1},\ldots,v_{2k}$ uninfected.
Since $s,v_1,\ldots,v_{2k}$ are the only vertices with weight $\Theta(W)$ and we have both $\sum_{i=1}^k\w(v_i)=(4+k(1-(1-\frac1k)^k))W+o(W)$ and $\sum_{i=1}^{2k}\w(v_i)=(4+k)W+o(W)$, the lemma follows by taking the limit $W\rightarrow\infty$ and $k\rightarrow\infty$.
\end{proof}

\subsection{Lower Bound}
\label{sect:inf_lower_bound}
We prove the following theorem in this subsection, which states that, for both the full-adoption and myopic feedback models, under the general triggering model, the greedy adaptive policy can achieve at least $(1-1/e)$ fraction of the performance of the non-adaptive \emph{optimal} solution.
\begin{theorem}\label{thm:lowerbound}
For a triggering model $I_{G,F}$, we have both
$$\sigma^\f(\pi^g,k)\geq\left(1-\frac1e\right)\max_{S\subseteq V,|S|\leq k}\sigma(S)\qquad\mbox{and}\qquad
\sigma^\m(\pi^g,k)\geq\left(1-\frac1e\right)\max_{S\subseteq V,|S|\leq k}\sigma(S).$$
\end{theorem}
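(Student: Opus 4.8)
The plan is to run the classical greedy analysis for monotone submodular maximization, but carried out in expectation over the random realization, comparing against a fixed optimal non-adaptive set $S^*\in\argmax_{|S|\le k}\sigma(S)$. Write $S_i$ for the (random) set of the first $i$ seeds chosen by $\pi^g$ under a random realization $\phi\sim F$, and let $g_i=\E_\phi[|I_{G,F}^\phi(S_i)|]$, so that $g_0=0$ and $g_k=\sigma^\f(\pi^g,k)$ (and likewise in the myopic case). The goal is to establish the one-step recursion $\sigma(S^*)-g_i\le(1-\tfrac1k)(\sigma(S^*)-g_{i-1})$, which unrolls to $g_k\ge\big(1-(1-1/k)^k\big)\sigma(S^*)\ge(1-1/e)\sigma(S^*)$.

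Two ingredients drive the recursion, both phrased through the conditional influence function $f_\varphi(S):=\E_{\phi\simeq\varphi}[|I_{G,F}^\phi(S)|]$ attached to a partial realization $\varphi$. First, $f_\varphi$ is monotone and submodular: conditioning $F$ on the edge statuses recorded in $\varphi$ leaves a product distribution over the remaining (independent) triggering sets, i.e.\ another triggering model, so Theorem~\ref{thm:submodular} applies verbatim, and adding seeds never shrinks the reachable set. Second, and this is the key probabilistic fact, for any \emph{fixed} seed set $S$ I want $\E[f_{\varphi}(S)]=\sigma(S)$, where the outer expectation is over the partial realization $\varphi$ produced by the feedback along the random run of $\pi^g$. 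Given this, applying it to $S^*$ gives $\E[f_\varphi(S^*)]=\sigma(S^*)$, and the tower property applied to the (history-measurable) set $S_{i-1}$ gives $\E[f_{\varphi_{i-1}}(S_{i-1})]=g_{i-1}$.

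With these in hand the per-step bound is routine. Condition on the history up to step $i$, summarized by $S_{i-1}$ and the partial realization $\varphi_{i-1}$. By definition $\pi^g$ picks $s_i$ maximizing the marginal $f_{\varphi_{i-1}}(S_{i-1}\cup\{s\})-f_{\varphi_{i-1}}(S_{i-1})$; since $f_{\varphi_{i-1}}$ is monotone submodular and $|S^*|\le k$, the standard inequality $f_{\varphi_{i-1}}(S_{i-1}\cup S^*)-f_{\varphi_{i-1}}(S_{i-1})\le k\cdot\big(f_{\varphi_{i-1}}(S_{i-1}\cup\{s_i\})-f_{\varphi_{i-1}}(S_{i-1})\big)$ together with monotonicity ($f_{\varphi_{i-1}}(S_{i-1}\cup S^*)\ge f_{\varphi_{i-1}}(S^*)$) yields a conditional marginal gain of at least $\frac1k\big(f_{\varphi_{i-1}}(S^*)-f_{\varphi_{i-1}}(S_{i-1})\big)$. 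Taking expectations over the history, the identity $g_i-g_{i-1}=\E[f_{\varphi_{i-1}}(S_{i-1}\cup\{s_i\})-f_{\varphi_{i-1}}(S_{i-1})]$ (another instance of the tower property, now for $S_i=S_{i-1}\cup\{s_i\}$) combined with $\E[f_{\varphi_{i-1}}(S^*)]=\sigma(S^*)$ and $\E[f_{\varphi_{i-1}}(S_{i-1})]=g_{i-1}$ produces exactly $g_i-g_{i-1}\ge\frac1k(\sigma(S^*)-g_{i-1})$.

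The main obstacle is justifying the posterior identity $\E[f_\varphi(S)]=\sigma(S)$, i.e.\ that conditioned on the feedback the posterior over full realizations is precisely the prior restricted to $\{\phi:\phi\simeq\varphi\}$. The subtlety is that the set of revealed edges is itself random and realization-dependent, so one must check it leaks no extra information. For full-adoption feedback this holds because the revealed edges are exactly the out-edges of $I_{G,F}^\phi(S_{i-1})$, and a BFS argument shows that any $\phi'$ agreeing with $\varphi_{i-1}$ on the revealed edges induces the same infected set and hence the same revealed-edge set; thus the event that the feedback equals $\varphi_{i-1}$ is exactly $\{\phi':\phi'(e)=\varphi_{i-1}(e)\text{ on revealed }e\}$, on which the unrevealed (independent) edges retain their prior. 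For myopic feedback the revealed edges are simply the out-edges of the chosen seeds, and the same determinacy argument is easier; since $S_{i-1}$ and the revealed set are measurable with respect to previously revealed edge states, the identity holds uniformly for both feedback models. Once this is in place, the recursion and the final $(1-1/e)$ bound are immediate.
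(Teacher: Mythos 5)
Your proposal is correct and follows essentially the same route as the paper: your conditional-submodularity ingredient is the paper's Proposition~\ref{prop:partial_submodular} (same idea of conditioning the triggering distributions to obtain a new triggering model and invoking Theorem~\ref{thm:submodular}), your per-step greedy bound is Proposition~\ref{prop:marginal}, and your recursion is Proposition~\ref{prop:generallowerbound}, which invokes the law of total probability exactly where you prove the posterior identity $\E[f_\varphi(S)]=\sigma(S)$. The only difference is presentational: your last paragraph spells out why the history/feedback event coincides with the consistency event $\{\phi : \phi\simeq\varphi\}$, a point the paper leaves implicit.
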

For a high-level idea of the proof, let $S$ with $|S|=i$ be the seeds picked by $\pi^g$ for the first $i$ iterations and $S^\ast$ be the optimal non-adaptive seed set: $S^\ast\in\argmax_{|S'|\leq k}\sigma(S')$.
Given $S$ as the existing seeds and any feedback (myopic or full-adoption) corresponding to $S$, we can show that the marginal increment to the expected influence caused by the $(i+1)$-th seed picked by $\pi^g$ is at least $1/k$ of the marginal increment to the expected influence caused by $S^\ast$.
Then, a standard argument showing that the greedy algorithm can achieve a $(1-1/e)$-approximation for any submodular monotone optimization problem can be used to prove this theorem.

Theorem~\ref{thm:lowerbound} is implied by the following three propositions.
In the remaining part of this section, we let $S^\ast$ be an optimal seed set for the non-adaptive \infmax: $S^\ast\in\max_{S\subseteq V,|S|\leq k}\sigma(S)$.

We first show that the global influence function after fixing a partial seed set $S$ and any possible feedback of $S$ is still submodular.

\begin{proposition}\label{prop:partial_submodular}
Given a triggering model $I_{G,F}$, any $S\subseteq V$, any feedback model (either full-adoption or myopic) and any partial realization $\varphi$ that is a valid feedback of $S$ (i.e., $\exists\phi:\varphi=\Phi_{G,F,\phi}^\f(S)$ or $\exists\phi:\varphi=\Phi_{G,F,\phi}^\m(S)$, depending on the feedback model considered), the function $\mathcal{T}:\{0,1\}^{|V|}\to\R_{\geq0}$ defined as $\mathcal{T}(X)=\E_{\phi\simeq\varphi}[|I_{G,F}^\phi(S\cup X)|]$ is submodular.
\end{proposition}
\begin{proof}
Fix a feedback model, $S\subseteq V$, and $\varphi$ that is a valid feedback of $S$.
Let $\overline{S}$ be the set of infected vertices indicated by the feedback of $S$.
Formally, $\overline{S}$ is the set of all vertices that are reachable from $S$ by only using edges $e$ with $\varphi(e)=\live$.

We consider a new triggering model $I_{G',F'}$ defined as follows:
\begin{itemize}
    \item $G'$ shares the same vertex set with $G$;
    \item The edge set of $G'$ is obtained by removing all edges $e$ in $G$ with $\varphi(e)\neq\unknown$;
    \item The distribution $\mathcal{F}_v'$ is normalized from $\mathcal{F}_v$. Specifically, for each $T_v\subseteq\Gamma(v)$, let $p(T_v)$ be the probability that $T_v$ is chosen as the triggering set under $\mathcal{F}_v$. Let $\Gamma'(v)$ be the set of $v$'s in-neighbors in $G'$, and we have $\Gamma'(v)\subseteq\Gamma(v)$ by our construction. Then, $\mathcal{F}_v'$ is defined such that $T_v\subseteq\Gamma'(v)$ is chosen as the triggering set with probability $p(T_v)/\sum_{T_v'\subseteq\Gamma'(v)}p(T_v')$.
\end{itemize}

A simple coupling argument reveals that
\begin{equation}\label{eqn:coupling}
    \mathcal{T}(X)=\E_{\phi\simeq\varphi}\left[\left|I_{G,F}^\phi(S\cup X)\right|\right]=
    \sigma_{G',F'}(\overline{S}\cup X).
\end{equation}
We define a coupling of a realization $\phi$ of $G$ with $\phi\simeq\varphi$ to a realization $\phi'$ of $G'$ in a natural way: $\phi(e)=\phi'(e)$ for all edges $e$ in $G'$.
From our construction of $F'=\{\mathcal{F}_v'\}$, it is easy to see that, when $\phi$ is coupled with $\phi'$, the probability that $\phi$ is sampled under $I_{G,F}$ conditioning on $\phi\simeq\varphi$ equals  the probability that $\phi'$ is sampled under $I_{G',F'}$.
Under this coupling, it is easy to see that $u$ is reachable from $S$ by live edges under $\phi$ if and only if it is reachable from $\overline{S}$ by live edges under $\phi'$.
This proves Equation~(\ref{eqn:coupling}).

Finally, by Theorem~\ref{thm:submodular}, $\sigma_{G',F'}(\cdot)$ is submodular.
Therefore, for any two vertex sets $A,B$ with $A\subseteq B$ and any $u\notin B$, 
$$\mathcal{T}(A\cup\{u\})-\mathcal{T}(A)=\sigma_{G',F'}(\overline{S}\cup A\cup\{u\})-\sigma_{G',F'}(\overline{S}\cup A)$$
is weakly larger than 
$$\mathcal{T}(B\cup\{u\})-\mathcal{T}(B)=\sigma_{G',F'}(\overline{S}\cup B\cup\{u\})-\sigma_{G',F'}(\overline{S}\cup B)$$
if $u\notin\overline{S}$,
and
$$\mathcal{T}(A\cup\{u\})-\mathcal{T}(A)=\mathcal{T}(B\cup\{u\})-\mathcal{T}(B)=0$$
if $u\in\overline{S}$.
In both case, the submodularity of $\mathcal{T}(\cdot)$ holds.
\end{proof}

Next, we show that the marginal gain to the global influence function after selecting one more seed according to $\pi^g$ is at least $1/k$ fraction of the marginal gain of including all the vertices in $S^\ast$ as seeds.

\begin{proposition}\label{prop:marginal}
Given a triggering model $I_{G,F}$, any $S\subseteq V$, any feedback model, and any partial realization $\varphi$ that is a valid feedback of $S$, let $s=\pi^g(S,\varphi)$ be the next seed chosen by the greedy policy. 
We have
$$\E_{\phi\simeq\varphi}\left[\left|I_{G,F}^\phi(S\cup\{s\})\right|\right]-\E_{\phi\simeq\varphi}\left[\left|I_{G,F}^\phi(S)\right|\right]
\geq\frac1k\left(\E_{\phi\simeq\varphi}\left[\left|I_{G,F}^\phi(S\cup S^\ast)\right|\right]-\E_{\phi\simeq\varphi}\left[\left|I_{G,F}^\phi(S)\right|\right]\right).$$
\end{proposition}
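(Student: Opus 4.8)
The plan is to reduce the statement to the elementary fact that, for a submodular set function, the best single element captures at least a $1/k$ fraction of the gain of any set of size at most $k$. I would introduce the shorthand $\mathcal{T}(X)=\E_{\phi\simeq\varphi}[|I_{G,F}^\phi(S\cup X)|]$ already used in Proposition~\ref{prop:partial_submodular}, so that the inequality to be proved reads $\mathcal{T}(\{s\})-\mathcal{T}(\emptyset)\geq\frac1k(\mathcal{T}(S^\ast)-\mathcal{T}(\emptyset))$. The first observation is that, by the definition of the greedy adaptive policy, $s=\pi^g(S,\varphi)$ maximizes $\mathcal{T}(\{v\})$ over $v\in V$, and hence also maximizes the single-element marginal $\mathcal{T}(\{v\})-\mathcal{T}(\emptyset)$; thus the left-hand side equals $\max_{v\in V}(\mathcal{T}(\{v\})-\mathcal{T}(\emptyset))$.

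Next I would invoke Proposition~\ref{prop:partial_submodular}, which guarantees that $\mathcal{T}$ is submodular. Writing $S^\ast=\{a_1,\ldots,a_m\}$ with $m\leq k$, I would telescope the gain of $S^\ast$ as $\mathcal{T}(S^\ast)-\mathcal{T}(\emptyset)=\sum_{j=1}^m\big(\mathcal{T}(\{a_1,\ldots,a_j\})-\mathcal{T}(\{a_1,\ldots,a_{j-1}\})\big)$. Submodularity bounds each telescoped marginal by the corresponding single-element marginal: since $\emptyset\subseteq\{a_1,\ldots,a_{j-1}\}$ and $a_j\notin\{a_1,\ldots,a_{j-1}\}$, we get $\mathcal{T}(\{a_1,\ldots,a_j\})-\mathcal{T}(\{a_1,\ldots,a_{j-1}\})\leq\mathcal{T}(\{a_j\})-\mathcal{T}(\emptyset)$ (the $j=1$ case being an equality). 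Each term on the right is in turn at most $\max_{v}(\mathcal{T}(\{v\})-\mathcal{T}(\emptyset))=\mathcal{T}(\{s\})-\mathcal{T}(\emptyset)$. Summing the $m\leq k$ terms yields $\mathcal{T}(S^\ast)-\mathcal{T}(\emptyset)\leq k\,(\mathcal{T}(\{s\})-\mathcal{T}(\emptyset))$, which rearranges to exactly the claimed inequality. The same chain applies verbatim to both the full-adoption and the myopic feedback models, because the only property of $\varphi$ that is used is the submodularity of $\mathcal{T}$, which Proposition~\ref{prop:partial_submodular} supplies for any valid feedback under either model.

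I would emphasize that there is essentially no hard step remaining: all of the genuine work has already been absorbed into Proposition~\ref{prop:partial_submodular}, which establishes submodularity of the conditional influence function after fixing $S$ and its feedback. The only points requiring care are translating the definition of $\pi^g$ into ``single-element maximizer of $\mathcal{T}$'' and handling the $j=1$ boundary term of the telescoping sum, and neither is a real obstacle. This per-step marginal guarantee is precisely the ingredient referenced in the high-level sketch of Theorem~\ref{thm:lowerbound}; feeding it into the standard recursion that shows a greedy procedure attains a $(1-1/e)$-approximation for monotone submodular maximization will then complete the proof of that theorem.
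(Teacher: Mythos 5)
Your proof is correct and follows essentially the same route as the paper's: both reduce the claim to submodularity of $\mathcal{T}$ via Proposition~\ref{prop:partial_submodular}, use the greedy choice to dominate single-element marginals, and telescope over the elements of $S^\ast$ (the paper merely writes the same chain of inequalities in the opposite order, averaging the single-element gains first and then applying submodularity). The only cosmetic difference is your handling of $|S^\ast|=m\leq k$, which needs the (obvious) nonnegativity of $\mathcal{T}(\{s\})-\mathcal{T}(\emptyset)$; the paper sidesteps this by writing $S^\ast$ with exactly $k$ elements.
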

\begin{proof}
Let $S^\ast=\{s_1^\ast,\ldots,s_k^\ast\}$.
By the greedy nature of $\pi^g$, we have
$$\forall v:\E_{\phi\simeq\varphi}\left[\left|I_{G,F}^\phi(S\cup\{s\})\right|\right]\geq\E_{\phi\simeq\varphi}\left[\left|I_{G,F}^\phi(S\cup\{v\})\right|\right],$$
and this holds for $v$ being any of $s_1^\ast,\ldots,s_k^\ast$ in particular.

Let $S^\ast_i=\{s_1^\ast,\ldots,s_i^\ast\}$ for each $i=1,\ldots,k$ and $S^\ast_0=\emptyset$, the proposition concludes from the following calculations
\begin{align*}
    &\E_{\phi\simeq\varphi}\left[\left|I_{G,F}^\phi(S\cup\{s\})\right|\right]-\E_{\phi\simeq\varphi}\left[\left|I_{G,F}^\phi(S)\right|\right]\\
 \geq&\frac1k\sum_{i=1}^k\left(\E_{\phi\simeq\varphi}\left[\left|I_{G,F}^\phi(S\cup\{s_i^\ast\})\right|\right]-\E_{\phi\simeq\varphi}\left[\left|I_{G,F}^\phi(S)\right|\right]\right)\\
 \geq&\frac1k\sum_{i=1}^k\left(\E_{\phi\simeq\varphi}\left[\left|I_{G,F}^\phi(S\cup S^\ast_{i-1}\cup \{s_i^\ast\})\right|\right]
 -\E_{\phi\simeq\varphi}\left[\left|I_{G,F}^\phi(S\cup S^\ast_{i-1})\right|\right]\right)\tag{Proposition~\ref{prop:partial_submodular}}\\
 =&\frac1k\left(\E_{\phi\simeq\varphi}\left[\left|I_{G,F}^\phi(S\cup S^\ast)\right|\right]-\E_{\phi\simeq\varphi}\left[\left|I_{G,F}^\phi(S)\right|\right]\right),
\end{align*}
where the last equality is by a telescoping sum, by noticing that $S^\ast_i=S^\ast_{i-1}\cup\{s_i^\ast\}$ and $S^\ast=S^\ast_k$.
\end{proof}

Finally, we prove the following proposition which is a more general statement than Theorem~\ref{thm:lowerbound}.

\begin{proposition}\label{prop:generallowerbound}
For a triggering model $I_{G,F}$ and any $\ell\in\Z^+$, we have $\sigma^\f(\pi^g,\ell)\geq(1-(1-1/k)^\ell)\sigma(S^\ast)$, and the same holds for the myopic feedback model.
\end{proposition}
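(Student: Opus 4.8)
The plan is to reduce the statement to the textbook recurrence for greedy maximization of a monotone submodular function, but carried out ``in expectation over the random feedback'' using the law of total expectation. Write $g_i := \sigma^\f(\pi^g,i)$, and note $g_0=\sigma(\emptyset)=0$ since the empty seed set infects nobody. My goal is the one-step recurrence
\[
g_{i+1} \;\geq\; g_i + \tfrac1k\bigl(\sigma(S^\ast)-g_i\bigr) \qquad\text{for all } i\geq 0,
\]
from which solving the induction yields the claim; the myopic case is identical word-for-word, only replacing $\Phi_{G,F,\phi}^\f$ by $\Phi_{G,F,\phi}^\m$ throughout (and Proposition~\ref{prop:marginal} already covers both feedback models).

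First I would set up the bookkeeping by conditioning everything on the random \emph{state} reached after $i$ iterations of $\pi^g$ against $\phi\sim F$, namely the seed set $S_i=\Seed^\f(\pi^g,\phi,i)$ together with its feedback $\varphi_i=\Phi_{G,F,\phi}^\f(S_i)$. Once this state is fixed, the $(i+1)$-th greedy seed $s=\pi^g(S_i,\varphi_i)$ is determined, so both $S_i$ and $S_i\cup\{s\}$ are functions of the state. The law of total expectation (conditioning the final influence on this state) then gives $g_i = \E_{(S_i,\varphi_i)}\bigl[\E_{\phi\simeq\varphi_i}[|I_{G,F}^\phi(S_i)|]\bigr]$ and $g_{i+1} = \E_{(S_i,\varphi_i)}\bigl[\E_{\phi\simeq\varphi_i}[|I_{G,F}^\phi(S_i\cup\{s\})|]\bigr]$, so the increment $g_{i+1}-g_i$ equals the expected per-state marginal gain $\E_{(S_i,\varphi_i)}\bigl[\E_{\phi\simeq\varphi_i}[|I_{G,F}^\phi(S_i\cup\{s\})|]-\E_{\phi\simeq\varphi_i}[|I_{G,F}^\phi(S_i)|]\bigr]$.

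Next I would bound the integrand state-by-state. For each fixed $(S_i,\varphi_i)$, Proposition~\ref{prop:marginal} lower-bounds the bracketed marginal gain by $\tfrac1k\bigl(\E_{\phi\simeq\varphi_i}[|I_{G,F}^\phi(S_i\cup S^\ast)|]-\E_{\phi\simeq\varphi_i}[|I_{G,F}^\phi(S_i)|]\bigr)$, and monotonicity of reachability ($I_{G,F}^\phi(S_i\cup S^\ast)\supseteq I_{G,F}^\phi(S^\ast)$ for every $\phi$) lets me replace $S_i\cup S^\ast$ by $S^\ast$ at the cost of an inequality. Taking the outer expectation and applying the law of total expectation a second time, now to the \emph{fixed} benchmark set $S^\ast$, for which $\E_{(S_i,\varphi_i)}\bigl[\E_{\phi\simeq\varphi_i}[|I_{G,F}^\phi(S^\ast)|]\bigr]=\E_{\phi\sim F}[|I_{G,F}^\phi(S^\ast)|]=\sigma(S^\ast)$, yields exactly $g_{i+1}-g_i\geq\tfrac1k(\sigma(S^\ast)-g_i)$. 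Rewriting as $\sigma(S^\ast)-g_{i+1}\leq(1-\tfrac1k)(\sigma(S^\ast)-g_i)$ and iterating from $g_0=0$ gives $\sigma(S^\ast)-g_\ell\leq(1-\tfrac1k)^\ell\sigma(S^\ast)$, which rearranges to the desired $\sigma^\f(\pi^g,\ell)\geq\bigl(1-(1-\tfrac1k)^\ell\bigr)\sigma(S^\ast)$.

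The main obstacle is the two bookkeeping identities rather than any inequality: I expect the delicate points to be justifying that conditioning the final influence on the post-$i$-step state cleanly splits $g_i$ and $g_{i+1}$ over the same states (so the increment telescopes into the expected per-state marginal gain), and the companion fact that re-averaging the conditional influence of the \emph{fixed} set $S^\ast$ over the feedback cells reproduces the unconditional $\sigma(S^\ast)$. Both rest on the observation that the feedback together with the deterministic policy partitions the realization space, so conditioning and re-averaging are legitimate; once this is secured, Proposition~\ref{prop:marginal}, monotonicity, and the standard submodular-greedy recurrence finish the proof.
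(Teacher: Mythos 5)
Your proposal is correct and follows essentially the same route as the paper: the paper's proof is an induction on $\ell$ whose inductive step derives exactly your recurrence $\sigma^\f(\pi^g,\ell_0+1)-\sigma^\f(\pi^g,\ell_0)\geq\frac1k\bigl(\sigma(S^\ast)-\sigma^\f(\pi^g,\ell_0)\bigr)$ by summing over states $(S,\varphi)$ weighted by their probabilities $P(S,\varphi)$ (your law of total expectation), applying Proposition~\ref{prop:marginal} per state, using monotonicity to replace $S\cup S^\ast$ by $S^\ast$, and re-averaging via the law of total probability. The only difference is presentational (solving the recurrence by iteration versus a formal induction), so there is nothing substantive to compare.
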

\begin{proof}
We will only consider the full-adoption feedback model, as the proof for the myopic feedback model is identical.
We prove this by induction on $\ell$.
The base step for $\ell=1$ holds trivially by Proposition~\ref{prop:marginal} by considering $S=\emptyset$ in the proposition.

Suppose the inequality holds for $\ell=\ell_0$.
We investigate the expected marginal gain to the global influence function by selecting the $(\ell_0+1)$-th seed.
For a seed set $S\subseteq V$ with $|S|=\ell_0$ and a partial realization $\varphi$, let $P(S,\varphi)$ be the probability that the policy $\pi^g$ chooses $S$ as the first $\ell_0$ seeds and $\varphi$ is the feedback.
That is, 
$P(S,\varphi)=\Pr_{\phi\sim F}\left(\Seed^\f\left(\pi^g,\phi,\ell_0\right)=S\land\Phi_{G,F,\phi}^\f(S)=\varphi\right).$
The mentioned expected marginal gain is
\begin{align*}
    &\sigma^\f\left(\pi^g,\ell_0+1\right)-\sigma^\f\left(\pi^g,\ell_0\right)\\
 =&\sum_{S,\varphi:|S|=\ell_0}P(S,\varphi)\left(\E_{\phi\simeq\varphi}\left[\left|I_{G,F}^\phi(S\cup\{\pi^g(S,\varphi)\})\right|\right]
 -\E_{\phi\simeq\varphi}\left[\left|I_{G,F}^\phi(S)\right|\right]\right)\\
 \geq&\sum_{S,\varphi:|S|=\ell_0}P(S,\varphi)\cdot\frac1k\left(\E_{\phi\simeq\varphi}\left[\left|I_{G,F}^\phi(S\cup S^\ast)\right|\right]
 -\E_{\phi\simeq\varphi}\left[\left|I_{G,F}^\phi(S)\right|\right]\right)\tag{Proposition~\ref{prop:marginal}}\\
 \geq&\sum_{S,\varphi:|S|=\ell_0}P(S,\varphi)\cdot\frac1k\left(\E_{\phi\simeq\varphi}\left[\left|I_{G,F}^\phi(S^\ast)\right|\right]
 -\E_{\phi\simeq\varphi}\left[\left|I_{G,F}^\phi(S)\right|\right]\right)\\
 =&\frac1k\sigma(S^\ast)-\frac1k\sigma^\f(\pi^g,\ell_0),
\end{align*}
where the last equality follows from the law of total probability.

By rearranging the above inequality and the induction hypothesis,
\begin{align*}
    \sigma^\f\left(\pi^g,\ell_0+1\right)&\geq\frac1k\sigma(S^\ast)+\frac{k-1}k\sigma^\f\left(\pi^g,\ell_0\right)\\
    &\geq\left(\frac1k+\frac{k-1}k\left(1-\left(1-\frac1k\right)^{\ell_0}\right)\right)\sigma(S^\ast)\\
    &=\left(1-\left(1-\frac1k\right)^{\ell_0+1}\right)\sigma(S^\ast),
\end{align*}
which concludes the inductive step.
\end{proof}

By taking $\ell=k$ and noticing that $1-(1-1/k)^k>1-1/e$, 
it is easy to see that Proposition~\ref{prop:generallowerbound} implies Theorem~\ref{thm:lowerbound}.

Finally, putting Theorem~\ref{thm:lowerbound}, Lemma~\ref{lem:tightICM} and Lemma~\ref{lem:tightLTM} together, Theorem~\ref{thm:inf_gap} can be concluded easily.

\begin{proof}[Proof of Theorem~\ref{thm:inf_gap}]
Since \ICM and \LTM are special cases of triggering models,
we have
$$\inf_{G,F,k:\mbox{ }I_{G,F}\text{ is \ICM}}\frac{\sigma^\f(\pi^g,k)}{\sigma(S^g(k))}
\geq
\inf_{G,F,k}\frac{\sigma^\f(\pi^g,k)}{\sigma(S^g(k))}
$$
and
$$
\inf_{G,F,k:\mbox{ }I_{G,F}\text{ is \LTM}}\frac{\sigma^\f(\pi^g,k)}{\sigma(S^g(k))}
\geq
\inf_{G,F,k}\frac{\sigma^\f(\pi^g,k)}{\sigma(S^g(k))}.$$
Lemma~\ref{lem:tightICM} and Lemma~\ref{lem:tightLTM} show that both
$$\inf_{G,F,k:\mbox{ }I_{G,F}\text{ is \ICM}}\frac{\sigma^\f(\pi^g,k)}{\sigma(S^g(k))}
\qquad\mbox{and}\qquad
\inf_{G,F,k:\mbox{ }I_{G,F}\text{ is \LTM}}\frac{\sigma^\f(\pi^g,k)}{\sigma(S^g(k))}
$$
are at most $1-1/e$.
On the other hand, Theorem~\ref{thm:lowerbound} implies
$$\frac{\sigma^\f(\pi^g,k)}{\sigma(S^g(k))}\geq\frac{\sigma^\f(\pi^g,k)}{\sigma(S^\ast)}\geq1-\frac1e$$
for any triggering model $I_{G,F}$ and any $k$, where $S^\ast$, as usual, denotes the optimal seeds in the non-adaptive setting.

Putting these together, Theorem~\ref{thm:inf_gap} concludes for the full-adoption feedback model.
Since all those inequalities hold for the myopic feedback model as well, Theorem~\ref{thm:inf_gap} concludes for all feedback models.
\end{proof}

\section{Supremum of Greedy Adaptivity Gap}
\label{sect:sup}
In this section, we show that, for the full-adoption feedback model, both the adaptivity gap and the supremum of the greedy adaptivity gap are unbounded.
As a result, in some cases, the adaptive version of the greedy algorithm can perform significantly better than its non-adaptive counterpart.

\begin{theorem}\label{thm:sup_gap}
The greedy adaptivity gap with full-adoption feedback is unbounded: there exists a triggering model $I_{G,F}$ and $k$ such that
$$\frac{\sigma^\f(\pi^g,k)}{\sigma(S^g(k))}=2^{\Omega(\log\log |V|/\log\log\log |V|)}.$$
\end{theorem}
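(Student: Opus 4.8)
The plan is to build, for each parameter $L$, a triggering-model instance whose graph is a layered $b$-ary tree of depth $L$ in which the full-adoption feedback gradually \emph{reveals} a hidden, uniformly random root-to-leaf ``active path,'' so that the adaptive greedy policy can chase this path and harvest a reward at every layer, whereas the non-adaptive greedy algorithm, blind to the feedback, sees a perfectly symmetric instance and can only hedge across exponentially many indistinguishable candidates. The gap between the two will grow with $L$, and expressing $L$ in terms of the vertex count $|V|$ will give the stated $2^{\Omega(\log\log|V|/\log\log\log|V|)}$ rate.

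The core device is a \emph{reveal gadget} that exploits the one freedom the triggering model gives us over \ICM: correlated triggering sets at a single vertex. At each internal node I place a vertex $w$ whose in-neighbors are the entry vertices $e^{(1)},\dots,e^{(b)}$ of its $b$ children and whose distribution $\mathcal F_w$ puts all its mass on singletons, choosing $T_w=\{e^{(J)}\}$ for a uniformly random child $J$. Because the edges $(e^{(j)},w)$ all point into the \emph{same} vertex, their live/blocked statuses are correlated (exactly one is live), which is impossible to arrange with independent per-edge coin flips. Seeding the active entry $e^{(J)}$ infects $w$, and under full-adoption feedback the seed-picker then observes $\phi(e^{(j)},w)$ for every infected $e^{(j)}$, thereby learning $J$; this is exactly the information that lets the adaptive policy descend to the correct child. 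I will use the integer-weight reduction from the Remark to attach per-layer rewards, routing each reward through such a correlated gate so that (i) a layer's reward is collected only by seeding the active entry of that layer, and (ii) the cascade halts at each layer, so that no single seed unlocks a whole subtree.

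With this in place I would carry out two parallel greedy analyses. For the adaptive policy $\pi^g$: conditioned on the feedback received so far, the entry of the already-revealed active child has conditional marginal influence equal to the full layer reward, while every decoy entry is now known to be blocked and contributes essentially nothing, so $\pi^g$ seeds the active entry, which as a side effect reveals the next active child; by induction $\pi^g$ walks the entire active path and accumulates the sum of all layer rewards. For the non-adaptive algorithm: with no feedback the posterior equals the prior, every node at a given layer lies on the active path with probability $b^{-(\text{depth})}$ by symmetry, so the expected marginal gain of any deep seed is geometrically small, and greedy instead spreads its budget over shallow siblings, collecting only a vanishing fraction of what $\pi^g$ harvests. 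The ratio of these two quantities is the greedy adaptivity gap, and I would finish by choosing the branching factor and the layer rewards (growing geometrically down the tree, so the active path stays greedily attractive at every step) and then solving $|V|=b^{\Theta(L)}$ for $L$ to extract the claimed bound.

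The main obstacle is the simultaneous control of the \emph{same} myopic greedy rule under the two information regimes. The rule $\pi^g$ is information-blind: it never seeds a vertex \emph{in order} to reveal something, it only maximizes immediate expected marginal influence, so I must engineer the layer rewards so that following the active path is greedily optimal at \emph{every} step (even after earlier rewards have been banked and the posterior has shifted), while the very same weighting leaves the non-adaptive greedy algorithm strictly preferring shallow hedging to any deep commitment. Pinning down these inequalities at all $L$ layers at once---together with verifying that the reveal gadget leaks no future reward into the present cascade---is the delicate part; submodularity itself comes for free, since every triggering model is submodular by Theorem~\ref{thm:submodular}.
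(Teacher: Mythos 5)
Your proposal has the right raw ingredients (LTM-style correlated triggering sets in a tree, full-adoption feedback revealing edge statuses, an exponential adaptive-vs-non-adaptive probing advantage), but it is missing the idea that actually makes the paper's proof work, and without it your construction collapses. Theorem~\ref{thm:sup_gap} is about unrestricted \infmax: both greedy algorithms may seed \emph{any} vertex. In your instance, the greedy rule---adaptive or not---would never probe an entry vertex at all: seeding a reward gate $w$ (or the heavy reward vertex behind it) collects a full layer reward with probability $1$, whereas probing an entry is worth only about $R/b$ in expectation; so both algorithms simply seed the $k$ largest gates, behave identically, and the gap is $O(1)$. The paper's proof is organized around exactly this obstacle: it first proves the gap under \emph{prescribed seed candidates} (Lemma~\ref{lem:ltm_precribed}, where seeds are restricted to leaves), and then removes the restriction by taking $M\gg d^d$ parallel copies $T_1,\dots,T_M$ of the tree and adding a set $A$ of $L^M$ combination vertices, each of which simultaneously seeds one leaf in every copy, so that seeding in $A$ (expected value $\approx MW/d^d$) strictly dominates seeding any root or reward directly (value $\approx W$). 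That set $A$ is doubly exponential in $d$, and this blow-up is precisely why the theorem's bound has the peculiar form $2^{\Omega(\log\log|V|/\log\log\log|V|)}$. Your closing step ``solve $|V|=b^{\Theta(L)}$ for $L$'' would yield a gap polynomial in $|V|$, far stronger than the statement---a signal that the seed-placement obstruction, which forces the weaker rate, was never confronted.

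There are also two internal problems with the reveal-and-descend mechanism itself. First, under full-adoption feedback (Definition~\ref{def:fulladoption}) only the out-edges of \emph{infected} vertices are revealed, so seeding an entry $e^{(j)}$ tells you only whether $(e^{(j)},w)$ itself is live; it can never tell you which edge at the \emph{next} layer is live, since those edges leave uninfected vertices. A top-down chase therefore has no information to chase; the paper's information flow is strictly bottom-up (seed leaves, watch how far infection climbs, and henceforth avoid subtrees whose edge into the parent is revealed blocked). Second, your per-layer rewards are not actually path-dependent: each node's gadget draws its live in-edge independently, so \emph{any} node's reward can be harvested by probing its children's entries, whether or not that node lies on the ``active path.'' Making a reward collectible only along a root-to-leaf path would require an AND-gate (reward fires only if parent gadget \emph{and} local gadget fire), and no triggering model can express an AND---that is exactly the correlation submodularity rules out (Theorem~\ref{thm:submodular}). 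The paper sidesteps this by placing a single reward pool of $W$ vertices behind the root only, so that the entire tree is one lottery in which adaptive probing needs $\approx\left(\frac{d+1}{2}\right)^d$ expected leaf-seeds while any non-adaptive choice of $k$ leaves succeeds with probability only $k/d^d$.
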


\begin{theorem}\label{thm:adaptivityGap}
The adaptivity gap for the general triggering model with full-adoption feedback is infinity.
\end{theorem}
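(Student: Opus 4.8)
The plan is to reduce the statement to Theorem~\ref{thm:sup_gap}: the very instances that make the \emph{greedy} adaptivity gap diverge also make the adaptivity gap diverge, at the cost of only a constant factor. Concretely, I would take the triggering models $I_{G,F}$ together with their budgets $k$ furnished by Theorem~\ref{thm:sup_gap} (a family indexed by size, for which $\sigma^\f(\pi^g,k)/\sigma(S^g(k)) = 2^{\Omega(\log\log|V|/\log\log\log|V|)}\to\infty$), and show that for this same family the ratio $\max_{\pi\in\Pi}\sigma^\f(\pi,k)/\max_{|S|\le k}\sigma(S)$ is within a constant factor of the greedy adaptivity gap. Since the adaptivity gap (Definition~\ref{def:adaptivitygap}) is a supremum over all $G,F,k$, exhibiting one such divergent family is enough.

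Two inequalities drive the reduction. For the numerator, the adaptive optimum dominates any particular policy, and $\pi^g\in\Pi$, so $\max_{\pi\in\Pi}\sigma^\f(\pi,k)\ge\sigma^\f(\pi^g,k)$. For the denominator, I would invoke the classical guarantee for monotone submodular maximization under a cardinality constraint: $\sigma(\cdot)$ is submodular by Theorem~\ref{thm:submodular} and monotone, since $I_{G,F}^\phi(S)\subseteq I_{G,F}^\phi(S')$ for every realization $\phi$ whenever $S\subseteq S'$; hence the non-adaptive greedy set $S^g(k)$ satisfies $\sigma(S^g(k))\ge(1-1/e)\max_{|S|\le k}\sigma(S)$ \cite{Nemhauser78,KempeKT03}. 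Rearranging yields $\max_{|S|\le k}\sigma(S)\le\tfrac{e}{e-1}\sigma(S^g(k))$, so the best non-adaptive seed set cannot beat the non-adaptive greedy set by more than the factor $\tfrac{e}{e-1}$.

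Combining the two bounds gives
$$\frac{\max_{\pi\in\Pi}\sigma^\f(\pi,k)}{\max_{|S|\le k}\sigma(S)}\;\ge\;\frac{\sigma^\f(\pi^g,k)}{\tfrac{e}{e-1}\,\sigma(S^g(k))}\;=\;\frac{e-1}{e}\cdot\frac{\sigma^\f(\pi^g,k)}{\sigma(S^g(k))},$$
and by Theorem~\ref{thm:sup_gap} the right-hand side tends to infinity along the family as $|V|\to\infty$. Therefore the supremum defining the adaptivity gap is unbounded, which is the claim; this is exactly the ``almost the same proof'' remark made after Theorem~\ref{thm:sup_gap}.

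I do not expect a genuine obstacle, as this is a short reduction: the only quantitative input beyond Theorem~\ref{thm:sup_gap} is the $(1-1/e)$-approximation of the non-adaptive greedy to the non-adaptive optimum, which sandwiches $\max_{|S|\le k}\sigma(S)$ between $\sigma(S^g(k))$ and $\tfrac{e}{e-1}\sigma(S^g(k))$, and monotonicity of $\sigma$, which is immediate from the reachability description of $I_{G,F}^\phi$. Should one instead want a proof independent of Theorem~\ref{thm:sup_gap}, the route would be a direct ``probe-then-exploit'' instance: a cheap first seed whose full-adoption cascade reveals a hidden index encoded in a single vertex's triggering set, followed by an index-dependent second seed that reaches a large reward set, against which any blind non-adaptive choice of equal budget captures only a vanishing fraction in expectation. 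The main subtlety of that alternative is that the revelation and the reward-gating must be driven by the \emph{same} hidden variable even though triggering sets are sampled independently across vertices, which is precisely why the reduction above is the cleaner path.
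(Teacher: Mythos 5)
Your proof is correct, but it reaches Theorem~\ref{thm:adaptivityGap} by a genuinely different route than the paper does. The paper proves Theorem~\ref{thm:sup_gap} and Theorem~\ref{thm:adaptivityGap} simultaneously from the explicit construction in Sect.~\ref{sect:proofs}: the upper bound $\sigma(S)\leq M\bigl(kW/d^d+d^{d+1}\bigr)$ is established there for \emph{every} seed set of size $k$, so it bounds the non-adaptive \emph{optimum} directly (not just the greedy output), while the lower bound $\sigma^\f(\pi^g,k)\geq MW/2$ bounds the adaptive optimum from below; no approximation guarantee for the non-adaptive greedy is ever invoked. You instead treat Theorem~\ref{thm:sup_gap} as a black box and bridge the denominator with the classical Nemhauser--Wolsey--Fisher bound, which applies because $\sigma$ is submodular (Theorem~\ref{thm:submodular}) and monotone (immediate from the reachability description of $I_{G,F}^\phi$), yielding the pointwise inequality
$$\frac{\max_{\pi\in\Pi}\sigma^\f(\pi,k)}{\max_{|S|\le k}\sigma(S)}\;\ge\;\frac{e-1}{e}\cdot\frac{\sigma^\f(\pi^g,k)}{\sigma(S^g(k))}$$
on every instance, so any family with unbounded greedy adaptivity gap forces an unbounded adaptivity gap. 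This is precisely the contrapositive of the observation the paper defers to its conclusion (``a constant adaptivity gap implies a constant greedy adaptivity gap''), and there is no circularity, since the paper's proof of Theorem~\ref{thm:sup_gap} does not rely on Theorem~\ref{thm:adaptivityGap}. What your reduction buys is modularity: it is shorter, requires no re-examination of the construction, and automatically transfers any future strengthening of the greedy adaptivity gap to the adaptivity gap at the cost of a factor $e/(e-1)$. What the paper's direct analysis buys is that it avoids even that constant loss and extracts both theorems from a single computation on one instance family; since the target statement is ``infinity,'' the constant is immaterial here, and both arguments are sound.
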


In Section~\ref{sect:prescribed}, we consider a variant of \infmax such that the seeds can only be chosen among a prescribed vertex set $\overline{V}\subseteq V$, where $\overline{V}$ is specified as an input to the \infmax instance.
We show that, under this setting with \LTM, both the adaptivity gap and the supremum of the greedy adaptivity gap with the full-adoption feedback model are unbounded (Lemma~\ref{lem:ltm_precribed}).
Since it is common in practice that only a subset of nodes in a network is visible or accessible to the seed-picker, Lemma~\ref{lem:ltm_precribed} is also interesting on its own.
In Section~\ref{sect:proofs}, we show that how Lemma~\ref{lem:ltm_precribed} can be used to prove Theorem~\ref{thm:sup_gap} and Theorem~\ref{thm:adaptivityGap}.
Notice that Theorem~\ref{thm:sup_gap} and Theorem~\ref{thm:adaptivityGap} hold for the standard \infmax setting without a prescribed set of seed candidates, but we do not know if they hold for \LTM (instead, they are for the more general triggering model).

We first present the following lemma revealing a special additive property for \LTM, which will be used later.
\begin{lemma}\label{lem:additive}
Suppose $I_{G,F}$ is \LTM. If $U_1,U_2\subseteq V$ with $U_1\cap U_2=\emptyset$ satisfy that there is no path from any vertices in $U_1$ to any vertices in $U_2$ and vice versa, then $\sigma(U_1)+\sigma(U_2)=\sigma(U_1\cup U_2)$.
\end{lemma}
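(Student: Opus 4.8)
The plan is to prove the identity realization-by-realization and then take the expectation. Fix a realization $\phi$ and recall that $I_{G,F}^\phi(S)$ is exactly the set of vertices reachable from $S$ using only live edges. I will establish two facts about these reachable sets: first, $I_{G,F}^\phi(U_1\cup U_2)=I_{G,F}^\phi(U_1)\cup I_{G,F}^\phi(U_2)$, and second, $I_{G,F}^\phi(U_1)\cap I_{G,F}^\phi(U_2)=\emptyset$. Together these give $|I_{G,F}^\phi(U_1\cup U_2)|=|I_{G,F}^\phi(U_1)|+|I_{G,F}^\phi(U_2)|$ for every $\phi$, and taking the expectation over $\phi\sim F$ yields the claimed additivity $\sigma(U_1\cup U_2)=\sigma(U_1)+\sigma(U_2)$.

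The first fact is immediate and holds for any triggering model: reachability from a union of sources equals the union of the individual reachable sets, since any live path from $U_1\cup U_2$ to a vertex originates at a vertex of $U_1$ or of $U_2$. The heart of the argument, and the only place the \LTM structure enters, is the \emph{disjointness}. Here I exploit the defining feature of \LTM in Definition~\ref{def:LTM}: each triggering set $T_v$ contains at most one in-neighbor, so in any realization $\phi$ every vertex has at most one live incoming edge. Consequently, starting from any vertex $w$ and repeatedly following the unique live incoming edge backward produces a uniquely determined walk $w=p_0,p_1,p_2,\ldots$, where each $p_{i+1}$ is the tail of the live edge into $p_i$. I will show that $w$ is reachable from a vertex $a$ via live edges if and only if $a$ appears on this backward walk: the "if" direction is clear by reversing the walk, and the "only if" direction holds because any live walk ending at $w$ must, step by step from $w$, coincide with the forced backward walk, by uniqueness of the live in-edge at each vertex.

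Granting this characterization, suppose for contradiction that some $w\in I_{G,F}^\phi(U_1)\cap I_{G,F}^\phi(U_2)$. Then some $a\in U_1$ and some $b\in U_2$ both lie on the backward walk from $w$, say $a=p_i$ and $b=p_j$; since $U_1\cap U_2=\emptyset$ we have $a\neq b$ and hence $i\neq j$. Assuming $i<j$, the segment $p_j\to p_{j-1}\to\cdots\to p_i$ is a directed live path, hence a path in $G$ of length at least one, running from $b\in U_2$ to $a\in U_1$, contradicting the hypothesis that there is no path from $U_2$ to $U_1$; the case $i>j$ symmetrically contradicts the absence of a path from $U_1$ to $U_2$. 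This establishes disjointness in every realization and completes the proof.

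The main obstacle is making the backward-walk characterization of reachability precise, in particular arguing that the two witnessing live paths (from $U_1$ and from $U_2$ to $w$) are forced to share the same backward chain, so that $a$ and $b$ become comparable along a single directed path. Cycles in the live subgraph are harmless, since only the finite prefix of the backward walk up to $a$ and $b$ is ever needed. I expect everything else, including the reduction to a pointwise identity and the final expectation, to be routine; note also that the analogous statement fails for \ICM precisely because a vertex may there have several live incoming edges, so the uniqueness of the backward chain is exactly what makes \LTM special.
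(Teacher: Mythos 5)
Your proof is correct and follows essentially the same route as the paper's: both argue realization by realization, using the defining \LTM property that each vertex has at most one live in-edge to conclude $|I_{G,F}^\phi(U_1)|+|I_{G,F}^\phi(U_2)|=|I_{G,F}^\phi(U_1\cup U_2)|$ for every $\phi$, and then take expectations. Your backward-walk argument simply spells out in detail the disjointness claim that the paper asserts in a single sentence.
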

\begin{proof}
For any seed set $S\subseteq V$, $\sigma(S)$ can be written as follows:
\begin{equation}\label{eqn:sum_g}
\sigma(S)=\sum_{\phi}\Pr(\phi\mbox{ is sampled})\cdot\left|I_{G,F}^\phi(S)\right|.
\end{equation}

For $U_1$ and $U_2$ in the lemma statement, since each vertex can only have at most one incoming live edge (in Definition~\ref{def:LTM}, each $T_v$ has size at most $1$), under any realization $\phi$, each vertex $v\in V\setminus(U_1\cup U_2)$ that is reachable from vertices in $U_1\cup U_2$ is reachable from either vertices in $U_1$ or vertices in $U_2$, but not both.
Therefore, $|I_{G,F}^\phi(U_1)|+|I_{G,F}^\phi(U_2)|=|I_{G,F}^\phi(U_1\cup U_2)|$ for any $\phi$, and the lemma follows from considering the decomposition of $\sigma(U_1)$ and $\sigma(U_2)$ according to (\ref{eqn:sum_g}).
\end{proof}

\subsection{On \LTM with Prescribed Seed Candidates}
\label{sect:prescribed}
\begin{definition}\label{def:infmax_v}
  The \emph{influence maximization problem with prescribed seed candidates} is an optimization problem which takes as inputs $G=(V,E)$, $F$, $k\in\Z^+$, and $\overline{V}\subseteq V$, and outputs a seed set $S\subseteq\overline{V}$ that maximizes the expected total number of infections: $S\in\argmax_{S \subseteq\overline{V}: |S| \leq k} \sigma(S)$.
  The \emph{adaptive influence maximization problem with prescribed seed candidates} has the same definition as it is in Definition~\ref{def:infmax_adaptive}, with the exception that the range of the function $\pi$ is now $\overline{V}$, and $\Pi$ is the set of all such policies.
\end{definition}

\begin{lemma}\label{lem:ltm_precribed}
For \infmax with prescribed seed candidates with \LTM and the full-adoption feedback, the adaptivity gap is infinity, and the greedy adaptivity gap is $2^{\Omega(\log |V|/\log\log|V|)}$.
\end{lemma}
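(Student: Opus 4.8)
The plan is to build a family of \LTM instances with a prescribed seed set $\overline{V}$ in which the full-adoption feedback reveals ``where to seed next,'' and to handle the two claims with a common \emph{pointer/navigation gadget} and its recursive nesting. The base gadget is a probe vertex $p\in\overline{V}$ together with $m$ candidate ``doors,'' exactly one of which (the one picked as $R$'s unique live in-edge, using the \LTM\ single-live-in-edge law) connects to a large reward cluster $R$. The gadget is arranged so that seeding $p$ produces a short cascade whose realized live/blocked pattern is visible in the feedback $\varphi$ and \emph{encodes the index of the correct door}, while the reward $R$ itself lies across a blocked cut and is not collected by $p$'s cascade, so a second seed placed at the revealed door is needed to collect $R$. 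Crucially, $p$'s cascade and $R$ are otherwise isolated from the rest of the graph, so I can evaluate influence of scattered seed sets additively via Lemma~\ref{lem:additive}.

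For the \emph{adaptivity gap}, the optimal adaptive policy seeds $p$, reads the correct door from $\varphi$, and seeds it, collecting essentially all of $R$ with $O(1)$ seeds. Any non-adaptive strategy with the same budget must commit to its doors before the realization is seen and therefore collects $R$ only with probability $O(1/m)$; since $p$ is the only informative vertex and contributes negligibly, the optimal non-adaptive value is an $O(1/m)$ fraction of the adaptive value. Enlarging the instance (either scaling the reward weight or increasing the number of candidate directions $m$) drives this ratio to infinity, giving an unbounded adaptivity gap.

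For the \emph{greedy adaptivity gap}, I would nest the gadget to depth $d$ with branching $b=\Theta(d)$: the door revealed at one level is the probe of the next-level gadget, so the adaptive greedy policy $\pi^g$ descends all $d$ levels, collecting geometrically growing rewards and reaching total value $\Theta(c^d)=2^{\Theta(d)}$ (proved by induction over levels, showing that at each step the feedback-revealed continuation is exactly the conditional greedy maximizer). The non-adaptive greedy $S^g(k)$, getting no feedback, is misled at \emph{every} level by the same ``averaging makes greedy myopic'' phenomenon exploited in Lemma~\ref{lem:tightICM} and Lemma~\ref{lem:tightLTM}: I tune the weights so that the unconditional expected marginal gains order the candidates differently from the conditional ones, steering greedy off the productive path; its final value is then computed additively with Lemma~\ref{lem:additive} over its pairwise non-communicating cascades and is only $2^{-\Theta(d)}$ of $\sigma^\f(\pi^g,k)$. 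With branching $\Theta(d)$ nested to depth $d$ the instance has $|V|=d^{\Theta(d)}$ vertices, so $\log|V|/\log\log|V|=\Theta(d)$ and the value ratio $2^{\Theta(d)}$ is exactly $2^{\Omega(\log|V|/\log\log|V|)}$.

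The main obstacle is the simultaneous, recursive control of \emph{both} algorithms while respecting the \LTM\ in-weight budget $\sum_{u\in\Gamma(v)}w(u,v)\le 1$ and the prescribed-seed restriction. Realizing a ``reveal-one-of-$b$-directions but force a fresh seed'' pointer inside a submodular \LTM\ gadget, and verifying that full-adoption feedback genuinely exposes the next productive seed, is delicate; but the harder part is that the separation must come \emph{entirely from conditioning}, since both policies are greedy. I must ensure that the post-feedback conditional marginals and the unconditional marginals disagree at every one of the $d$ nested levels at once, keep the non-adaptive greedy's trajectory (including tie-breaking) pinned down as it descends, and confirm that the per-level vertex count $\Theta(d)$ and the per-level constant-factor value loss combine to give exactly $|V|=d^{\Theta(d)}$ and gap $2^{\Theta(d)}$. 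Checking this recursion and its size/gap arithmetic is where the real work lies.
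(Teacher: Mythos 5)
Your plan has a genuine gap at its foundation: the pointer gadget you describe cannot exist in \LTM (or in any triggering model) under the full-adoption feedback of Definition~\ref{def:fulladoption}. In the triggering model the sets $T_v$ are sampled independently across vertices, and the feedback after seeding the probe $p$ reveals $\phi(u,v)$ only for edges whose tail $u$ is already infected. Which door is ``correct'' is determined solely by $T_R$ (the identity of $R$'s unique live in-edge), and $T_R$ is independent of every edge status the probe's cascade can reveal unless that cascade actually infects in-neighbors of $R$, i.e., the doors themselves. But then learning equals collecting: if the cascade infects the correct door, the live edge fires and $R$ is infected, so no second seed is needed; if it infects only incorrect doors, you obtain merely negative (elimination) information that the live edge lies among the still-uninfected doors. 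Positive identification of the correct door without collecting $R$ therefore requires infecting all $m-1$ wrong doors, not one probe. Consequently your claimed adaptive policy --- one probe, read the exact pointer, collect $R$ with $O(1)$ seeds against a non-adaptive success probability of $O(1/m)$ --- is information-theoretically impossible, and your recursion for the greedy adaptivity gap inherits the same flaw, since each level's ``revealed door'' is supposed to serve as the next level's probe.

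The paper's construction embraces exactly the restriction you run into: in \LTM, adaptivity yields only elimination information, and the gain is amplified by nesting rather than by an exact pointer. It takes a $(d+1)$-level full $d$-ary in-arborescence with edge weights $1/d$, attaches $W$ reward vertices to the root, and prescribes $\overline{V}$ to be the set of leaves, with $k=2\left(\frac{d+1}{2}\right)^d$ seeds. Since exactly one in-edge of each internal node is live, an uninfected root always admits a fully uninfected leaf-to-root path, and the adaptive greedy policy keeps seeding such leaves; the expected number of seeds needed to infect the root is $\left(\frac{d+1}{2}\right)^d$ (roughly a factor-$2$ saving per level, compounded over $d$ levels), so $k$ seeds succeed with probability at least $1/2$ by Markov's inequality. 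For the denominator, no control of the non-adaptive greedy trajectory is needed at all --- the part you correctly identify as the hardest in your plan: by Lemma~\ref{lem:additive}, every set of $k$ leaves has the same expected influence, at most $kW/d^d+d^{d+1}$, which bounds the non-adaptive optimum and the non-adaptive greedy simultaneously. This is also why the correct separation has the form $2^{\Omega(d)}$ with $k$ exponential in $d$, rather than your $O(1)$-versus-$O(1/m)$ dichotomy; the final size/gap arithmetic ($|V|=d^{\Theta(d)}$, gap $2^{\Theta(d)}$, hence $2^{\Omega(\log|V|/\log\log|V|)}$) is essentially the only part of your proposal that survives.
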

\begin{proof}
For $d,W\in\Z^+$ with $d$ being sufficiently large and $W\gg d^{d+1}$, we construct the following (adaptive) \infmax instance with prescribed seed candidates:
\begin{itemize}
    \item the edge-weighted graph $G=(V,E,w)$ consists of a $(d+1)$-level directed full $d$-ary tree with the root node being the sink
     (i.e., an in-arborescence) and $W$ vertices each of which is connected \emph{from} the root node of the tree; the weight of each edge in the tree is $1/d$, and the weight of each edge connecting from the root to those $W$ vertices is $1$;
    \item the number of seeds is given by $k=2(\frac{d+1}2)^d$;
    \item the prescribed set for seed candidates $\overline{V}$ is the set of all the leaves in the tree.
\end{itemize}

Since the leaves are not reachable from one to another, Lemma~\ref{lem:additive} indicates that choosing any $k$ vertices among $\overline{V}$, i.e., the leaves, infects the same number of vertices in expectation.
It is easy to see that a single seed among the leaves will infect the root node with probability $1/d^d$, and those $W$ vertices will be infected with probability $1$ if the root of the tree is infected.
Thus, for any seed set $S\subseteq\overline{V}$, by assuming all vertices in the tree are infected (in the sake of finding an upper bound for $\sigma(S)$), we have
$\sigma(S)\leq\frac1{d^d}\cdot|S|\cdot W+\sum_{i=0}^dd^i<\frac{|S|W}{d^d}+d^{d+1}.$
This gives an upper bound for the performance of both the non-adaptive greedy algorithm and the non-adaptive optimal seed set.

Now, we analyze the seeds chosen by the greedy adaptive policy.
At a particular iteration when executing the greedy adaptive policy, we classify the internal tree nodes (i.e., the nodes that are neither leaves nor the root) into the following three types:
\begin{itemize}
    \item Unexplored: the subtree rooted at this internal node contains no seed.
    \item Explored: the subtree rooted at this internal node contains seeds, and no edge in the path connecting this internal node to the root is known to be blocked (i.e., all edges in the path have statuses either \live\xspace or \unknown).
    \item Dead: if an edge in the path connecting this internal node to the root is known to be blocked.
\end{itemize}

Here we give some intuitions for the behavior of the greedy adaptive policy.
Our objective is to infect the root, which will infect those $W$ vertices that constitute most vertices of the graph.
Before the root is infected, once an internal node is known to be ``dead'', the policy should never choose any seed from the leaves that are descendants of this node, as those seeds will never have a chance to infect the root (this explains our naming).
Moreover, as we will see soon, the greedy adaptive policy will keep ``exploring'' an explored node before starting to ``exploring'' an unexplored node, until this explored node becomes dead.

We will show that, \emph{if the root node is not infected yet, at any iteration of the greedy adaptive policy, each internal level of the tree can contain at most one explored node.}
This is a formal statement describing what we meaned just now by saying that we should keep exploring an explored node.

Firstly, since only one seed can be chosen at a single iteration, among all the nodes at a particular level of the tree, at most one of them can change the status from ``unexplored'' to ``explored''.
Suppose for the sake of contradiction that, at a particular iteration of the greedy adaptive policy, an internal node $v'$ which is previously unexplored become explored, while there is already another explored node $v$ at the same level of $v'$.
Suppose this is the first iteration we see two explored nodes at the same level.
Let $u$ be the least common ancestor of $v$ and $v'$.
Let $\ell_u$ be the level containing $u$.
It is easy to see that all the nodes on the path from $v$ to the root, which includes $u$, are explored (they cannot be unexplored, as the descendants of each of those nodes contains the descendants of $v$, which contain seeds; they cannot be dead, for otherwise $v$ is dead).
Since $v$ and $v'$ are the first pair of explored nodes at the same level, before the iteration where $v'$ is explored, all nodes on the path between $v'$ and $u$ are unexplored (excluding $u$).
Let $d_u$ be the number of $u$'s children that are not dead.
Given the feedback from previous iterations, since all the descendants of $v'$ and all the nodes on the path between $v'$ and $u$ (excluding $u$) are unexplored, the probability that a seed from a leaf that is a descendant of $v'$ infects $u$ is $\frac1{d^{\ell_u-1}\cdot d_u}$.
On the other hand, if at this same iteration we pick a seed from a leaf which is a descendant of $v$ and the path from this leaf to $v$ contains no blocked edge, the probability that this seed infects $u$ is at least $\frac1{d^{\ell_u-2}(d-1)d_u}$.
This is because there is at least one dead node that is a descendant of $v$ (we know that all the nodes on the path between $v$ and the root are explored and uninfected, and we know that seeds have been chosen among the leaves on the subtree rooted at $v$; the only reason that those seeds have not made the root infected is that there are dead nodes that ``block the cascade'', and we know there is no dead node on the path between $v$ and the root).
Since the only way that a seed corresponding to either $v$ or $v'$ can infect the root is to first infect $u$ and we have $\frac1{d^{\ell_u-2}(d-1)d_u}>\frac1{d^{\ell_u-1}\cdot d_u}$, the marginal influence of a seed corresponding to $v'$ is smaller than the marginal influence of a seed corresponding to $v$.
In other words, ``exploring'' $v'$ provide less marginal influence than ``exploring'' $v$, which leads to the desired contradiction.

Next, we evaluate the expected number of seeds required to infect the root, under the greedy adaptive policy.
Suppose the tree only has two levels (i.e., a star).
The number of seeds among the leaves required to infect the root is a random variable with uniform distribution on $\{1,\ldots,d\}$, with expectation $\frac{d+1}2$.
We will show that, by induction on the number of levels of the tree, with a $d$-level tree as it is in our case, the expected number of seeds required to infect the root is $(\frac{d+1}2)^d$, which equals to $\frac{k}2$.
Let $x_1,\ldots,x_d$ be the $d$ children of the root node.
By the claim we showed just now, at most one of $x_1,\ldots,x_d$ can be ``explored'' at any iteration.
The greedy adaptive policy will do the following: it first explores one of $x_1,\ldots,x_d$, say, $x_1$; it will continue exploring $x_1$ until $x_1$ is dead or until the root is infected.
The only situation that $x_1$ is dead is that $x_1$ is infected but the edge between $x_1$ and the root is blocked.
Therefore, the greedy adaptive policy will attempt to infect $x_1,x_2,x_3,\ldots$ one by one, until one of those children infects the root.
By the induction hypothesis, the expected number of seeds required to infect each of $x_1,\ldots,x_d$ is $(\frac{d+1}2)^{d-1}$.
Let $X$ be the random variable indicating the smallest $d'$ such that $x_{d'}$ is in the triggering set of the root (this means that the greedy adaptive policy will need to infect $x_1,\ldots,x_{d'}$ in order to infect the root).
Then the expect number of seeds required to infect the root is
$$\sum_{d'=1}^d\left(\Pr(X=d')\cdot d'\left(\frac{d+1}2\right)^{d-1}\right)=\left(\frac{d+1}2\right)^{d},$$
where $d'(\frac{d+1}2)^{d-1}$ is the expected number of seeds required to infected all of $x_1,\ldots,x_{d'}$ by the linearity of expectation.

After proving that the expected number of seeds required to infect the root is $(\frac{d+1}2)^{d}=\frac k2$, by Markov's inequality, the $k$ seeds chosen according to the greedy adaptive policy will infect the root with probability at least $1/2$.
Therefore, $\sigma^\f(\pi^g,k)\geq\frac12W$, and the optimal adaptive policy can only be better: $\max_{\pi\in\Pi}\sigma^\f(\pi,k)\geq\sigma^\f(\pi^g,k)\geq\frac12W$.

Putting these together, both the adaptivity gap and the supremum of the greedy adaptivity gap is at least
$$\frac{\frac12W}{\frac{kW}{d^d}+d^{d+1}}=\frac{\frac12W}{\frac1{2^{d-1}}(1+\frac1d)^dW+d^{d+1}\cdot}=\Omega\left(2^{d}\right),$$
if setting $W=d^{d+10}\gg d^{d+1}$.
The lemma concludes by noticing $d=\Omega(\frac{\log|V|}{\log\log|V|})$ (in particular, $|V|=W+o(W)=d^{d+10}+o(d^{d+10})$, so $\log|V|=\Theta(d\log d)+o(d\log d)$, $\log\log|V|=\Theta(\log d) +o(\log d)$, and $d=\Omega(\frac{\log|V|}{\log\log|V|})$).
\end{proof}

\subsection{Proof of Theorem~\ref{thm:sup_gap}, \ref{thm:adaptivityGap}}
\label{sect:proofs}
To prove Theorem~\ref{thm:sup_gap} and Theorem~\ref{thm:adaptivityGap}, we construct an \infmax instance with a special triggering model $I_{G,F}$ which is a combination of \ICM and \LTM.

\begin{definition}\label{def:LTMc}
The \emph{mixture of \ICM and \LTM} is a triggering model $I_{G,F}$ where $G=(V,E,w)$ is an edge-weighted graph with $w(u,v)\in(0,1]$ for each $(u,v)\in E$ and each vertex $v$ is labelled either \textbf{IC} or \textbf{LT} such that $T_v$ is sampled according to $\mathcal{F}_v$ described in Definition~\ref{def:ICM} if $v$ is labelled \textbf{IC} and $T_v$ is sampled according to $\mathcal{F}_v$ described in Definition~\ref{def:LTM} if $v$ is labelled \textbf{LT}.
In addition, each vertex $v$ labelled \texttt{L} satisfies $\sum_{u\in\Gamma(v)}w(u,v)\leq1$.
\end{definition}

To conclude Theorem~\ref{thm:sup_gap} and Theorem~\ref{thm:adaptivityGap}, we construct an edge-weighted graph $G=(V,E,w)$ on which the greedy adaptive policy significantly outperforms the non-adaptive greedy algorithm.
Let $M\gg d^{d+1}$ be a large integer.
We reuse the graph with a tree and $W$ vertices in the proof of Lemma~\ref{lem:ltm_precribed}.
We create $M$ such graphs and name them $T_1,\ldots,T_M$.
Let $L=d^d$ be the number of leaves in each $T_i$.
Let $\Z_L=\{1,\ldots,L\}$ and $\Z_L^M$ be the set of all $M$-dimensional vectors whose entries are from $\Z_L$.
For each $\z=(z_1,\ldots,z_M)\in\Z_L^M$, create a vertex $a_{\z}$ and create a directed edge from $a_{\z}$ to the $z_i$-th leaf of the tree $T_i$ for each $i=1,\ldots,M$.
The weight of each such edge is $1$.
Let $A=\{a_{\z}\mid\z\in\Z_L^M\}$.
Notice that $|A|=L^M$ and each $a_{\z}\in A$ is connected to $M$ vertices from $T_1,\ldots,T_M$ respectively.
The leaves of $T_1,\ldots,T_M$ are labelled as \textbf{IC}, and the remaining vertices are labelled as \textbf{LT}.
Finally, set $k=2(\frac{d+1}2)^d$ as before.

Due to that $M$ is large, it is more beneficial to seed a vertex in $A$ than a vertex elsewhere.
In particular, seeding a root in certain $T_i$ infects $W$ vertices, while seeding a vertex in $A$ will infects $M\cdot(\frac1d)^dW\gg W$ vertices in expectation.

It is easy to see that, in the non-adaptive setting, the optimal seeding strategy is to choose $k$ seeds from $A$ such that they do not share any out-neighbors, in which case the $k$ chosen seeds will cause the infection of exactly $k$ leaves in each $T_i$.
This is also what the non-adaptive greedy algorithm will do.
As before, to find an upper bound for any seed set $S$ with $|S|=k$, we assume that all vertices in each $T_i$ are infected, and we have
$\sigma(S)\leq M\left(k\cdot\frac1{d^d}W+\sum_{i=0}^dd^i\right).$

By the same analysis in the proof of Lemma~\ref{lem:ltm_precribed}, by choosing $k$ seeds among $A$ as described above, which is equivalent as choosing $k$ leaves in each of $T_1,\ldots,T_M$ simultaneously, the root in each $T_i$ is infected with probability at least $\frac12$.
Therefore, the expected total number of infected vertices is at least
$M\cdot\frac12W$.

It may seem problematic that the greedy adaptive policy may start to seed the roots among $T_1,\ldots,T_M$ when it sees that there are already a lot of infected roots (so seeding a root is better than seed a vertex in $A$). 
However, since $M\gg d^{d+1}$, by simple calculations, this can only happen when there are already $(1-o(1))M$ trees with infected roots, in which case the number of infected vertices is already much more than $M\cdot\frac12 W$.

Putting these together as before, both the adaptivity gap and the supremum of the greedy adaptivity gap is at least
$$\frac{M\cdot\frac12W}{M(\frac{kW}{d^d}+d^{d+1})}=\frac{\frac12W}{\frac1{2^{d-1}}(1+\frac1d)^dW+d^{d+1}\cdot}=\Omega\left(2^d\right),$$
if fixing $W=d^{d+10}\gg d^{d+1}$.
Theorem~\ref{thm:adaptivityGap} concludes by letting $d\rightarrow\infty$.
To conclude Theorem~\ref{thm:sup_gap}, we need to show that $d=\Omega(\log\log|V|/\log\log\log|V|)$.
To see this, we set $M=d^{d+10}$ which is sufficiently large for our purpose.
Since we have $L=d^d$, we have $|V|=L^M+o(L^M)=d^{d^{d+11}}+o(d^{d^{d+11}})$, which implies $d=\Omega(\log\log|V|/\log\log\log|V|)$.

\section{Greedy Algorithms in Practice and Robustness of Our Results}
\label{sect:robustness}
Recall that, in the greedy algorithm, we find a vertex $s$ that maximizes the marginal gain of the influence $\sigma(S\cup\{s\})-\sigma(S)$ in each iteration.
However, computing the function $\sigma(\cdot)$ is known to be \#P-hard for both \ICM~\shortcite{chen2010scalable2} and \LTM~\shortcite{chen2010scalable}.
In practice, the greedy algorithm is implemented with $\sigma(\cdot)$ estimated by Monte Carlo simulations (where $\sigma(\cdot)$ is approximated by sampling a sufficient number of realizations $\phi$, computing $|I_{G,F}^\phi(S)|$ for each realization $\phi$, and taking an average over them), reverse reachable sets coverage (see Section~\ref{sect:RR_set} for details), or other \emph{randomized approximation algorithms}.
As a result, in reality, when implementing the greedy algorithm, a vertex $s$ that \emph{approximately} maximizes the marginal gain of the influence is found in each iteration \emph{with high probability}.
In this section, we discuss the applicability of all our results in previous sections under this approximation setting.
In Section~\ref{sect:epsilon_delta_greedy}, 
We first define the \emph{$(\varepsilon,\delta)$-greedy algorithm} where in each iteration a vertex $s$ that approximately maximizes the marginal gain $\sigma(S\cup\{s\})-\sigma(S)$ within factor $(1-\varepsilon)$ is found with probability at least $(1-\delta)$, which captures the practical implementations of greedy algorithms.
In Section~\ref{sect:epsilon_delta_greedy_gap}, we discuss the robustness of our results by studying under what $\varepsilon$ and $\delta$ our results hold.

\subsection{$(\varepsilon,\delta)$-Greedy Algorithms}
\label{sect:epsilon_delta_greedy}
\begin{definition}\label{epsilon_delta_greedy}
  An \emph{$(\varepsilon,\delta)$-greedy algorithm} is a randomized iterative algorithm that satisfies the following:
  \begin{enumerate}
      \item the algorithm initializes $S=\emptyset$;
      \item for each of the $k$ iterations, with probability at least $1-\delta$, the algorithm finds $s\in V$ such that
      $$\sigma(S\cup\{s\})-\sigma(S)\geq(1-\varepsilon)\max_{s'\in V}\left(\sigma(S\cup\{s'\})-\sigma(S)\right),$$
      and update $S\leftarrow S\cup\{s\}$;
      \item the algorithm outputs $S$.
  \end{enumerate}
\end{definition}

Since an $(\varepsilon,\delta)$-greedy algorithm is an approximation version of the ``exact'' greedy algorithm, it achieves an approximation ratio that is close to $(1-1/e)$.
The proof is standard, and we include it here for completeness.
\begin{theorem}
For any $\varepsilon\leq\frac1k$, an $(\varepsilon,\frac\delta{k})$-greedy algorithm gives a $(1-1/e-\varepsilon)$-approximation for submodular \infmax with probability $(1-\delta)$.
\end{theorem}
\begin{proof}
Let $S^\ast=\{s_1^\ast,\ldots,s_k^\ast\}$ be an optimal solution which maximizes $\sigma(\cdot)$, and let $S=\{s_1,\ldots,s_k\}$ be the seed set output by any $(\varepsilon,\delta/k)$-greedy algorithm with $\varepsilon\leq1/k$.
Let $S^\ast_i=\{s_1^\ast,\ldots,s_i^\ast\}$ and $S_i=\{s_1,\ldots,s_i\}$.
In particular, let $S^\ast_0=S_0=\emptyset$.
Similar to Proposition~\ref{prop:marginal}, we will show that, for each $i=0,1,\ldots,k-1$,
\begin{equation}\label{eqn:each_step_epsilon_delta_greedy}
    \sigma(S_{i+1})-\sigma(S_i)\geq\frac{1-\varepsilon}k(\sigma(S_i\cup S^\ast)-\sigma(S_i))\qquad\mbox{with probability at least }1-\frac\delta{k}.
\end{equation}
This is because
\begin{align*}
    \sigma(S_{i+1})-\sigma(S_i)&\geq(1-\varepsilon)\left(\max_{s\in V}\left(\sigma(S_{i}\cup\{s\})-\sigma(S_{i})\right)\right)\tag{by definition of $(\varepsilon,\delta)$-greedy}\\
    &\geq(1-\varepsilon)\frac1k\sum_{j=1}^k\left(\sigma(S_{i}\cup\{s_j^\ast\})-\sigma(S_{i})\right)\tag{since $s$ is the maximizer}\\
    &\geq\frac{1-\varepsilon}k\sum_{j=1}^k\left(\sigma(S_{i}\cup S_j^\ast )-\sigma(S_{i}\cup S_{j-1}^\ast)\right)\tag{submodularity of $\sigma(\cdot)$}\\
    &=\frac{1-\varepsilon}k\left(\sigma(S_{i}\cup S^\ast)-\sigma(S_{i})\right).\tag{telescoping sum}
\end{align*}
Next, similar to Proposition~\ref{prop:generallowerbound}, we can prove by induction that for each $i=1,\ldots,k$
\begin{equation}\label{eqn:induction_epsilon_delta_greedy}
    \sigma(S_i)\geq\left(1-\left(1-\frac1k\right)^i-\varepsilon\right)\sigma(S^\ast)\qquad\mbox{with probability at least }1-\frac{i\delta}k.
\end{equation}
The base step for $i=1$ follows from Equation~(\ref{eqn:each_step_epsilon_delta_greedy}):
$$\sigma(S_1)=\sigma(S_1)-\sigma(S_0)\geq\frac{1-\varepsilon}k(\sigma(S_0\cup S^\ast)-\sigma(S_0))>\left(\frac1k-\varepsilon\right)\sigma(S^\ast).$$
For the inductive step, by Equation~(\ref{eqn:each_step_epsilon_delta_greedy}) again, we have, with probability at least $(1-\delta/k)$,
$$\sigma(S_{i+1})-\sigma(S_i)\geq\frac{1-\varepsilon}k(\sigma(S_i\cup S^\ast)-\sigma(S_i))\geq\frac{1-\varepsilon}k(\sigma(S^\ast)-\sigma(S_i)),$$
which implies
$$\sigma(S_{i+1})\geq\frac{1-\varepsilon}k\sigma(S^\ast)+\frac{k-1+\varepsilon}k\sigma(S_i).$$
By the induction hypothesis, with probability at least $(1-\frac{i\delta}k)$, we have $$\sigma(S_i)\geq\left(1-\left(1-\frac1k\right)^i-\varepsilon\right)\sigma(S^\ast).$$
Putting these together by a union bound, with probability at least $1-\frac{(i+1)\delta}k$, we have
\begin{align*}
    \sigma(S_{i+1})&\geq\frac{1-\varepsilon}k\sigma(S^\ast)+\frac{k-1+\varepsilon}k\left(1-\left(1-\frac1k\right)^i-\varepsilon\right)\sigma(S^\ast)\\
    &=\left(1-\left(1-\frac1k\right)^{i+1}-\varepsilon+\frac\varepsilon{k}\left(1-\varepsilon-\left(1-\frac1k\right)^i\right)\right)\sigma(S^\ast)\tag{elementary calculations}\\
    &\geq\left(1-\left(1-\frac1k\right)^{i+1}-\varepsilon\right)\sigma(S^\ast),\tag{since $\varepsilon\leq\frac1k$ and $i\geq1$}
\end{align*}
which concludes the inductive step.

The theorem concludes by taking $i=k$ in Equation~(\ref{eqn:induction_epsilon_delta_greedy}) and noticing that $1-(1-1/k)^k>1-1/e$.
\end{proof}

We can define the \emph{$(\varepsilon,\delta)$-greedy adaptive policy} similarly.
\begin{definition}\label{def:epsilon_delta_greedy_adaptive}
  An adaptive policy $\pi$ is a \emph{$(\varepsilon,\delta)$-greedy adaptive policy} if, for any $S\subseteq V$ and any partial realization $\varphi$, with probability at least $1-\delta$, we have $\pi(S,\varphi)=v$ for $v$ such that
  $$\E_{\phi\simeq\varphi}\left[\left|I_{G,F}^\phi\left(S\cup\{v\}\right)\right|-\left|I_{G,F}^\phi\left(S\right)\right|\right]\geq(1-\varepsilon)\max_{s\in V}\E_{\phi\simeq\varphi}\left[\left|I_{G,F}^\phi\left(S\cup\{s\}\right)\right|-\left|I_{G,F}^\phi\left(S\right)\right|\right].$$ 
\end{definition}

\subsection{Greedy Adaptivity Gap for $(\varepsilon,\delta)$-Greedy Algorithms}
\label{sect:epsilon_delta_greedy_gap}
We re-exam all the theorems and lemmas in Section~\ref{sect:inf} and Section~\ref{sect:sup}.
Throughout this section, we use $\alg^g_{\varepsilon,\delta}$ to denote the set of all $(\varepsilon,\delta)$-greedy algorithms and
$\Pi^g_{\varepsilon,\delta}$ to denote the set of all $(\varepsilon,\delta)$-greedy adaptive policies.
Since the greedy adaptive policy we are studying now is randomized, for any $(\varepsilon,\delta)$-greedy adaptive policy $\pi^g\in\Pi^g_{\varepsilon,\delta}$, the values $\sigma^\f(\pi^g,k)$ and $\sigma^\m(\pi^g,k)$ are the expected numbers of infected vertices under the full-adoption feedback setting and the myopic feedback setting respectively, where the expectation is taken over \emph{both} the sampling of a realization \emph{and the randomness when implementing $\pi^g$}.
Correspondingly, for a randomized non-adaptive $(\varepsilon,\delta)$-greedy algorithm $A^g\in\alg^g_{\varepsilon,\delta}$, we will slightly abuse the notation and use $\sigma(A^g,k)$ to denote the expected number of infected vertices when $k$ seeds are chosen based on algorithm $A^g$, where the expectation is again taken over both the sampling of a realization and the randomness when implementing $A^g$.

The argument behind all the proofs in this section is the same, which we summarize as follows.
To show that the greedy adaptivity gap remains the same in the $(\varepsilon,\delta)$-greedy setting, we find an $\varepsilon$ that is small enough such that, in the \infmax instance we constructed, requiring the marginal influence being at least a $(1-\varepsilon)$ fraction of the maximum marginal influence is the same as requiring the maximum marginal influence.
This is done by setting $\varepsilon$ to be small enough such that the only seed that produces the marginal influence within $(1-\varepsilon)$ of the maximum marginal influence is the seed that produce the maximum marginal influence.
By definition, the $(\varepsilon,\delta)$-greedy algorithm/policy will behave exactly the same as their exact deterministic counterpart with probability at least $1-\delta$.
By setting $\delta=o(1/k)$ and taking a union bound over all the $k$ iterations, with probability at least $1-o(1)$, the $(\varepsilon,\delta)$-greedy algorithm/policy will behave the same way as the exact deterministic greedy algorithm/policy.

\subsubsection{Infimum of Greedy Adaptivity Gap for $(\varepsilon,\delta)$-Greedy Algorithms}
In this section, we show that, for $\varepsilon=o(1/k)$ and $\delta=o(1/k)$, the infimum of the greedy adaptivity gap for $(\varepsilon,\delta)$-greedy algorithms is between $1-1/e-\varepsilon$ and $1-1/e$.
We will formally state what exactly we mean by this, and we will prove this by showing that Lemma~\ref{lem:tightICM}, Lemma~\ref{lem:tightLTM} and Theorem~\ref{thm:lowerbound} can be adapted in the $(\varepsilon,\delta)$-greedy setting.

The following lemma extends Lemma~\ref{lem:tightICM} to the $(\varepsilon,\delta)$-greedy setting.
\begin{lemma}\label{lem:tightICM_epsilon_delta}
Given any two functions $\varepsilon:\Z^+\to\R^+$ and $\delta:\Z^+\to\R^+$ satisfying $\varepsilon(k)=o(1/k)$ and $\delta(k)=o(1/k)$, for any $\tau>0$, there exists $G,F,k$ such that $I_{G,F}$ is an \ICM and, for any adaptive policy $\pi^g\in\Pi^g_{\varepsilon(k),\delta(k)}$ and any non-adaptive algorithm $A^g\in\alg^g_{\varepsilon(k),\delta(k)}$, we have
$$\frac{\sigma^\f(\pi^g,k)}{\sigma(A^g,k)}\leq1-\frac1e+\tau\qquad\mbox{and}\qquad\frac{\sigma^\m(\pi^g,k)}{\sigma(A^g,k)}\leq1-\frac1e+\tau.$$
\end{lemma}
\begin{proof}
We construct the same \infmax instance $(G=(V,E,w),k+1)$ as it is in the proof of Lemma~\ref{lem:tightICM}, with only one change: set $\Upsilon=\varepsilon(k+1)\cdot W$ instead of the previous setting $\Upsilon=W/k^2$.\footnote{If $\varepsilon(k+1)\cdot W$ is not an integer, we can always make $W$ large enough and find a positive rational number $\varepsilon'<\varepsilon(k+1)$ such that $\varepsilon'W\in\Z^+$. The remaining part of the proof will not be invalidated by this change.}
Notice that $\varepsilon(k+1)$ means the value of the function $\varepsilon(\cdot)$ with input $k+1$, not $\varepsilon$ times $(k+1)$.
To avoid possible confusion, we write $\varepsilon:=\varepsilon(k+1)$ and $\delta:=\delta(k+1)$ for this proof.
The remaining part of the proof is an adaption of the proof of Lemma~\ref{lem:tightICM} to the $(\varepsilon,\delta)$ setting.

We first show that, for any $A^g\in\alg^g_{\varepsilon,\delta}$, with probability at least $1-(k+1)\cdot\delta=1-o(1)$, $A^g$ will output $\{s,u_1,\ldots,u_k\}$.

From Equation~(\ref{eqn:sigma_s}), (\ref{eqn:sigma_ui}), (\ref{eqn:sigma_v1}) and (\ref{eqn:sigma_vi}), by the definition of $(\varepsilon,\delta)$-greedy, with probability at least $1-\delta$, the first seed chosen must have expected influence at least $(1-\varepsilon)\sigma(\{s\})\geq(1-o(1/k))\cdot2W$.
Since any other vertex does not have an influence which is even close to $2W$, the first seed chosen by $A^g$ is $s$ with probability at least $1-\delta$.

Next, we show that, if $A^g$ has chosen $s$ and $i$ vertices from $\{u_1,\ldots,u_k\}$ after $i+1$ iterations, $A^g$ will choose the next seed from $\{u_1,\ldots,u_k\}$ with probability $1-\delta$.
Let $U_i=\{s,u_1,\ldots,u_i\}$ and $U_0=\{s\}$ as before.
Without loss of generality, we only need to show that, supposing $A^g$ has chosen $U_i$ as the first $(i+1)$ seeds, with probability at least $1-\delta$, $A^g$ will choose a vertex from $\{u_{i+1},\ldots,u_k\}$ as the next seed.
By our calculation in Equation~(\ref{eqn:nonadaptivegreedy_marginal_ui+1}), (\ref{eqn:nonadaptivegreedy_marginal_v1}) and (\ref{eqn:nonadaptivegreedy_marginal_v2}), with probability at least $1-\delta$, $A^g$ will choose a seed $x$ such that $$\sigma(U_i\cup\{x\})-\sigma(U_i)\geq(1-\varepsilon)(\sigma(U_i\cup\{u_{i+1}\})-\sigma(U_i))$$
$$\qquad> W+(4k-2)\Upsilon-1.1\varepsilon W=W+(4k-3.1)\Upsilon,$$
where the last inequality is due to $\varepsilon(W+(4k-2)\Upsilon)=\varepsilon(W+o(W))<1.1\varepsilon W=1.1\Upsilon$.
On the other hand, from Equation~(\ref{eqn:nonadaptivegreedy_marginal_v1}) and (\ref{eqn:nonadaptivegreedy_marginal_v2}), choosing $v_1$ or $v_2$ as the next seed does not provide enough marginal gain to $\sigma(\cdot)$:
\begin{align*}
    \sigma(U_i\cup\{v_1\})-\sigma(U_i)&\leq1+W+(4k-4)\Upsilon<(1-\varepsilon)(\sigma(U_i\cup\{u_{i+1}\})-\sigma(U_i)),\\
    \sigma(U_i\cup\{v_2\})-\sigma(U_i)&\leq1+W-\frac{W}k+(4k+5)\Upsilon\\
    &=W+4k\Upsilon-\omega(\Upsilon)\tag{since $\Upsilon=\varepsilon W=o\left(\frac{W}k\right)$}\\
    &<(1-\varepsilon)(\sigma(U_i\cup\{u_{i+1}\})-\sigma(U_i)),
\end{align*}
and the marginal influence of the remaining vertices other than $v_1$ and $v_2$ are even smaller.
Therefore, we conclude that, with probability at least $1-\delta$, $A^g$ will choose a vertex from $\{u_{i+1},\ldots,u_k\}$ as the next seed.

Putting these together, by a union bound, with probability $1-(k+1)\delta=1-o(1)$, $A^g$ will choose $\{s,u_1,\ldots,u_k\}$, which will infected $(k+2)W+o(W)$ vertices in expectation, as calculated in the proof of Lemma~\ref{lem:tightICM}.
Therefore,
$$\sigma(A^g,k+1)\geq(1-(k+1)\delta)((k+2)W+o(W))+(k+1)\delta\cdot0=kW-o(kW).$$

Second, we prove, for any greedy adaptive policy $\pi^g\in\Pi^g_{\varepsilon,\delta}$, $\pi^g$ will choose $\{s,v_1,\ldots,v_k\}$ with probability at least $1-1/k-(k+1)\delta=1-o(1)$.
By the same analysis in the non-adaptive case, with probability $(1-\delta)$, the first seed chosen by $\pi^g$ is $s$.
We assume that $s$ fails to infect $t$ which happens with probability $1-1/k$, and this is given as the feedback to $\pi^g$.
The marginal influence of $v_1$ is then $\w(t)+\w(v_1)+\sum_{i=1}^k\w(w_{1i})=4k\Upsilon+1+W$.
With probability at least $1-\delta$, $\pi^g$ will choose a second seed with marginal influence at least $(1-\varepsilon)(4k\Upsilon+1+W)>W+4k\Upsilon-\varepsilon\cdot 1.1W=W+(4k-1.1)\Upsilon$.
It is easy to see that $v_1$ is the only vertex that can provide enough marginal influence.
In particular, the marginal influence of each of $u_1,\ldots,u_k$ is $1+W+(4k-2)\Upsilon$, which is less than $W+(4k-1.1)\Upsilon$, the marginal influence of $v_2$ is $1+(1-\frac1k)W+k\cdot\frac{4k-2}{k-1}\Upsilon=W+4k\Upsilon-\omega(\Upsilon)$ (notice that $k\cdot\frac{4k-2}{k-1}\Upsilon=4k\Upsilon+\Theta(\Upsilon)$ and $W/k=\omega(\Upsilon)$), which is less than $W+(4k-1.1)\Upsilon$, and the marginal influence of $v_3,\ldots,v_k$ are even smaller than that of $v_2$.

We have shown that the first two seeds are $s$ and $v_1$ with probability at least $1-1/k-2\delta$.
Next, we show that, for each $i=1,\ldots,k-1$, if $\pi^g$ has chosen $s,v_1,\ldots,v_i$, with probability $(1-\delta)$, $\pi^g$ will choose $v_{i+1}$ as the next seed.
Suppose $\pi^g$ has chosen $s,v_1,\ldots,v_i$.
From the proof of Lemma~\ref{lem:tightICM}, we have seen that $v_{i+1}$ has the highest marginal influence, which is $1+(1-1/k)^iW+k\cdot\frac{4k-2}{k-1}\Upsilon$.
With probability at least $1-\delta$, $\pi^g$ will choose a seed with marginal influence at least $(1-\varepsilon)$ fraction of this value:
\begin{align*}
    &(1-\varepsilon)\left(1+\left(1-\frac1k\right)^iW+k\frac{4k-2}{k-1}\Upsilon\right)\\
    >&\left(1-\frac1k\right)^iW+k\frac{4k-2}{k-1}\Upsilon-\varepsilon\left(\left(1-\frac1k\right)^iW+k\frac{4k-2}{k-1}\Upsilon\right)\\
    \geq&\left(1-\frac1k\right)^iW+k\frac{4k-2}{k-1}\Upsilon-\varepsilon\left(\left(1-\frac1k\right)^1W+k\frac{4k-2}{k-1}\Upsilon\right)\\
    \geq&\left(1-\frac1k\right)^iW+k\frac{4k-2}{k-1}\Upsilon-1.1\Upsilon\tag{since $\varepsilon(1-\frac1k)^1W<\varepsilon W=\Upsilon$ and $\varepsilon k\frac{4k-2}{k-1}\Upsilon=\Theta(4k\varepsilon\Upsilon)=o(\Upsilon)<0.1\Upsilon$}.
\end{align*} 
The marginal influence of $v_{i+2}$ is $(1-1/k)^{i+1}W+k\frac{4k-2}{k-1}\Upsilon=(1-1/k)^iW+k\frac{4k-2}{k-1}\Upsilon-\frac1k(1-1/k)^iW<(1-1/k)^iW+k\frac{4k-2}{k-1}\Upsilon-0.63\frac{W}k$, which is less than the value above (as $\Upsilon=o(W/k)$).
The marginal influence of $v_{i+3},\ldots,v_k$ are even smaller, and we do not need to consider them.
The marginal influence of $u_1$ is $\sum_{j=i+1}^k\w(w_{j1})=(1-1/k)^iW+(k-i)\cdot\frac{4k-2}{k-1}\Upsilon\leq(1-1/k)^iW+(k-1)\cdot\frac{4k-2}{k-1}\Upsilon<(1-1/k)^iW+k\frac{4k-2}{k-1}\Upsilon-2\Upsilon$, which is again less than the value above.

Putting these together, by a union bound, with probability $(1-1/k)(1-(k+1)\delta)=o(1)$, $\pi^g$ will choose $\{s,v_1,\ldots,v_k\}$, which can infect, as computed in the proof of Lemma~\ref{lem:tightICM}, $k(1-(1-1/k)^k)W+o(kW)$ vertices.
Therefore,
$$\sigma^\f(\pi^g,k)=\sigma^\m(\pi^g,k)\leq(1-o(1))\left(k\left(1-\left(1-\frac1k\right)^k\right)W+o(kW)\right)+o(1)\cdot|V|$$
$$\qquad=k\left(1-\left(1-\frac1k\right)^k\right)W+o(kW).$$
The theorem concludes by taking the ratio of the computed upper-bound of $\sigma^\f(\pi^g,k)=\sigma^\m(\pi^g,k)$ and the computed lower-bound of $\sigma(A^g,k)$, and then taking the limits $k\rightarrow\infty$ and $W\rightarrow\infty$.
\end{proof}

\begin{remark}
Lemma~\ref{lem:tightICM_epsilon_delta} provides an upper bound for each of $\frac{\sigma^\f(\pi^g,k)}{\sigma(A^g,k)}$ and $\frac{\sigma^\m(\pi^g,k)}{\sigma(A^g,k)}$, where the numerator and the denominator in each ratio represent the number of infected vertices (in the non-adaptive setting and the adaptive setting respectively) \emph{in expectation}.
The same proof for Lemma~\ref{lem:tightICM_epsilon_delta} can be used to show the following stronger version of Lemma~\ref{lem:tightICM_epsilon_delta}.
Let $I_{G,F}^\phi(\pi^g,k)$ be the set of infected vertices when the adaptive policy $\pi^g$ is used and the underlying live-edge realization is $\phi$.
Let $I_{G,F}^\phi(A^g,k)$ have similar meaning corresponding to non-adaptive algorithm $A^g$.
The same proof for Lemma~\ref{lem:tightICM_epsilon_delta} implies that, under the same setting in Lemma~\ref{lem:tightICM_epsilon_delta}, for both full-adoption and myopic feedback models, we have
\begin{equation}\label{eqn:stronger_version_lower_bound}
    \Pr_{\phi\sim F}\left(\frac{|I_{G,F}^\phi(\pi^g,k)|}{|I_{G,F}^\phi(A^g,k)|}\leq1-\frac1e+\tau\right)\geq1-2(k+1)\delta-\frac1k=1-o(1),
\end{equation}
where we have taken a union bound of the ``bad'' events that the ``correct'' seed is not chosen in each of the $(k+1)$ iterations in both $\pi^g$ and $A^g$, as well as the ``bad'' event that $s$ infects $t$ (with probability $1/k$).
\end{remark}

The following lemma extends Lemma~\ref{lem:tightLTM} to the $(\varepsilon,\delta)$-greedy setting.
\begin{lemma}\label{lem:tightLTM_epsilon_delta}
Given any two functions $\varepsilon:\Z^+\to\R^+$ and $\delta:\Z^+\to\R^+$ satisfying $\varepsilon(k)=o(1/k)$ and $\delta(k)=o(1/k)$, for any $\tau>0$, there exists $G,F,k$ such that $I_{G,F}$ is an \LTM and, for any adaptive policy $\pi^g\in\Pi^g_{\varepsilon(k),\delta(k)}$ and any non-adaptive algorithm $A^g\in\alg^g_{\varepsilon(k),\delta(k)}$, we have
$$\frac{\sigma^\f(\pi^g,k)}{\sigma(A^g,k)}\leq1-\frac1e+\tau\qquad\mbox{and}\qquad\frac{\sigma^\m(\pi^g,k)}{\sigma(A^g,k)}\leq1-\frac1e+\tau.$$
\end{lemma}
\begin{proof}
We construct the same \infmax instance $(G=(V,E,w),k+1)$ as it is in the proof of Lemma~\ref{lem:tightLTM}, with only one change: set $\Upsilon=\varepsilon\cdot W$ instead of the previous setting $\Upsilon=W/k^2$.
The remaining part of the proof is exactly the same as how we have adapted the proof of Lemma~\ref{lem:tightICM} to Lemma~\ref{lem:tightICM_epsilon_delta}.
We omit the details here.
\end{proof}

\begin{remark}
Similarly, we can prove a stronger version of Lemma~\ref{lem:tightLTM_epsilon_delta} given by exactly the same equation (\ref{eqn:stronger_version_lower_bound}).
\end{remark}

Next, it is easy to show that Theorem~\ref{thm:lowerbound} holds for the $(\varepsilon,\delta)$-greedy setting.
\begin{theorem}\label{thm:lowerbound_epsilon_delta}
For a triggering model $I_{G,F}$, any $\varepsilon\in(0,1/k]$, any function $\delta:\Z^+\to\R^+ $ such that $\delta(k)=o(1/k)$, and any $\pi^g\in\Pi^g_{\varepsilon,\delta(k)}$, we have
$$\sigma^\f(\pi^g,k)\geq\left(1-\frac1e-\varepsilon\right)\max_{S\subseteq V,|S|\leq k}\sigma(S)\quad\mbox{ and }\quad
\sigma^\m(\pi^g,k)\geq\left(1-\frac1e-\varepsilon\right)\max_{S\subseteq V,|S|\leq k}\sigma(S).$$
\end{theorem}
\begin{proof}
Let $S^\ast\in\argmax_{S\subseteq V,|S|\leq k}\sigma(S)$ be an optimal non-adaptive seed set.
For any $S\subseteq V$, any partial realization $\phi$ that is a valid feedback of $S$ under any feedback model (either full-adoption or myopic), letting $s^\ast\in\argmax_{s'}\E_{\phi\simeq\varphi}[|I^\phi_{G,F}(S\cup\{s'\})|]$ be the vertex which maximizes the expected marginal influence given $\varphi$, with probability at least $1-\delta$, $\pi^g$ will pick the next seed $s=\pi^g(S,\varphi)$ such that
\begin{align*}
&\E_{\phi\simeq\varphi}\left[\left|I_{G,F}^\phi(S\cup\{s\})\right|\right]-\E_{\phi\simeq\varphi}\left[\left|I_{G,F}^\phi(S)\right|\right]\\
\geq&(1-\varepsilon)\left(\E_{\phi\simeq\varphi}\left[\left|I_{G,F}^\phi(S\cup\{s^\ast\})\right|\right]-\E_{\phi\simeq\varphi}\left[\left|I_{G,F}^\phi(S)\right|\right]\right)\tag{Definition~\ref{def:epsilon_delta_greedy_adaptive}}\\
\geq&\frac{1-\varepsilon}k\left(\E_{\phi\simeq\varphi}\left[\left|I_{G,F}^\phi(S\cup S^\ast)\right|\right]-\E_{\phi\simeq\varphi}\left[\left|I_{G,F}^\phi(S)\right|\right]\right)\tag{Proposition~\ref{prop:marginal}}.
\end{align*}
We can then prove that, for any $\ell\in\Z^+$, both $\sigma^\f(\pi^g,\ell)$ and $\sigma^\m(\pi^g,\ell)$ are no less than $(1-(1-1/k)^\ell-\varepsilon)\sigma(S^\ast)$, by using the same arguments in the proof of Proposition~\ref{prop:generallowerbound}.
The theorem concludes by taking $\ell=k$.
\end{proof}

Finally, we formally state and prove that, when $\varepsilon=o(1/k)$ and $\delta=o(1/k)$, the infimum of the adaptivity gap under the $(\varepsilon,\delta)$-greedy setting is between $1-1/e-\varepsilon$ and $1-1/e$, which adapts Theorem~\ref{thm:inf_gap} to the $(\varepsilon,\delta)$-greedy setting.
\begin{theorem}
Given any two functions $\varepsilon:\Z^+\to\R^+$ and $\delta:\Z^+\to\R^+$ such that $\varepsilon(k)=o(1/k)$ and $\delta(k)=o(1/k)$, we have
\begingroup
\allowdisplaybreaks
\begin{eqnarray*}
    \inf_{G,F,k:I_{G,F}\text{ is \ICM}}\left(\inf_{\substack{\pi^g\in\Pi^g_{\varepsilon(k),\delta(k)}\\A^g\in\alg^g_{\varepsilon(k),\delta(k)}}}\frac{\sigma^\f(\pi^g,k)}{\sigma(A^g,k)}\right)&\in& \left(1-\frac1e-\varepsilon(k),1-\frac1e\right),\\
    \inf_{G,F,k:I_{G,F}\text{ is \ICM}}\left(\sup_{\substack{\pi^g\in\Pi^g_{\varepsilon(k),\delta(k)}\\A^g\in\alg^g_{\varepsilon(k),\delta(k)}}}\frac{\sigma^\f(\pi^g,k)}{\sigma(A^g,k)}\right)&\in& \left(1-\frac1e-\varepsilon(k),1-\frac1e\right),\\
    \inf_{G,F,k:I_{G,F}\text{ is \LTM}}\left(\inf_{\substack{\pi^g\in\Pi^g_{\varepsilon(k),\delta(k)}\\A^g\in\alg^g_{\varepsilon(k),\delta(k)}}}\frac{\sigma^\f(\pi^g,k)}{\sigma(A^g,k)}\right)&\in& \left(1-\frac1e-\varepsilon(k),1-\frac1e\right),\\
    \inf_{G,F,k:I_{G,F}\text{ is \LTM}}\left(\sup_{\substack{\pi^g\in\Pi^g_{\varepsilon(k),\delta(k)}\\A^g\in\alg^g_{\varepsilon(k),\delta(k)}}}\frac{\sigma^\f(\pi^g,k)}{\sigma(A^g,k)}\right)&\in& \left(1-\frac1e-\varepsilon(k),1-\frac1e\right),\\
    \inf_{G,F,k}\left(\inf_{\substack{\pi^g\in\Pi^g_{\varepsilon(k),\delta(k)}\\A^g\in\alg^g_{\varepsilon(k),\delta(k)}}}\frac{\sigma^\f(\pi^g,k)}{\sigma(A^g,k)}\right)&\in& \left(1-\frac1e-\varepsilon(k),1-\frac1e\right),\\
    \inf_{G,F,k}\left(\sup_{\substack{\pi^g\in\Pi^g_{\varepsilon(k),\delta(k)}\\A^g\in\alg^g_{\varepsilon(k),\delta(k)}}}\frac{\sigma^\f(\pi^g,k)}{\sigma(A^g,k)}\right)&\in& \left(1-\frac1e-\varepsilon(k),1-\frac1e\right).
\end{eqnarray*}
\endgroup
All the six statements above also hold for the myopic feedback model.
\end{theorem}
\begin{proof}
Since $\sigma(A^g,k)\leq\max_{S\subseteq V,|S|\leq k}\sigma(S)$ always hold, the lower bound $1-\frac1e-\varepsilon$ holds for each of the six statements according to Theorem~\ref{thm:lowerbound_epsilon_delta}.
Then, Lemma~\ref{lem:tightICM_epsilon_delta} implies the first two statements, Lemma~\ref{lem:tightLTM_epsilon_delta} implies the third and the fourth.
Finally, since both \ICM and \LTM are special cases of the triggering model, the left-hand side of the fifth statement is at most the left-hand side of the first (or the third), and the left-hand side of the sixth statement is at most the left-hand side of the second (or the fourth).
Thus, the first four statements imply the last two.
\end{proof}

\subsubsection{Supremum of Greedy Adaptivity Gap for $(\varepsilon,\delta)$-Greedy Algorithms}
We first show that Lemma~\ref{lem:ltm_precribed} holds for the $(\varepsilon,\delta)$-greedy setting for very mild restrictions on $\varepsilon$ and $\delta$.
the following lemma shows that, under the $(\varepsilon,\delta)$-greedy setting where $\varepsilon=O(\log\log k/\log k)$ and $\delta=o(1/k)$, the greedy adaptivity gap for \infmax with prescribed seed candidates with \LTM and full-adoption feedback is $2^{\Omega(\log|V|/\log\log|V|)}$.\footnote{Lemma~\ref{lem:ltm_precribed} also says that the adaptivity gap under the same setting is infinity. Since the adaptivity gap is about \emph{optimal} algorithm/policy which is irrelevant to the greedy algorithm, this result holds as always, and it makes no sense to ``adapt'' it into the setting in this section. Similarly, Theorem~\ref{thm:adaptivityGap} is also irrelevant here, and it holds as always.}
Notice that Lemma~\ref{lem:ltm_prescribed_epsilon_delta} below is a stronger claim: it says that the greedy adaptive policy significantly outperforms any non-adaptive \infmax algorithm, including the non-adaptive greedy algorithm and even the optimal non-adaptive algorithm.

\begin{lemma}\label{lem:ltm_prescribed_epsilon_delta}
There exists a constant $c>0$ such that, given any two functions $\varepsilon:\Z^+\to\R^+$ and $\delta:\Z^+\to\R^+$ such that $\varepsilon(k)\leq\frac{c\log\log k}{\log k}$ and $\delta(k)=o(1/k)$, for \infmax with prescribed seed candidates with \LTM, there exists $k$ such that, for any valid seed set $S$ (i.e., $S$ is a subset of the candidate set $\overline{V}$ and $|S|\leq k$) and any $\pi^g\in\Pi^g_{\varepsilon(k),\delta(k)}$, we have
$$\frac{\sigma^\f(\pi^g,k)}{\sigma(S)}\geq2^{c\log(|V|)/\log\log(|V|)}.$$
\end{lemma}
\begin{proof}
The sketch of the proof of this lemma follows the proof of Lemma~\ref{lem:ltm_precribed}.
We construct the same \infmax instance with the same $k,\overline{V},d,W$ as given in the proof of Lemma~\ref{lem:ltm_precribed}.
Again, by Lemma~\ref{lem:additive}, choosing any $k$ vertices among $\overline{V}$ infects the same number of vertices in expectation.
We can reach the same conclusion that $\sigma(S)<\frac{|S|W}{d^d}+d^{d+1}$ by the same arguments.
It then remains to analyze the greedy adaptive policy.

Consider an arbitrary greedy adaptive policy $\pi^g\in\Pi^g_{\varepsilon(k),\delta(k)}$.
Let the three status ``unexplored'', ``explored'' and ``dead'' have the same meanings as they are in the proof of Lemma~\ref{lem:additive}.
Correspondingly, we will show that, \emph{with probability at least $1-k\delta(k)=1-o(1)$, if the root node is not infected yet, at any iteration of the greedy adaptive policy, each internal level of the tree can contain at most one explored node.}

Let $v,v',u,\ell_u,d_u$ have the same meaning as in the proof of Lemma~\ref{lem:ltm_precribed}.
We have already seen that, at the current iteration, choosing $v'$ is suboptimal, and choosing $v'$ yields a marginal influence which is at most a fraction
$$\frac{1}{d^{\ell_u-1}d_u}/\frac1{d^{\ell_u-2}(d-1)d_u}=1-\frac1d$$
of the marginal influence of $v$.
Therefore, if we set $\varepsilon$ such that $\varepsilon<\frac1d$, the next seed chosen by the policy $\pi^g$ will not be a leaf that is a descendent of an unexplored node with probability at least $1-\delta$.
By a union bound, with probability at least $1-k\delta=1-o(1)$, if the root node is not infected yet, it will never happen that, at an iteration, there are more than one explored node in the same level.

The remaining part of the proof is almost the same.
Since this crucial claim holds with probability at least $1-o(1)$, the same induction argument shows that $\sigma^\f(\pi^g,k)\geq\frac12W$, and
we have $\frac{\sigma^\f(\pi^g,k)}{\sigma(S)}=\Omega(2^d)=2^{\Omega(\log|V|/\log\log|V|)}$ as long as $\varepsilon<\frac1d$.
Noticing that $d=\Omega(\frac{\log k}{\log\log k})$ (in particular, since $k=2(\frac{d+1}2)^d$, $\log k=d\log d + O(d)$ and $\log\log k=\log d +o(\log d)$, we have $d=\Omega(\frac{\log k}{\log\log k})$), implying $\frac1d=O(\frac{\log \log k}{\log k})$.
The lemma holds with a sufficiently small $c$.
\end{proof}

Finally, we extend Theorem~\ref{thm:sup_gap} to the $(\varepsilon,\delta)$-greedy setting.
\begin{theorem}\label{thm:sup_gap_epsilon_delta}
For any constant $c>0$, given any two functions $\varepsilon:\Z^+\to\R^+$ and $\delta:\Z^+\to\R^+$ such that $\varepsilon(k)=O(\frac1{k^{2+c}})$ and $\delta(k)=o(1/k)$, there exists a triggering model $I_{G,F}$ and $k$ such that, for any valid seed set $S$ (i.e., $S$ is a subset of the candidate set $\overline{V}$ and $|S|\leq k$) and any $\pi^g\in\Pi^g_{\varepsilon(k),\delta(k)}$, we have
$$\frac{\sigma^\f(\pi^g,k)}{\sigma(S)}\geq2^{c'\log(|V|)/\log\log(|V|)},$$
where $c'>0$ is a universal constant.
\end{theorem}
\begin{proof}
The sketch of the proof follows from Section~\ref{sect:proofs}.
We construct the same \infmax instance with the same triggering model that is a mixture of \ICM and \LTM.
The analysis for the non-adaptive algorithms is the same.
We have $\sigma(S)\leq M(k\cdot\frac1{d^d}W+d^{d+1})$ for any $S$ with $|S|\leq k$.

Consider any $\pi^g\in\Pi^g_{\varepsilon(k),\delta(k)}$.
Consider an arbitrary iteration.
Let $a_\z\in A$ be the seed that maximizes the marginal influence, which will be the one picked by the exact greedy adaptive policy.
Recall that a vertex in $A$ corresponds to the selection of a seed among the leaves in each of $T_1,\ldots,T_M$.
Naturally, $a_\z$ makes the optimal selection in all the $M$ trees.
From the argument in the proof of Lemma~\ref{lem:ltm_precribed}, in each tree $T_i$ and in each iteration, as long as the seed selected in $T_i$ satisfies that there is at most one explored node at each level of $T_i$, we will have the greedy adaptivity gap being $2^{\Omega(\log|V|/\log\log|V|)}$ on the subgraph $T_i$, and the same argument in Section~\ref{sect:proofs} shows that the greedy adaptivity gap overall is $2^{\Omega(\log\log|V|/\log\log\log|V|)}$.
To conclude the proof of this theorem, we will show that $\varepsilon=O(\frac1{k^{2+c}})$ is sufficient to make sure that this will happen for all $T_1,\ldots,T_M$.

To show this, we consider a suboptimal $a_\z'\in A$ such that, at some tree $T_i$, there are more than one explored node at some level of $T_i$, and we find a lower bound of the difference between the marginal influence of $a_\z$ and the marginal influence of $a_\z'$.
Let $v,v',u,\ell_u,d_u$ have the same meaning as in the proof of Lemma~\ref{lem:ltm_precribed}.
Let $p_u$ be the probability that, given the feedback at the current iteration, the path connecting from $u$ to the root contains only live edges (i.e., $u$ will infect the root).
Then the marginal influence of $v'$ is at most
$$\frac1{d^{\ell_u-1}d_u}\cdot p_u\cdot W+d+1$$
(where the second term $d$ is the number of nodes on the path from $v'$ to the root, which can potentially be infected),
and the marginal influence of $v$ is at least
$$\frac1{d^{\ell_u-2}(d-1)d_u}\cdot p_u\cdot W.$$
The difference is at least
$$Wp_u\cdot\frac1{d^{\ell_u-2}d_u}\left(\frac1{d-1}-\frac1d\right)-d-1\geq \frac{W}{d^d(d-1)}-d-1,$$
where we used the fact that $d_u\leq d$ and $p_u\geq\frac1{d^{d-\ell_u}}$.

On the other hand, the marginal influence of $a_\z$ is at most $M(W+d+1)$ (we have assumed the root is infected at this iteration for each of $T_1,\ldots,T_M$).
It suffices to find an $\varepsilon$ such that
$$\frac{W}{d^d(d-1)}-d-1>\varepsilon M(W+d+1).$$
Since we have set $W=M=d^{d+10}$, this is equivalent to
$$\frac{d^{10}}{d-1}-d-1>\varepsilon d^{d+10}\left(d^{d+10}+d+1\right),$$
which implies
$$\varepsilon=O\left(\frac1{d^{2d+11}}\right).$$
Finally, recalling that $k=2(\frac{d+1}2)^d$, elementary calculations shows that $\varepsilon=O(\frac1{k^{2+c}})$ is a sufficient condition to the above:
$$\frac1{k^{2+c}}=\frac1{2^{2+c}}\left(\frac2{d+1}\right)^{2d+cd}<\left(\frac1d\right)^{2d+cd}\cdot2^{2d+cd}=\frac1{d^{2d+11}}\cdot\frac{2^{2d+cd}}{2^{(cd-11)\log d}},$$
and the second term in the product above tends to $0$ as $d\rightarrow\infty$. 
\end{proof}

\section{A Variant of Greedy Adaptive Policy}
\label{sect:variant}
Although we have seen that the adaptive version of the greedy algorithm can perform worse than its non-adaptive counterpart, in general, we would still recommend the use of it as long as it is feasible, as it can also perform significantly better than the non-adaptive greedy algorithm (Theorem~\ref{thm:sup_gap}) while never being too bad (Theorem~\ref{thm:lowerbound}).
As we remarked, the adaptivity may be harmful because exploiting the feedback may make the seed-picker too myopic.
In this section, we propose a less aggressive risk-free version of the greedy adaptive policy, $\pi^{g-}$, in that it balances between the exploitation of the feedback and the focus on the average in the conventional non-adaptive greedy algorithm.

First, we apply the non-adaptive greedy algorithm with $|V|$ seeds to obtain an order $\mathcal{L}$ on all vertices.
Then for any $S\subseteq V$ and any partial realization $\varphi$, $\pi^{g-}(S,\varphi)$ is defined to be the first vertex $v$ in $\mathcal{L}$ that is not known to be infected.
Formally, $v$ is the first vertex in $\mathcal{L}$ that are not reachable from $S$ when removing all edges $e$ with $\varphi(e)\in\{\block,\unknown\}$.
This finishes the description of the policy.

This adaptive policy is always no worse than the non-adaptive greedy algorithm, as it is easy to see that
those seeds chosen by $\pi^g$ are either seeded or infected by previously selected seeds in $\pi^{g-}$.

However, $\pi^{g-}$ can sometimes be conservative.
It is possible that $\pi^{g-}$ has the same performance as the non-adaptive greedy algorithm, but $\pi^g$ is much better.
Especially, when there is no path between any two vertices among the first $k$ vertices in $\mathcal{L}$, $\pi^{g-}$ will make the same choice as the non-adaptive greedy algorithm.
The \infmax instance in Section~\ref{sect:proofs} is an example of this.

We have seen that $\pi^{g-}$ sometimes performs better than $\pi^g$ (e.g., in those instances constructed in the proofs of Lemma~\ref{lem:tightICM} and Lemma~\ref{lem:tightLTM}) and sometimes performs worse than the $\pi^g$ (e.g., in the instance constructed in Section~\ref{sect:proofs}).
Therefore, given a \emph{particular} \infmax instance, for deciding which of $\pi^{g-}$ and $\pi^g$ to be used (we should never consider the non-adaptive greedy algorithm if adaptivity is available, as it is always weakly worse than $\pi^{g-}$), we recommend a comparison of the two policies by simulations.
Notice that the seed-picker can randomly sample a realization $\phi$ and simulate the feedback the policy will receive.
Thus, given $I_{G,F}$, both $\pi^{g-}$ and $\pi^g$ can be estimated by taking an average over the numbers of infected vertices in a large number of simulations.
In the next section, we evaluate the three algorithms---the non-adaptive greedy algorithm, the greedy adaptive policy $\pi^g$ and the conservative greedy adaptive policy $\pi^{g-}$---empirically by experiments on social networks in our daily lives.

\section{Empirical Experiments}
\label{sect:experiments}
In this section, we compare the three algorithms---the non-adaptive greedy algorithm, the greedy adaptive policy $\pi^g$ and the conservative greedy adaptive policy $\pi^{g-}$---empirically by experiments on the social networks in our daily lives.
Below is a quick summary of the results obtained from our experiments.
\begin{enumerate}
    \item The greedy adaptive policy $\pi^g$ outperforms the conservative greedy adaptive policy $\pi^{g-}$ and the non-adaptive greedy algorithm in most scenarios.
    \item The conservative greedy adaptive policy $\pi^{g-}$ always outperforms the non-adaptive greedy algorithm.
    \item Occasionally, the greedy adaptive policy $\pi^g$ is outperformed by the conservative adaptive policy $\pi^{g-}$, or even the non-adaptive greedy algorithm.
\end{enumerate}
Notice that our results in Section~\ref{sect:inf} support the third observation, and the second observation follows easily from our definition of $\pi^{g-}$ in the last section.

In the proofs of Lemma~\ref{lem:tightICM} and Lemma~\ref{lem:tightLTM} in Sect.~\ref{sect:inf_gap_tight}, we have constructed two graphs where the greedy adaptive policy performs worse than the non-adaptive greedy algorithm.
We test the performances of the three algorithms empirically on the two graphs in Sect.~\ref{sect:experiments_constrcutedgraphs}.

\subsection{Reverse Reachable Sets}
\label{sect:RR_set}
In this section, we discuss a popular type of greedy-based algorithm---the reverse-reachable-set-based algorithms.
Our experiments have also made use of reverse reachable sets.

In all those reverse-reachable-set-based algorithms, including RIS~\shortcite{borgs2014maximizing}, $\text{TIM}^+$~\shortcite{tang2014influence}, IMM~\shortcite{tang2015influence}, EPIC~\shortcite{han2018efficient}, a sufficient number of reverse reachable sets are sampled.
Each reverse reachable set is sampled as follows: first, a vertex $v$ is sampled uniformly at random; second, sample the live edges in the graph where each vertex chooses a triggering set according to the triggering model (undirected graphs are treated as directed graphs with anti-parallel edges); lastly, the reverse reachable set consists of exactly those vertices from which $v$ is reachable.

After collecting sufficiently many reverse reachable sets, the algorithms choose $k$ seeds that attempt to cover as many reverse reachable sets as possible (we say a reverse reachable set is covered if it contains at least $1$ seed), and this is done by a greedy maximum coverage way: iteratively select the seed that maximizes the extra number of reverse reachable sets covered by this seed.

The meat of those reverse-reachable-set-based algorithms is that, given a seed set $S$, the probability that a randomly sampled reverse reachable set is covered by $S$ is exactly the probability that a vertex selected uniformly at random from the graph is infected by $S$.
Therefore, when sufficiently many reverse reachable sets are sampled, the fraction of the reverse reachable sets covered by $S$ is a good approximation to $\sigma(S)/|V|$.

\subsection{Experiments Setup}
We implement the experiments on four undirected graphs, shown in Table~\ref{tab:data}.
All of our datasets come from~\shortcite{snapnets}, and these networks are also popular choices in other empirical work.
We implement the three algorithms with $k=200$ seeds.

\begin{table}
    \centering
    \begin{tabular}{lrrr}
    \hline
    Dataset     &  Number of Vertices & Number of Edges & Average Degree\\
    \hline
      Nethept & 15,233 & 31,387 & 4.12 \\
      CA-HepPh & 12,008 & 118,505 & 19.73 \\
      DBLP & 317,080 & 1,049,866 & 6.62\\
      com-YouTube & 1,134,890 & 2,987,624 & 5.26\\
    \hline 
    \end{tabular}
    \caption{Datasets for experiments}
    \label{tab:data}
\end{table}

For the diffusion model, we implement both \ICM and \LTM.
For \ICM, the weight of each edge is set to $0.01$.
For \LTM, each undirected edge $(u,v)$ is viewed as two anti-parallel directed edges such that $w(u,v)=1/\deg(v)$ and $w(v,u)=1/\deg(u)$.
For each dataset, we sample three realizations $\phi_1,\phi_2,\phi_3$ as the ``ground-truth''.
Therefore, a total of six experiments are performed for each dataset: the two models \ICM and \LTM for each of the three realizations.
For each of those six experiments, when a seed $s$ is chosen, all vertices that are reachable from $s$ in the ground-truth realization are considered infected, and given as the feedback.
In particular, we consider the full-adoption feedback in our experiments.

To implement the three algorithms, we sample 1,000,000 reverse reachable sets, and perform the greedy maximum coverage algorithm described in the last sub-section which iteratively selects the seed that maximizes the number of extra reverse reachable sets covered by this seed.
We iteratively select seeds in this way until a sufficient number of seeds are selected (we decided to select 10,000 seeds, which turns out to be sufficient), and we ordered them in a list.
Naturally, the non-adaptive greedy algorithm choose the first $k=200$ seeds in this list.
The conservative greedy adaptive policy iteratively select the first not-yet-selected seed in the list that is not known to be infected, as described in Section~\ref{sect:variant}.

As for the greedy adaptive policy, the first seed is the same as the one for non-adaptive greedy algorithm and the conservative greedy adaptive policy.
In each future iteration, the vertices that are infected (given as the feedback) are removed from the graph, and 1,000,000 new reverse reachable sets are sampled on the remainder graph.
Notice that, for \LTM, the degrees of the vertices in the remainder graph may decrease, which increases the weights of the incoming edges of these vertices.
Then, a seed that covers a maximum number of reverse reachable sets is selected as the next seed.

We remark that removing infected vertices from the graph and sampling reverse reachable sets on the remainder graph is the correct way to implement the algorithm.
Since we are considering the full-adoption feedback, we know that there is no directed live edge from an infected vertex to an uninfected vertex, for otherwise the uninfected vertex should have been infected.
When sampling the reverse reachable set, the triggering set of any uninfected vertex should not intersect with any infected vertex.
Given an arbitrary uninfected vertex $v$ and letting $X$ be the set of all infected vertices, $v$ should include each vertex in $\Gamma(v)\setminus X$ to its triggering set with probability $0.01$ independently under \ICM, and $v$ should include exactly one vertex chosen uniformly at random in $\Gamma(v)\setminus X$ to its triggering set under \LTM.
Consequently, for both \ICM and \LTM, we can and we should remove those infected vertices from the graph and sample reverse reachable sets in the remainder graph.

\subsection{Results}
As we mentioned, for each dataset, we have six figures corresponding to \ICM and \LTM for each of the three realizations $\phi_1,\phi_2,\phi_3$.
In each figure, the $x$-axis is the number of seeds, and the $y$-axis is the number of infected vertices in the realization.
The three curves correspond to the outcomes of the three algorithms.
Figure~\ref{fig:Nethept}, \ref{fig:CA-HepPh}, \ref{fig:DBLP} and \ref{fig:YouTube} correspond to the datasets Nethept, CA-HepPh, DBLP, and com-YouTube respectively.
The three observations mentioned at the beginning of this section can be easily observed from the figures.

\subsection{Experiments on Graphs Constructed in Sect.~\ref{sect:inf_gap_tight}}
\label{sect:experiments_constrcutedgraphs}
For \ICM, based on the construction in the proof of Lemma~\ref{lem:tightICM}, we build five graphs with the parameter $W$ set to $10k^{2k}(k-1)$ and the parameter $k$ set to $10,15,20,25$ and $30$ respectively.
We implement the experiment with weighted vertices (see Remark~\ref{remark:weightedvertices}).
That is, we set the weight of the vertex $u$ to be $W$ instead of creating $W-1$ vertices that are connected from $u$.
The expected number of infections for a given seed set is evaluated with 10,000 Monte Carlo simulations.
Notice that all the edges in our construction have weight $1$ except for the edge $(s,t)$, which has weight $1/k$.
Thus, there can be only two possible realizations $\phi_{(s,t)}$ and $\phi_{\neg(s,t)}$ where the edge $(s,t)$ is live and blocked respectively.
In addition, $\phi_{(s,t)}$ and $\phi_{\neg(s,t)}$ are realized with probability $1/k$ and $1-1/k$ respectively.
For the greedy adaptive policy, we test the performances of the policy on both graphs, and the overall performance is then given by the weighted average.
In all the tests, the non-adaptive greedy algorithm selects the seed set $\{s,u_1,\ldots,u_k\}$ and the greedy adaptive policy selects the seed set $\{s,v_1,\ldots,v_t\}$, exactly as we predicted in the proof of Lemma~\ref{lem:tightICM}.
Since each vertex $\{s,u_1,\ldots,u_k\}$ has in-degree $0$, the seeds selected by the non-adaptive greedy algorithm cannot infect each other.
Therefore, the conservative greedy adaptive policy selects exactly the same seed set as the non-adaptive greedy algorithm, and we omit the test for the conservative greedy adaptive policy.
The experimental results are shown in Table~\ref{tab:ICMexperiment}.
The first two rows show the expected numbers of infected vertices for the two algorithms, and the last row shows the ratios between the performance of the greedy adaptive policy and the performance of the non-adaptive greedy algorithm.
We can see that the greedy adaptive policy performs worse than its non-adaptive counterpart in all cases, and the gap becomes larger as the size of the graph grows (with $k$ increasing).

\begin{table}
    \centering
    \begin{tabular}{lcccccc}
    \hline
       $k$  & $10$ & $15$ & $20$ & $25$ & $30$ \\
    \hline
        non-adaptive  & $1.426\times 10^{23}$ & $5.607\times 10^{38}$ & $5.413\times 10^{55}$ & $5.855\times 10^{73}$ & $4.418\times 10^{92}$ \\
        adaptive & $1.176\times 10^{23}$ & $4.338\times10^{38}$ & $4.030\times10^{55}$ & $4.247\times10^{73}$ & $3.144\times 10^{92}$\\
        gap & $0.825$ & $0.774$ & $0.744$ & $0.725$ & $0.712$\\
    \hline
    \end{tabular}
    \caption{The results for \ICM with the graph in Lemma~\ref{lem:tightICM}.}
    \label{tab:ICMexperiment}
\end{table}

For \LTM, based on the construction in the proof of Lemma~\ref{lem:tightLTM}, we build three graphs with the parameter $W$ set to $10k^{2k}(k-1)$ and the parameter $k$ set to $10,15,20$ respectively.
For each $k$, we randomly sample $k$ realizations $\phi_1,\ldots,\phi_k$.
Since $t$ has two incoming edges $(s,t)$ and $(v_1,t)$ with weights $1/k$ and $1-1/k$ respectively, we let $(s,t)$ appears in $\phi_1$ and $(v_1,t)$ appears in $\phi_2,\ldots,\phi_k$.
(Recall that the greedy adaptive policy will select seeds that are of worse quality than the non-adaptive greedy algorithm in the case it receives the feedback that $(s,t)$ is blocked.)
The remaining edges are sampled randomly based on \LTM in each $\phi_i$.

We observe that the non-adaptive greedy algorithm always selects $\{s,u_1,\ldots,u_k\}$ as we expected.
However, for the greedy adaptive policy, it sometimes selects the ``bad'' seed set $\{s,v_1,\ldots,v_k\}$ as predicted in the proof of Lemma~\ref{lem:tightLTM}, and it sometimes selects the ``good'' seed set $\{s,u_1,\ldots,u_k\}$.
The chance it selects the bad seed set (i.e., more accurately reflects our prediction) becomes higher when the number of Monte-Carlo simulations increases.
This is natural: when the number of Monte Carlo simulations increases, the greedy adaptive policy is more likely to choose the locally optimal seed at each iteration; if it does, it will lead to a ``bad'' seed set.
On the other hand, with less Monte Carlo simulations, the policy sometimes chooses a locally sub-optimal seed, which is actually beneficial in that it deviates from the ``path'' leading to a ``bad'' seed set.
To be more specific, we have seen that, after selecting $s$ as the first seed, if we receive the feedback that $t$ is not infected by $s$, the marginal influence of $v_1$ is slightly larger than any of $u_1,\ldots,u_k$.
The choice of the second seed is pivotal: we reach a ``bad'' seed set if we select $v_1$ as the second seed, and we reach a ``good'' seed set if $u_1$ is selected.
The policy will select $v_1$ as the second seed only when its marginal influence is greater than all of $u_1,\ldots,u_k$.
However, this theoretical fact is not always empirically observed due to the errors in the Monte Carlo estimation.

The experimental results for $k=10,15,20$ with various numbers of Monte Carlo simulations are shown in Table~\ref{tab:LTMexperiment}.
We can see that the gap becomes larger when the number of Monte Carlo simulations increases (with the only exception where $k=15$ with $10^6$ Monte Carlo simulations).
Again, since the non-adaptive greedy algorithm always selects the seed set $\{s,u_1,\ldots,u_k\}$ and the vertices in this set have in-degree $0$, the conservative greedy adaptive policy always selects the same seed set, and we omit the corresponding results.

In conclusion, we have observed that the greedy adaptive policy, to different extents, consistently performs worse than the non-adaptive greedy algorithm (as well as the conservative greedy adaptive policy) in the graphs we constructed in Lemma~\ref{lem:tightICM} and Lemma~\ref{lem:tightLTM}.
We have empirically verified our theoretical postulation that the greedy adaptive policy can perform worse than the non-adaptive greedy algorithm in some special scenarios.

\begin{table}
    \centering
    \begin{tabular}{lcccccc}
    \hline
       \# of MC  & $10^2$ & $10^3$ & $10^4$ & $10^5$ & $10^6$ \\
    \hline
        non-adaptive  & $1.422\times 10^{23}$ & $1.422\times 10^{23}$ & $1.422\times 10^{23}$ & $1.422\times 10^{23}$ & $1.422\times 10^{23}$ \\
        adaptive & $1.335\times 10^{23}$ & $1.275\times 10^{23}$ & $1.146\times 10^{23}$ & $1.121\times 10^{23}$ & $1.121\times 10^{23}$ \\
        gap & $0.939$ & $0.896$ & $0.806$ & $0.788$ & $0.788$\\
    \hline
    \end{tabular}
    The results for $k=10$
    
    \vspace{0.5cm}
    
    \begin{tabular}{lcccccc}
    \hline
       \# of MC  & $10^2$ & $10^3$ & $10^4$ & $10^5$ & $10^6$ \\
    \hline
        non-adaptive  & $5.602\times 10^{38}$ & $5.602\times 10^{38}$ & $5.602\times 10^{38}$ & $5.602\times 10^{38}$ & $5.602\times 10^{38}$ \\
        adaptive & $5.276\times 10^{38}$ & $5.212\times 10^{38}$ & $4.910\times 10^{38}$ & $4.203\times 10^{38}$ & $4.241\times 10^{38}$ \\
        gap & $0.942$ & $0.930$ & $0.877$ & $0.750$ & $0.757$\\
    \hline
    \end{tabular}
    The results for $k=15$
    
    \vspace{0.5cm}
    
    \begin{tabular}{lcccccc}
    \hline
       \# of MC  & $10^2$ & $10^3$ & $10^4$ & $10^5$ & $10^6$ \\
    \hline
        non-adaptive  & $5.411\times 10^{55}$ & $5.411\times 10^{55}$ & $5.411\times 10^{55}$ & $5.411\times 10^{55}$ & $5.411\times 10^{55}$ \\
        adaptive & $5.183\times 10^{55}$ & $4.864\times 10^{55}$ & $4.612\times 10^{55}$ & $4.227\times 10^{55}$ & $3.924\times 10^{55}$ \\
        gap & $0.958$ & $0.899$ & $0.852$ & $0.781$ & $0.725$\\
    \hline
    \end{tabular}
    The results for $k=20$
    \caption{The results for \LTM with the graph in Lemma~\ref{lem:tightLTM}}
    \label{tab:LTMexperiment}
\end{table}

\begin{figure}
    \centering
    \includegraphics[width=0.45\textwidth]{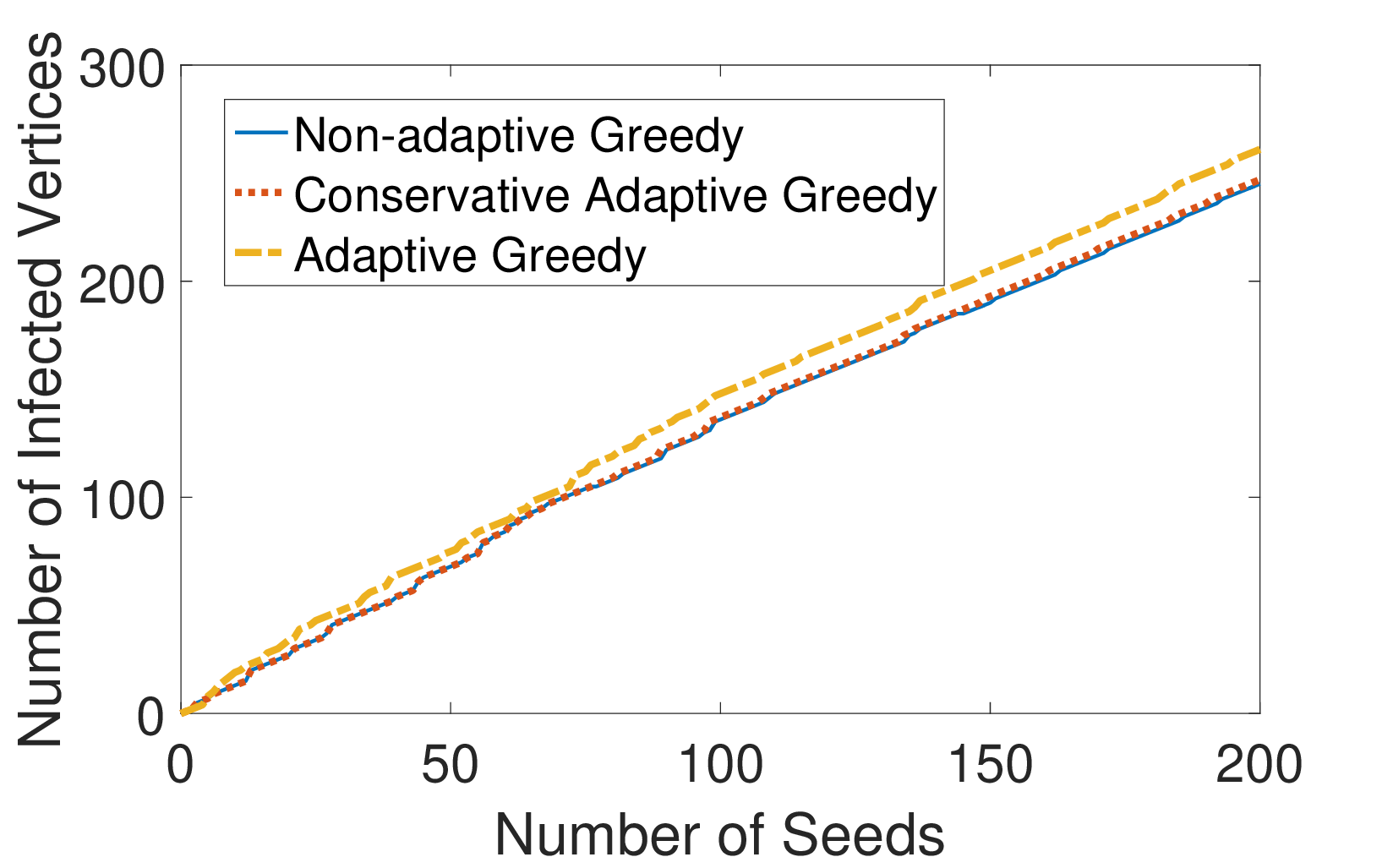}
    \includegraphics[width=0.45\textwidth]{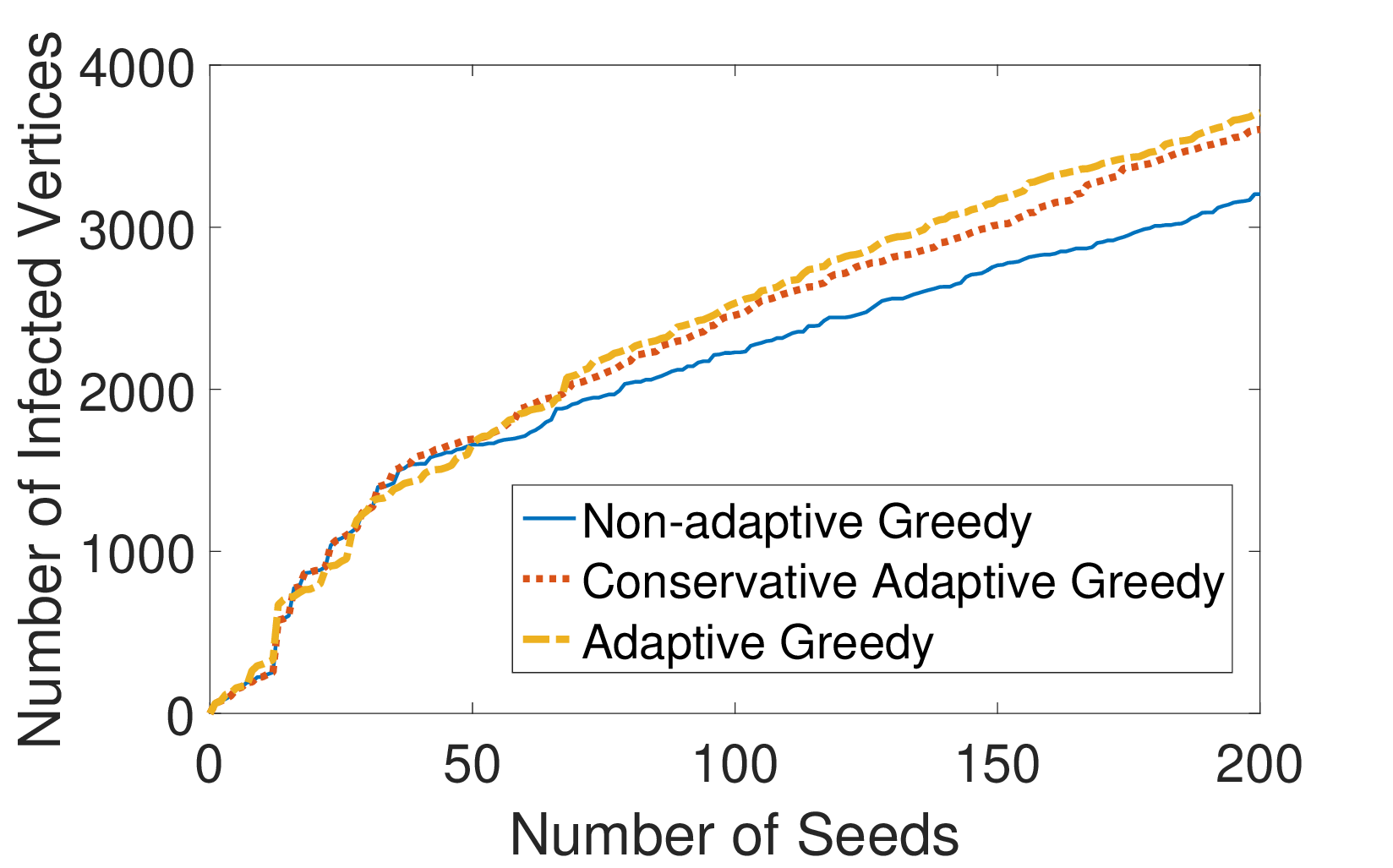}
    \includegraphics[width=0.45\textwidth]{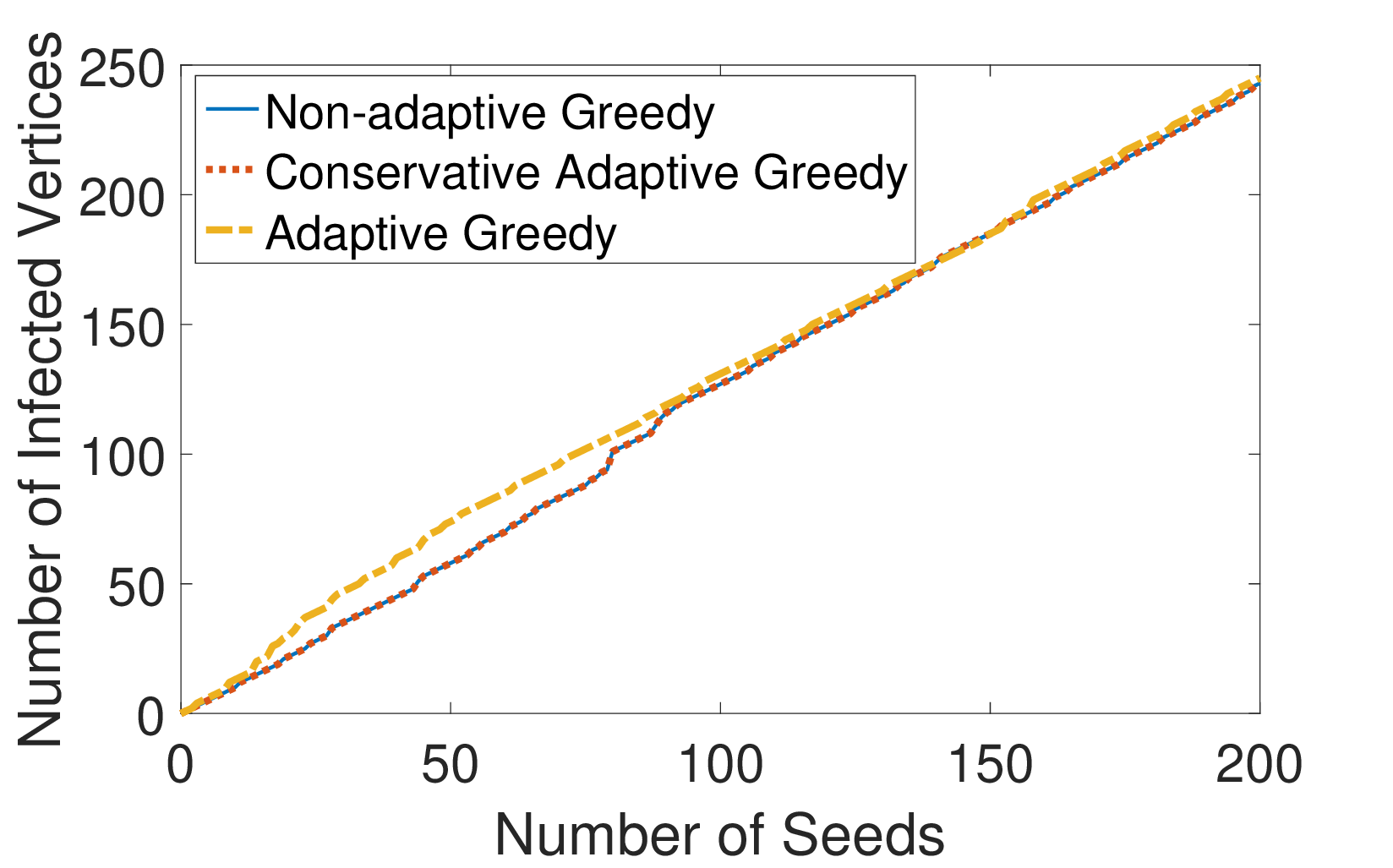}
    \includegraphics[width=0.45\textwidth]{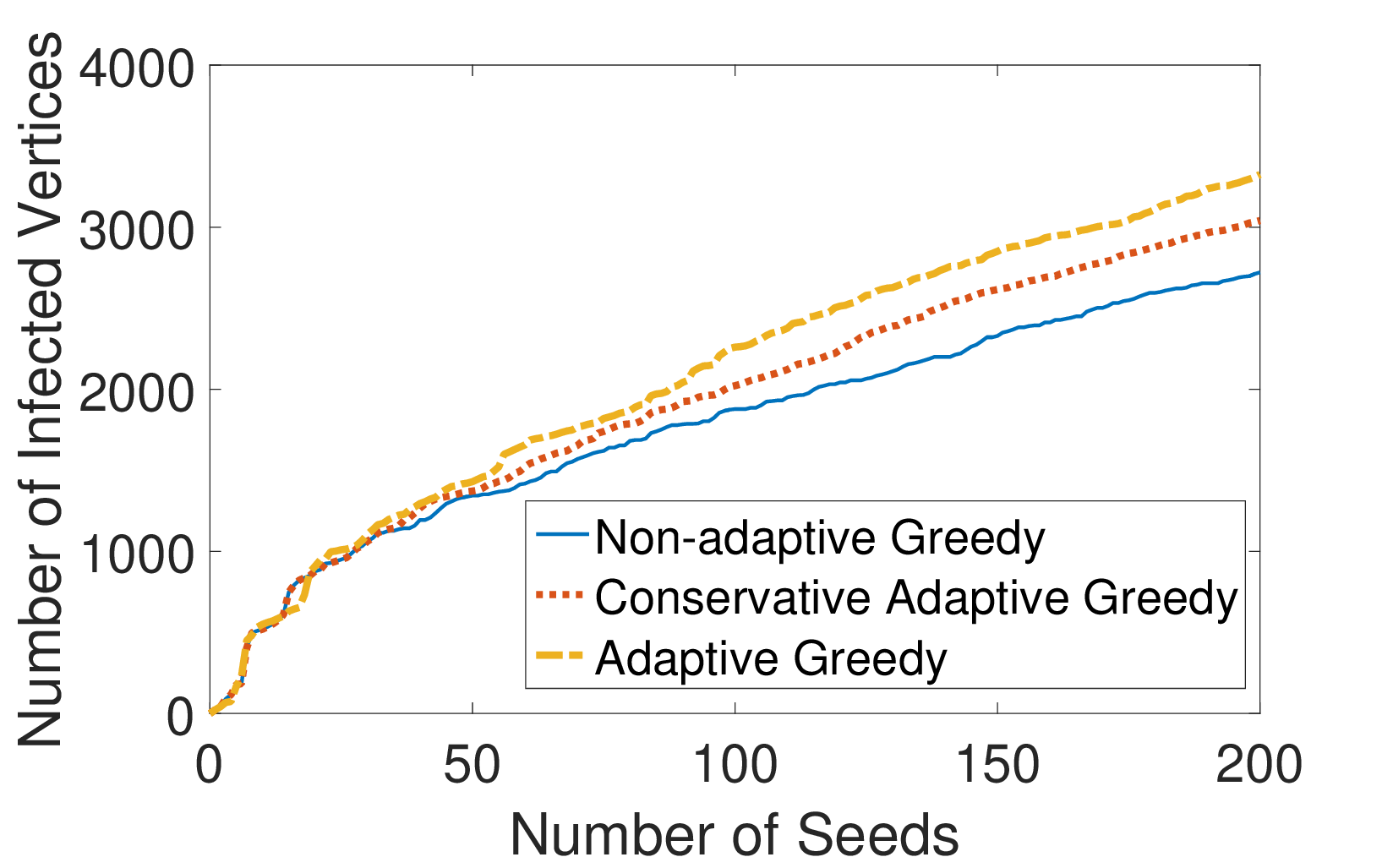}
    \includegraphics[width=0.45\textwidth]{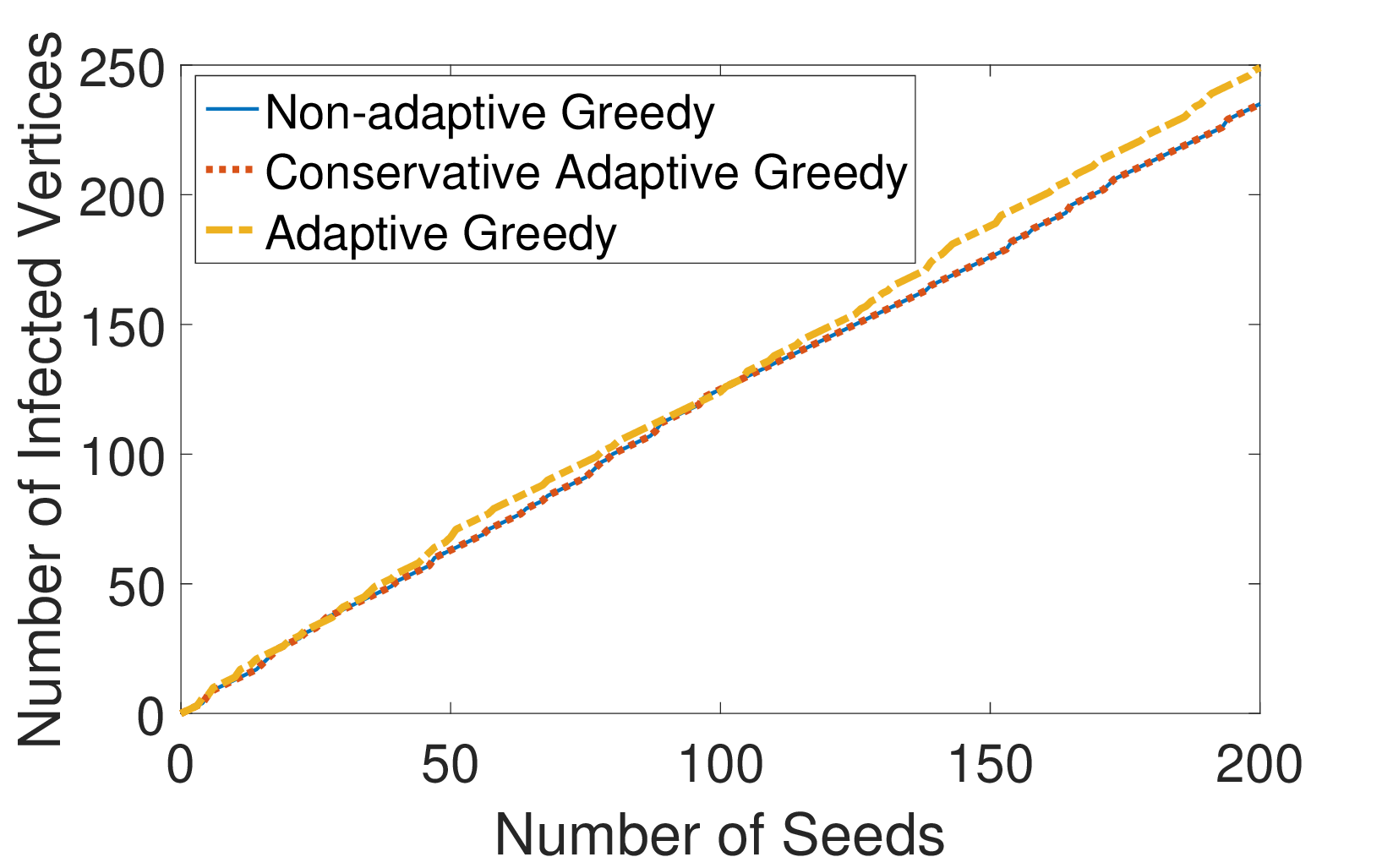}
    \includegraphics[width=0.45\textwidth]{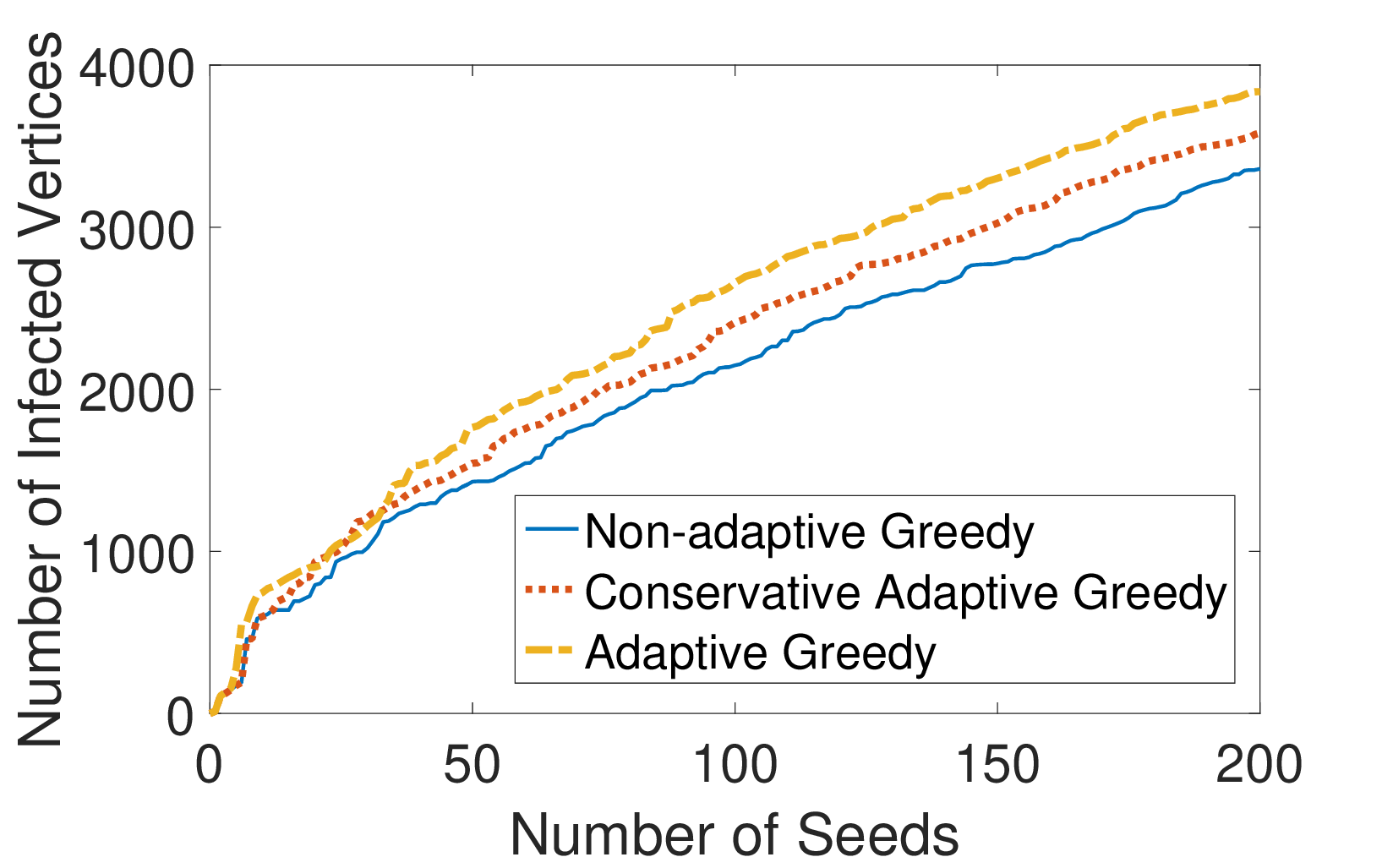}
    \caption{The results for the dataset Nethept. The three rows correspond to the three realizations $\phi_1,\phi_2,\phi_3$, the left column is for \ICM, and the right column is for \LTM.}%
    \label{fig:Nethept}
\end{figure}

\begin{figure}
    \centering
    \includegraphics[width=0.45\textwidth]{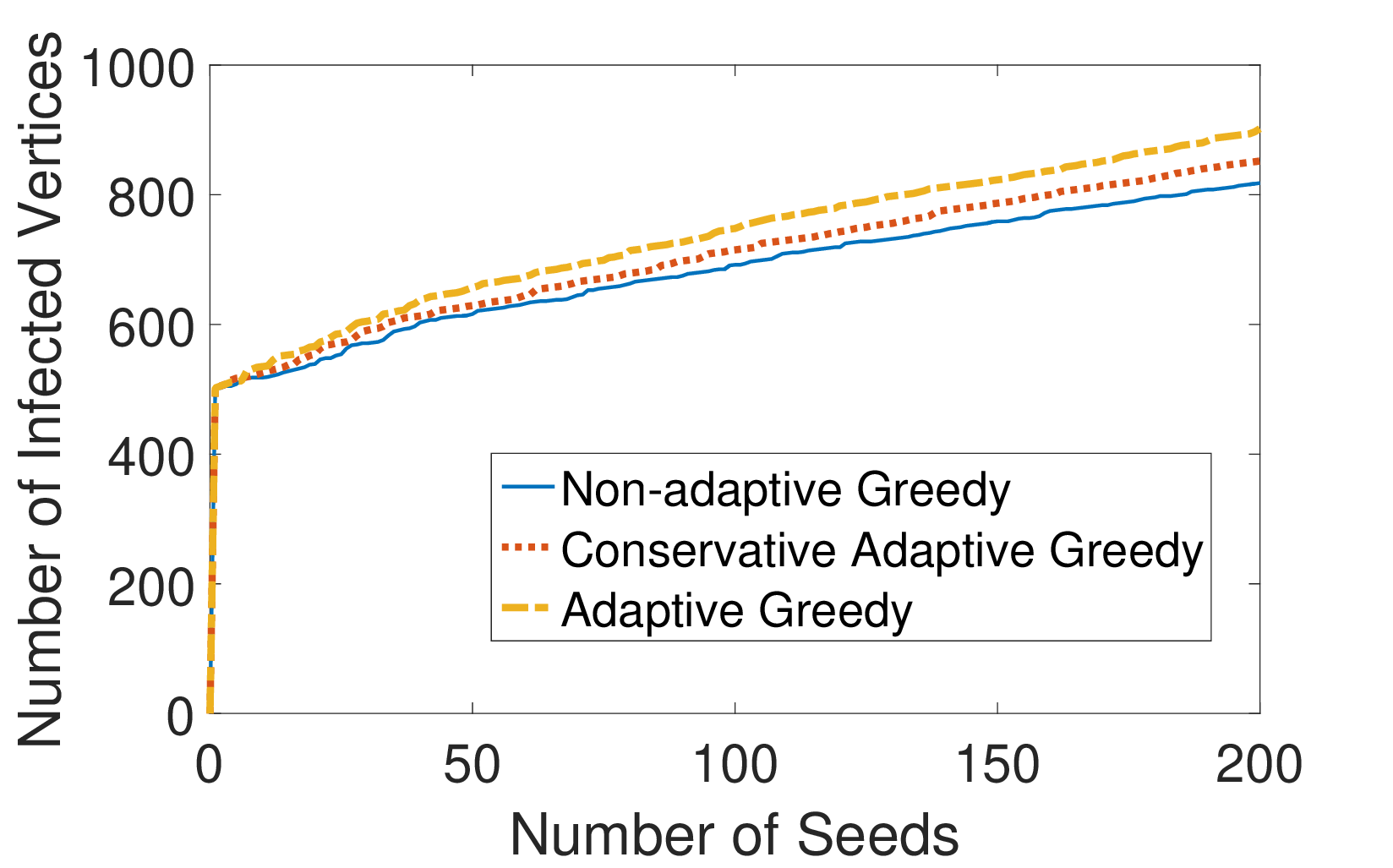}
    \includegraphics[width=0.45\textwidth]{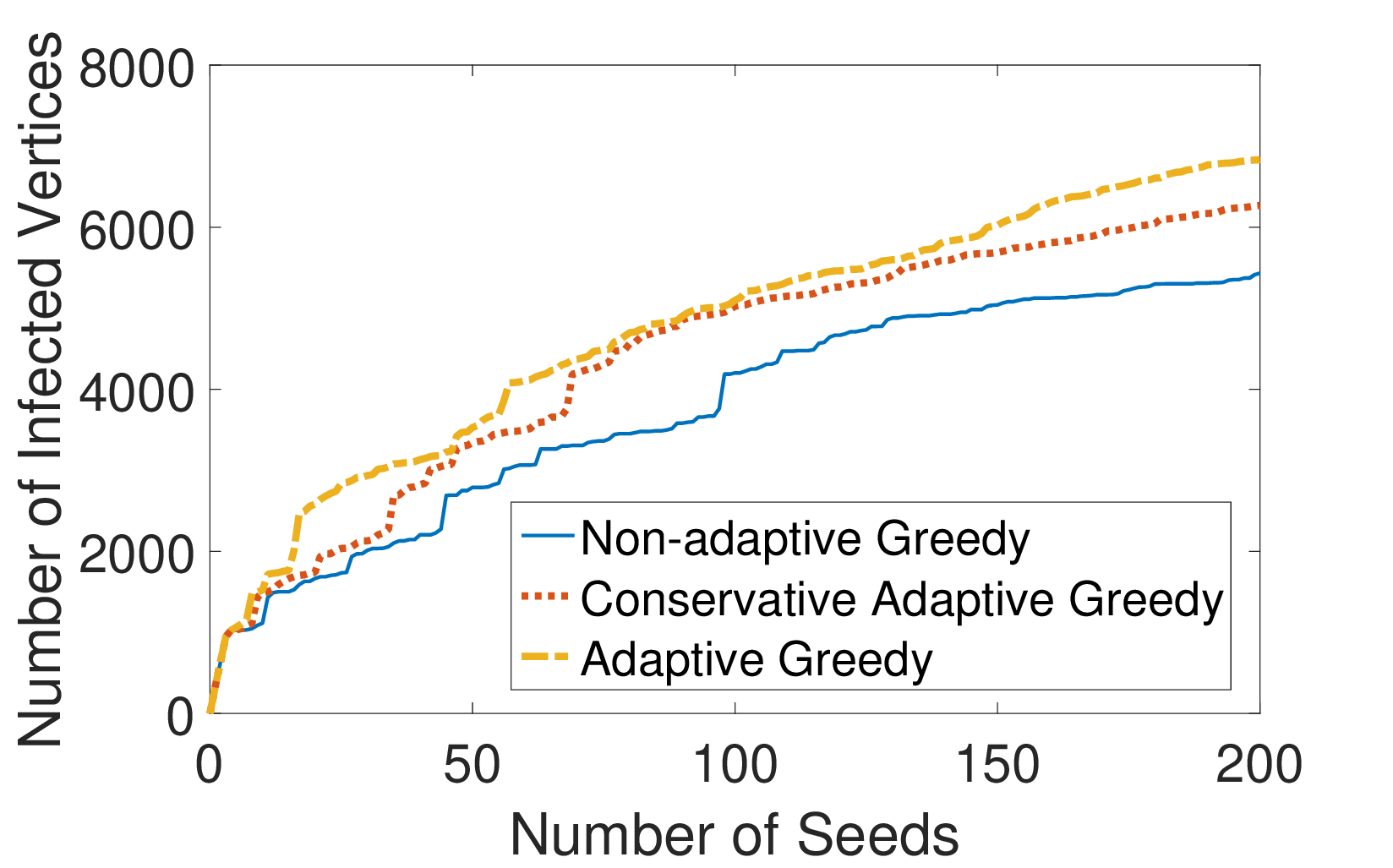}
    \includegraphics[width=0.45\textwidth]{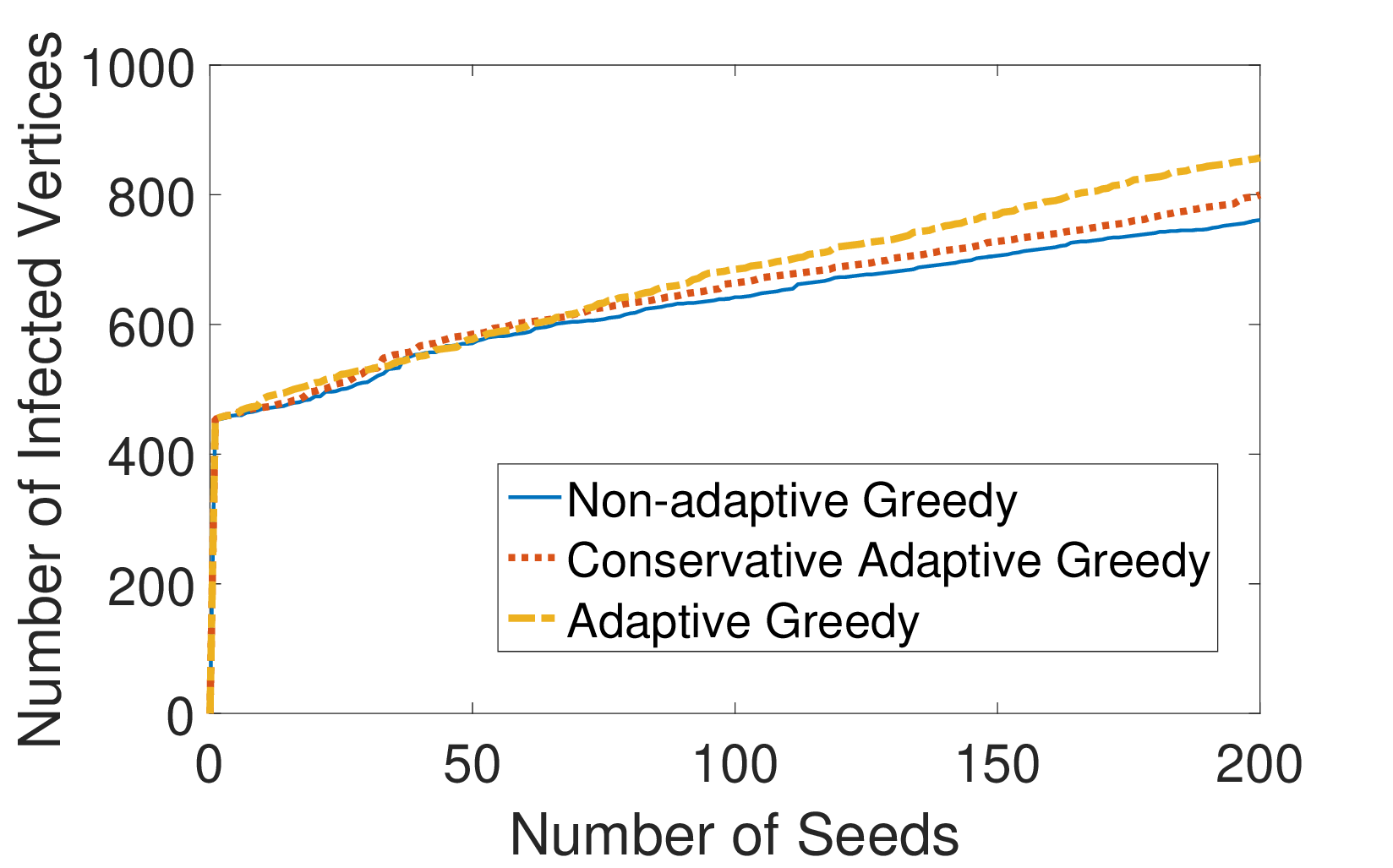}
    \includegraphics[width=0.45\textwidth]{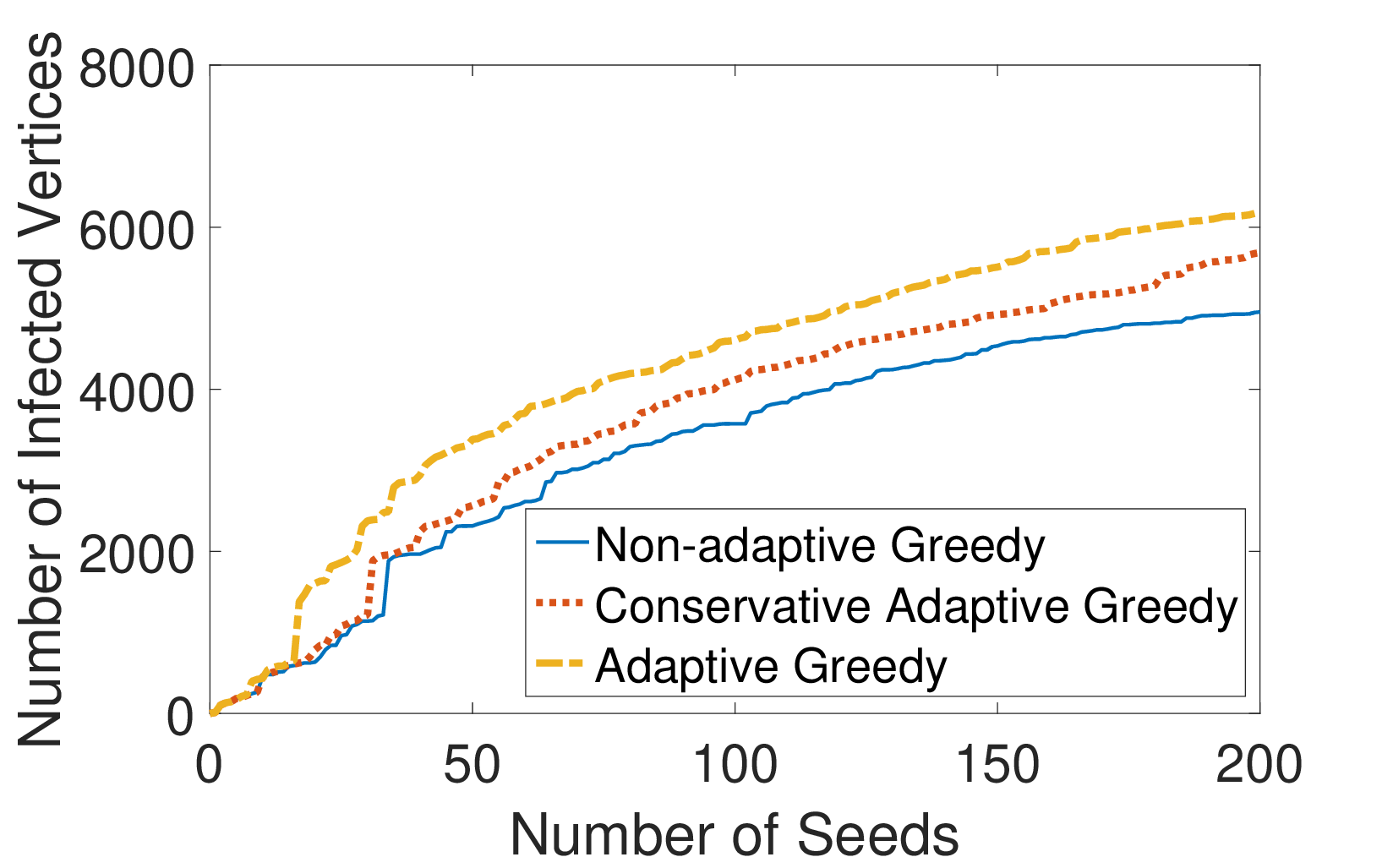}
    \includegraphics[width=0.45\textwidth]{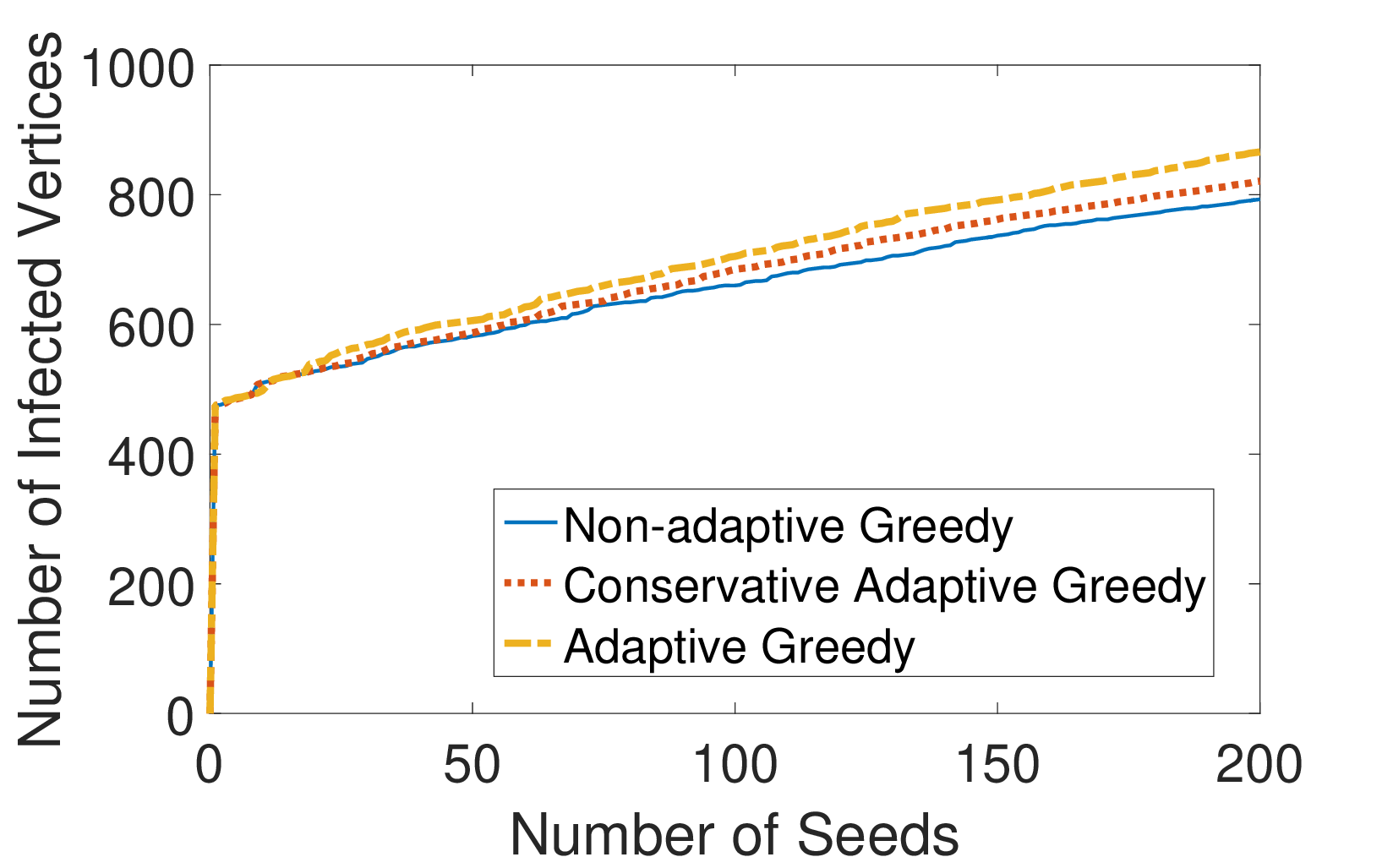}
    \includegraphics[width=0.45\textwidth]{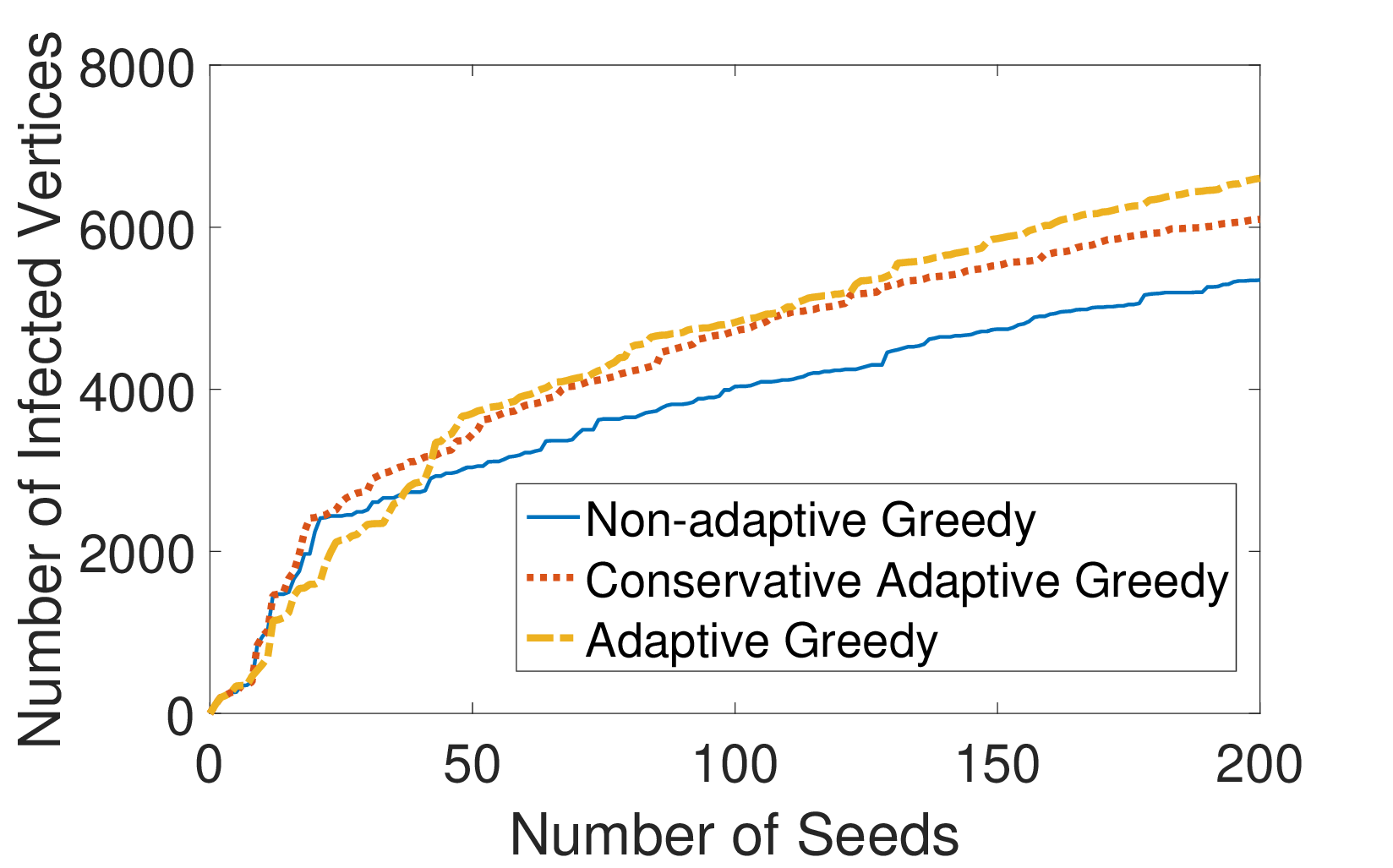}
    \caption{The results for the dataset CA-HepPh. The three rows correspond to the three realizations $\phi_1,\phi_2,\phi_3$, the left column is for \ICM, and the right column is for \LTM.}%
    \label{fig:CA-HepPh}
\end{figure}

\begin{figure}
    \centering
    \includegraphics[width=0.45\textwidth]{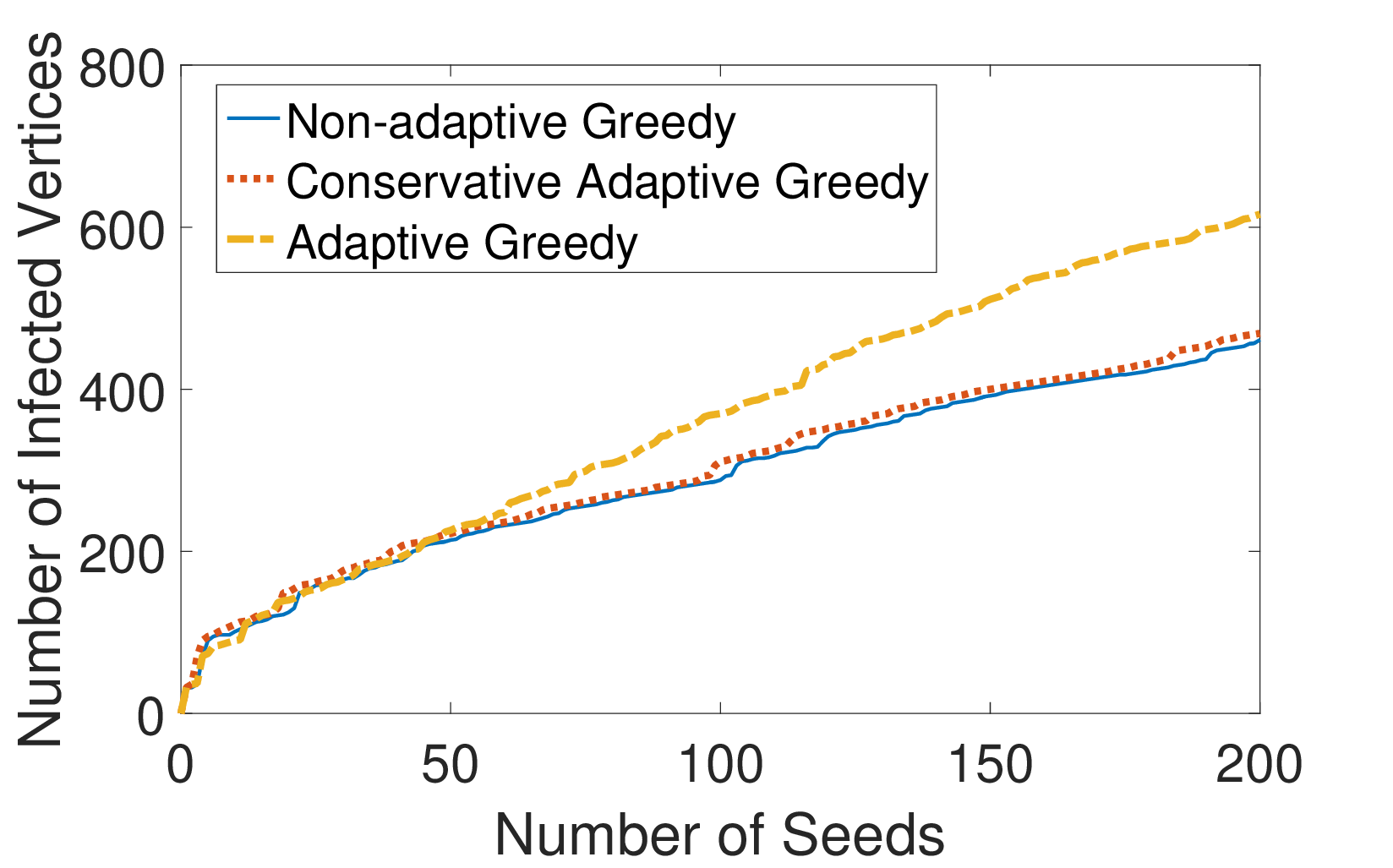}
    \includegraphics[width=0.45\textwidth]{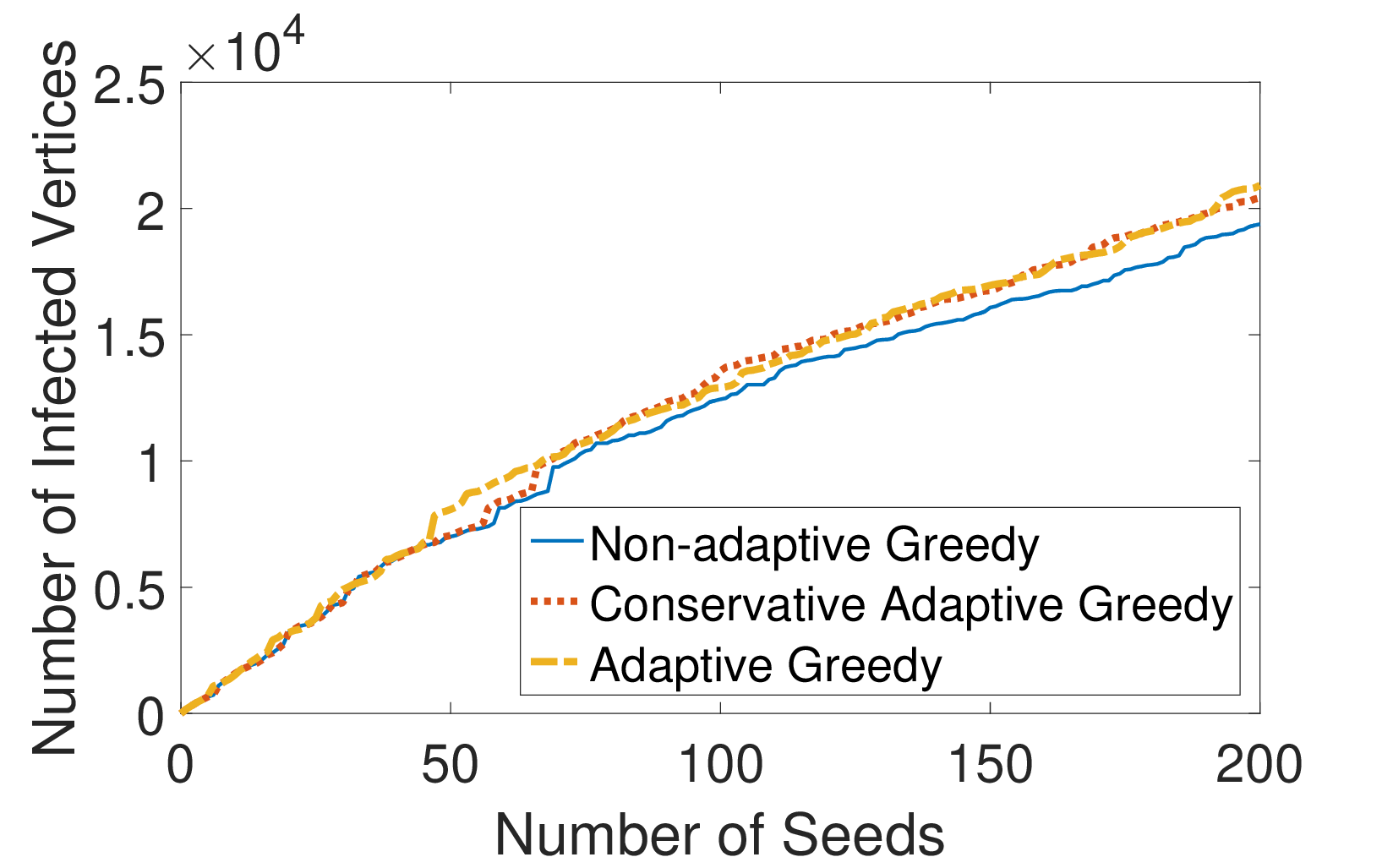}
    \includegraphics[width=0.45\textwidth]{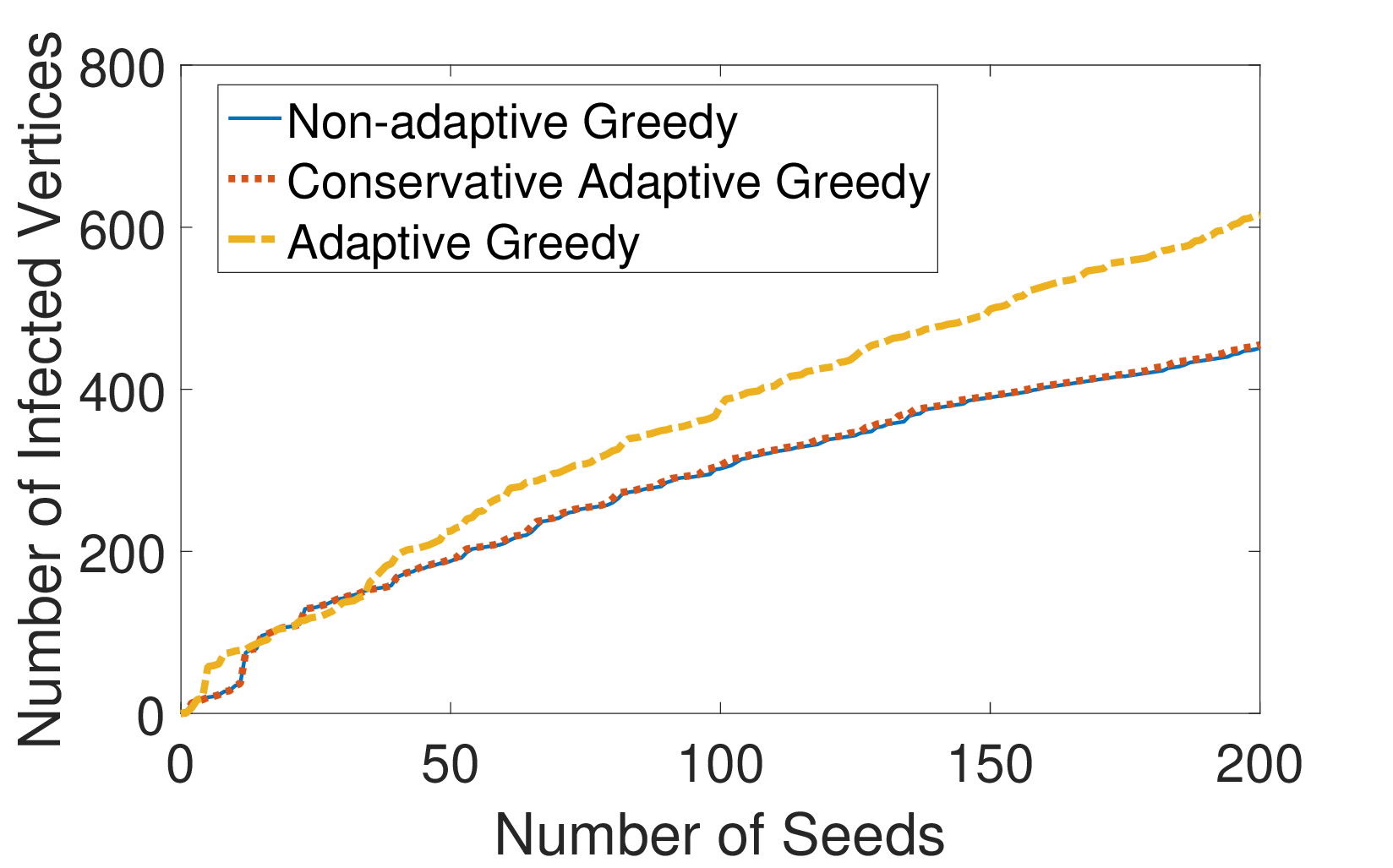}
    \includegraphics[width=0.45\textwidth]{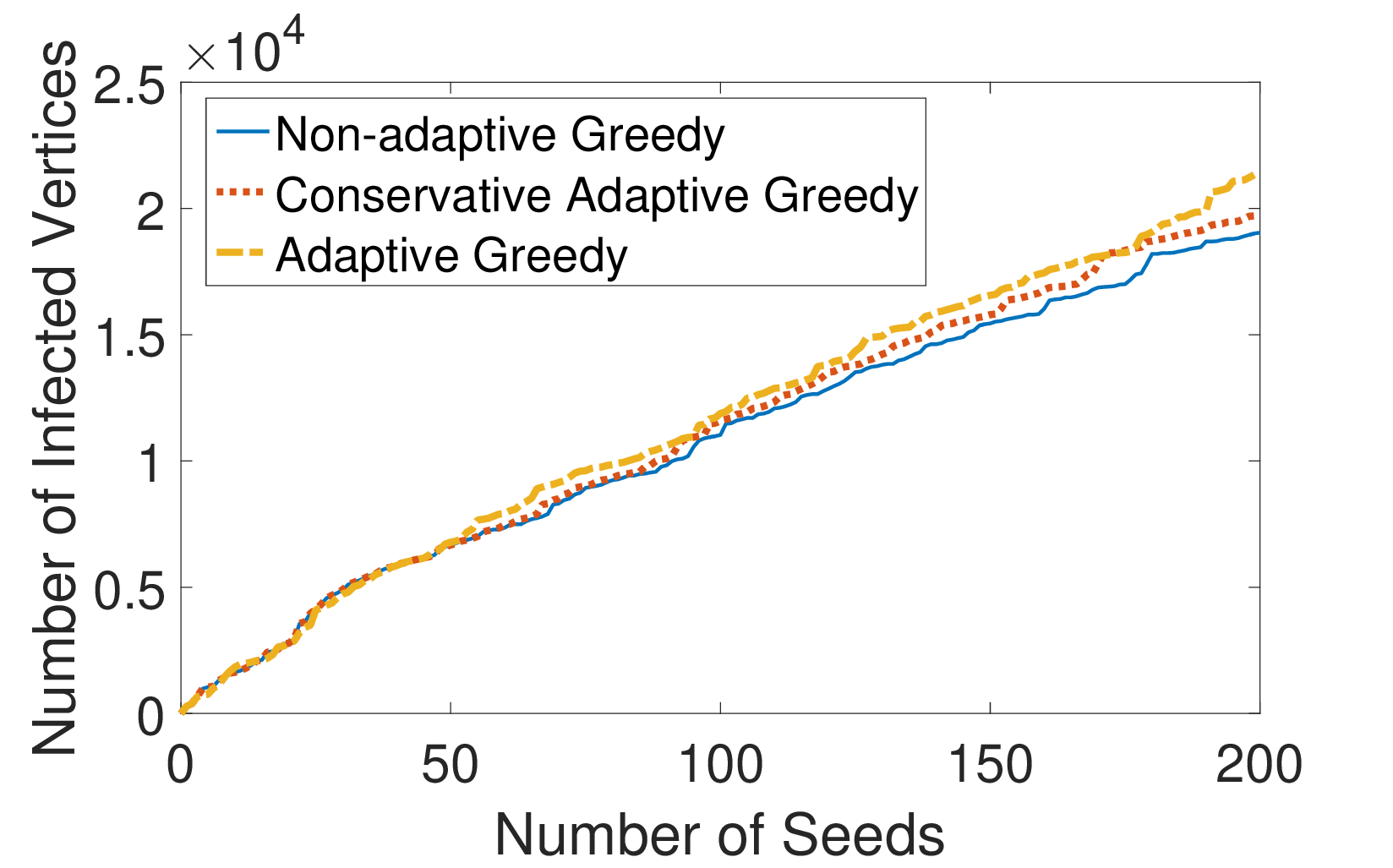}
    \includegraphics[width=0.45\textwidth]{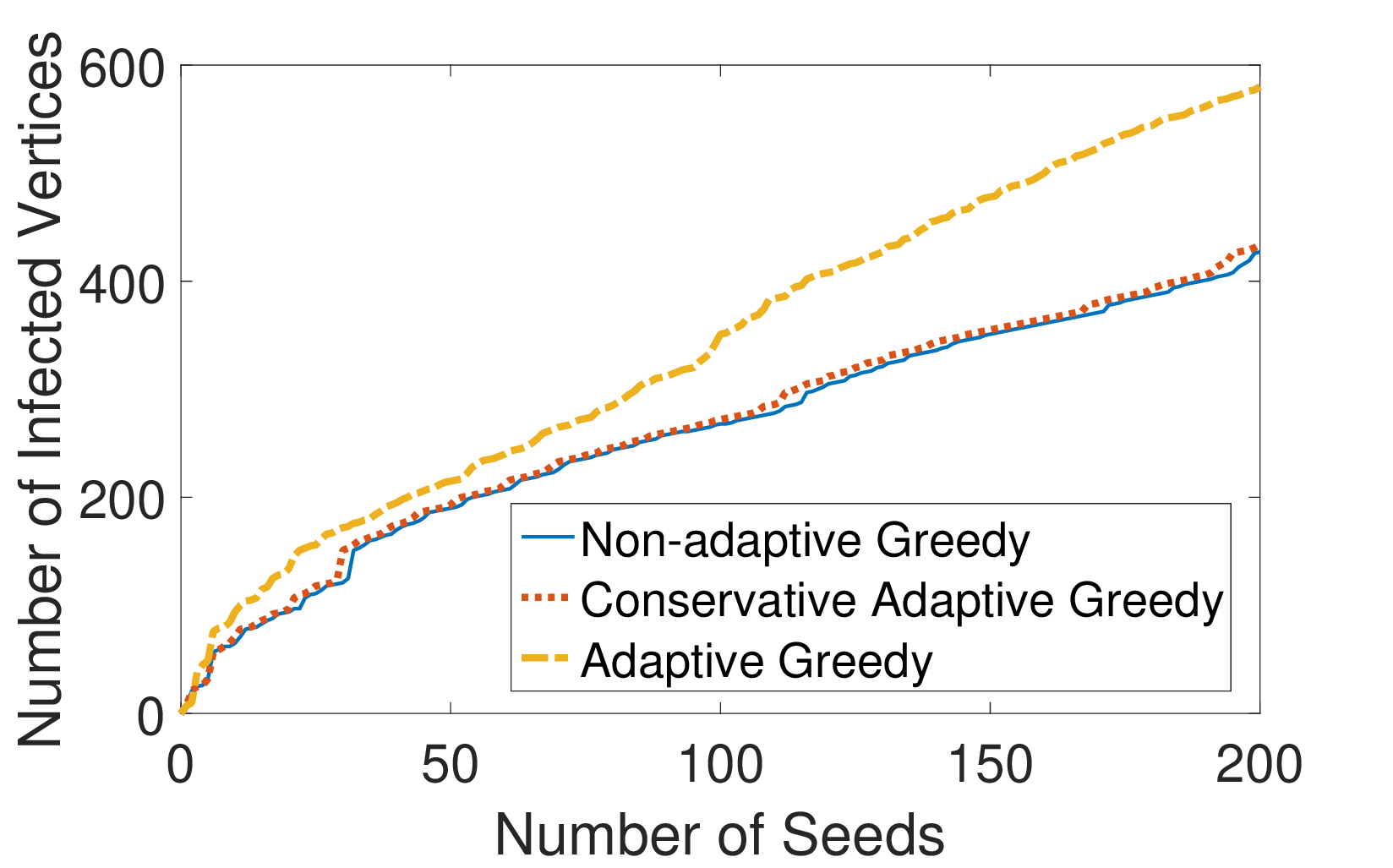}
    \includegraphics[width=0.45\textwidth]{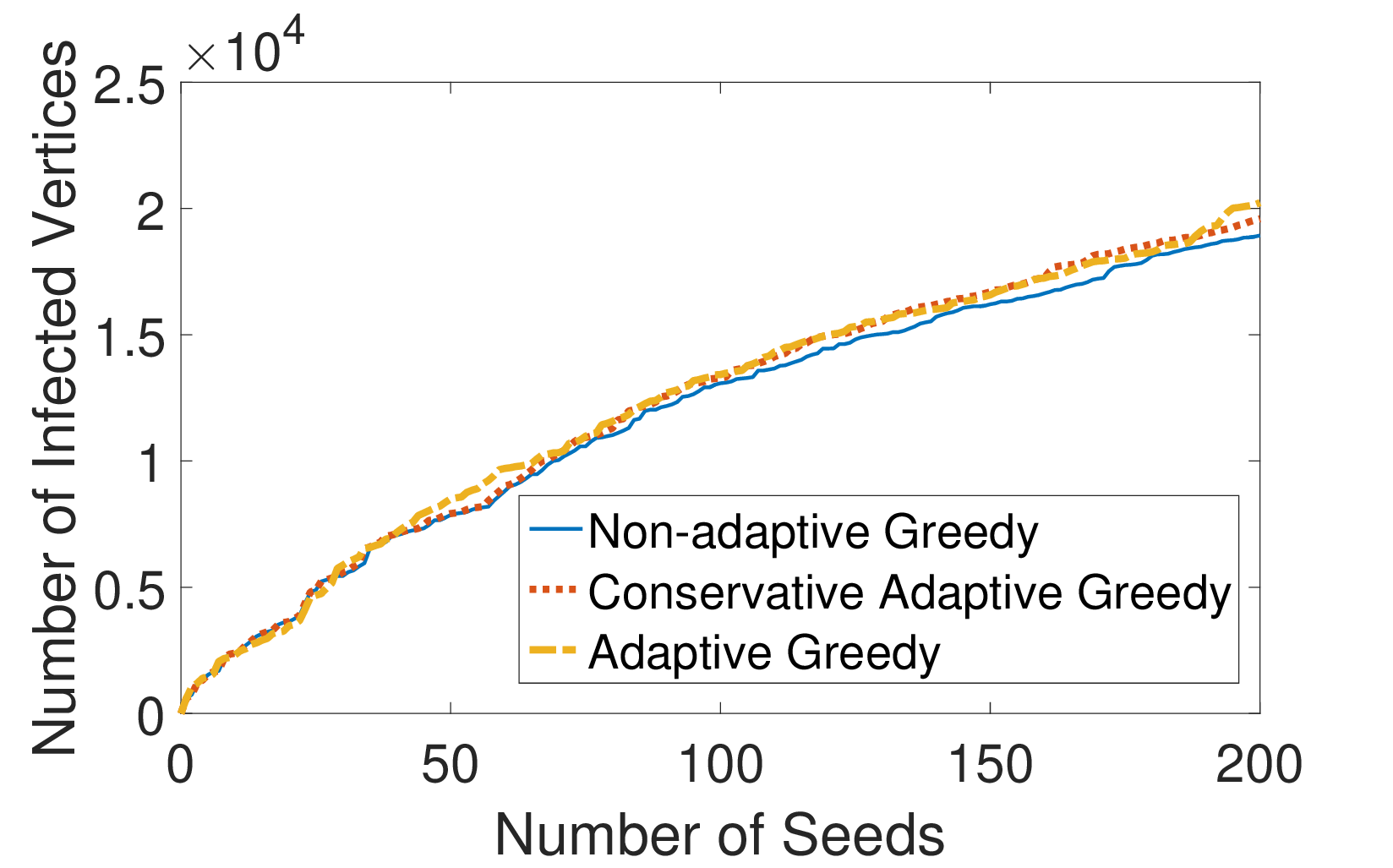}
    \caption{The results for the dataset DBLP. The three rows correspond to the three realizations $\phi_1,\phi_2,\phi_3$, the left column is for \ICM, and the right column is for \LTM.}%
    \label{fig:DBLP}
\end{figure}

\begin{figure}
    \centering
    \includegraphics[width=0.45\textwidth]{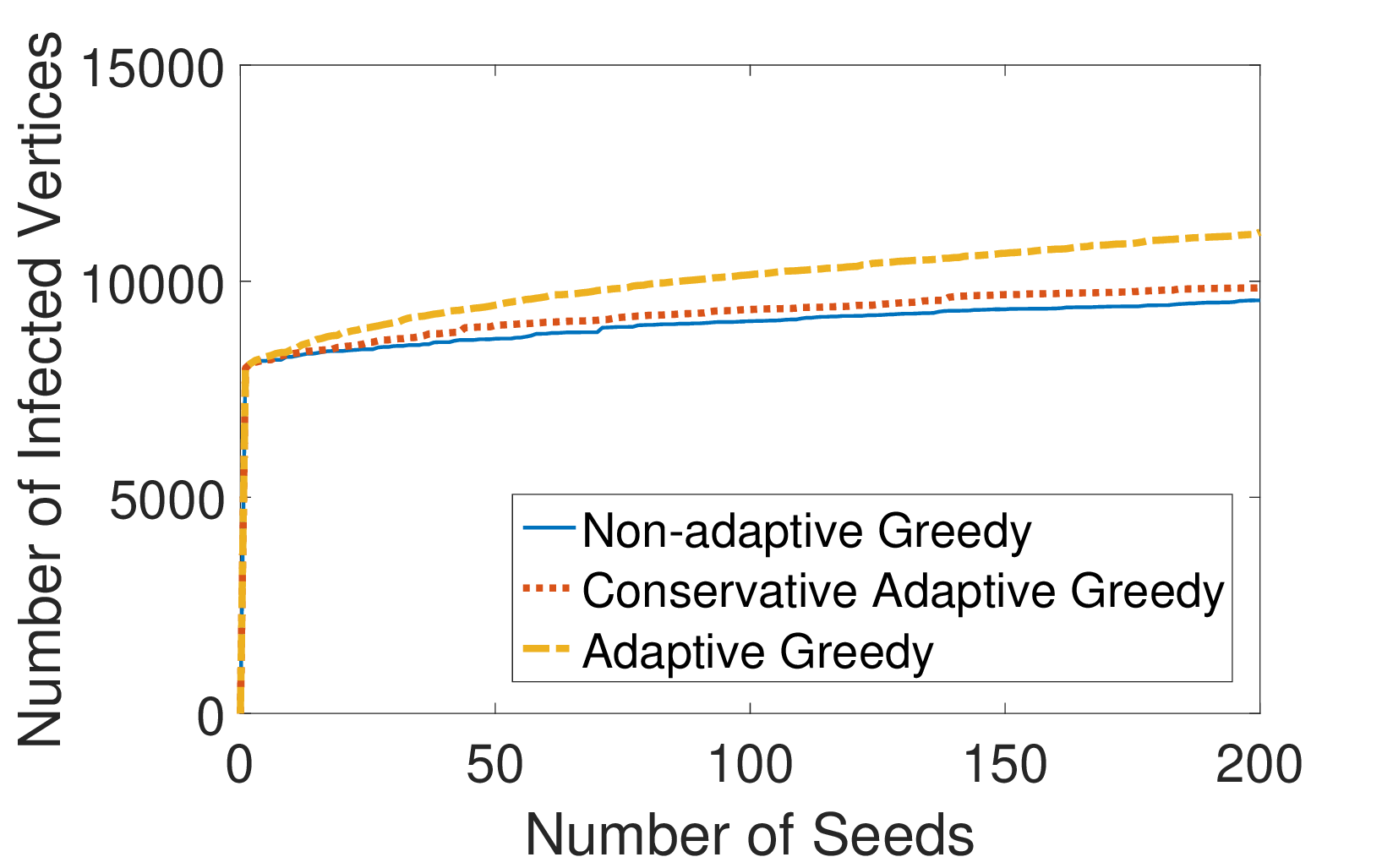}
    \includegraphics[width=0.45\textwidth]{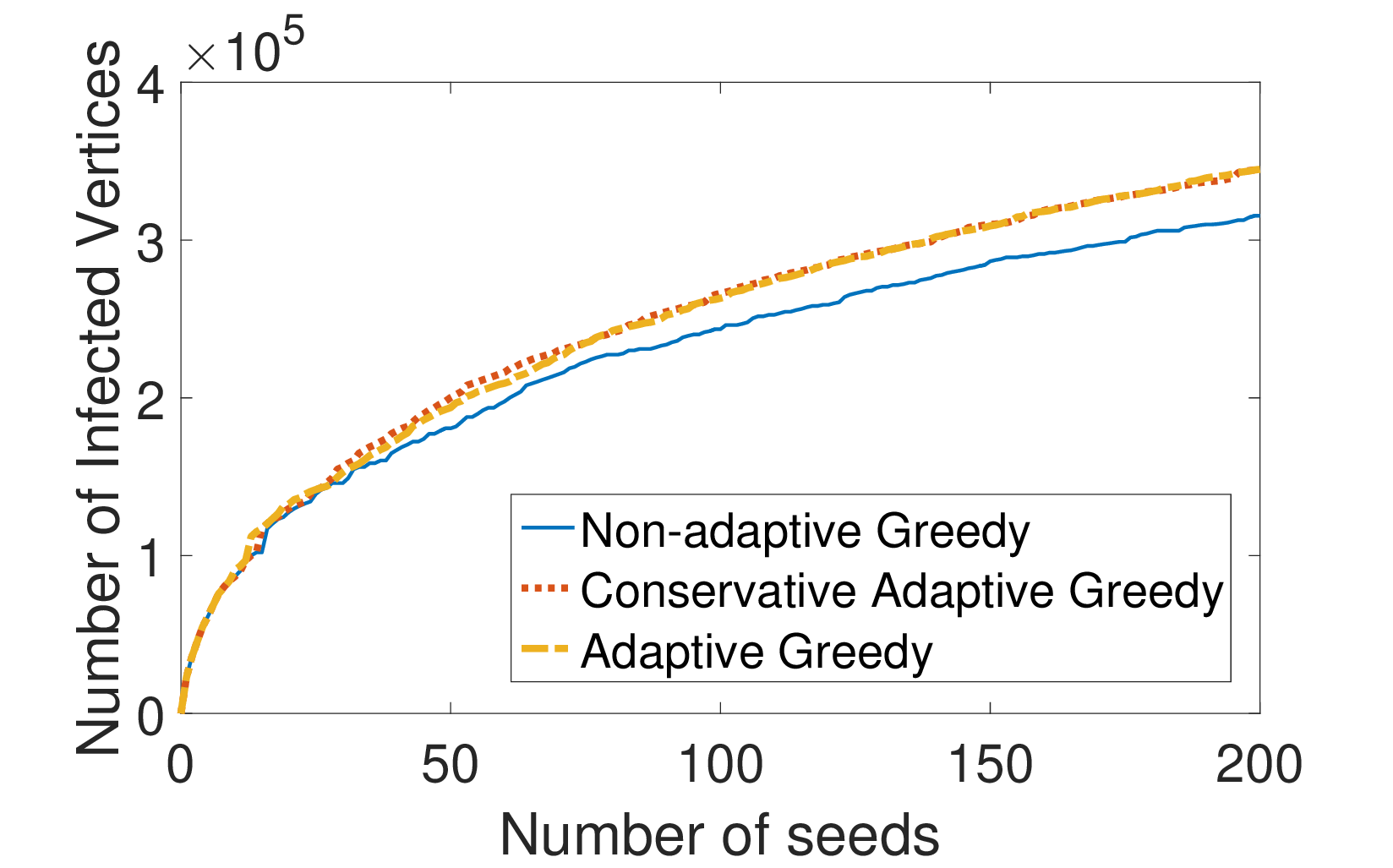}
    \includegraphics[width=0.45\textwidth]{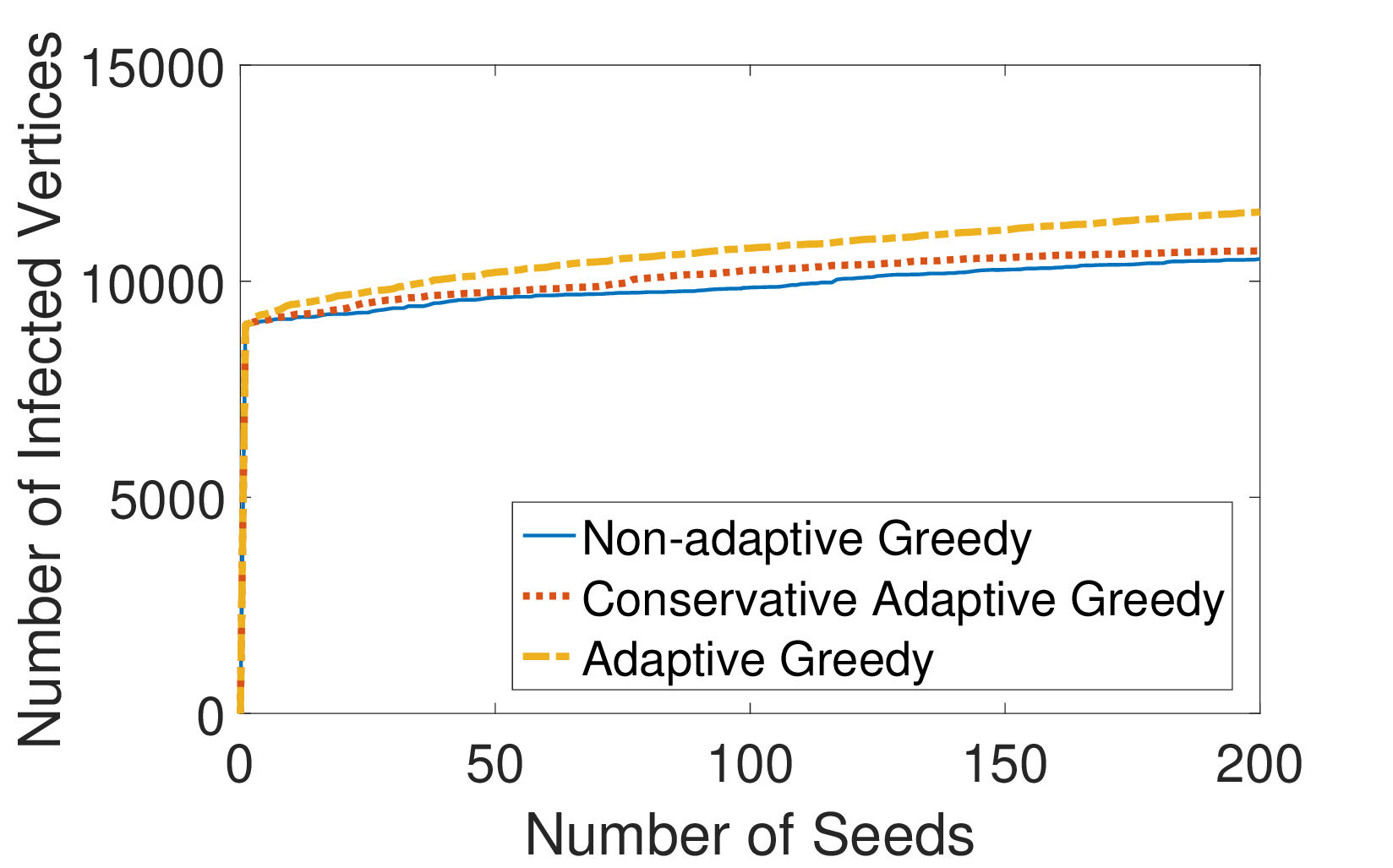}
    \includegraphics[width=0.45\textwidth]{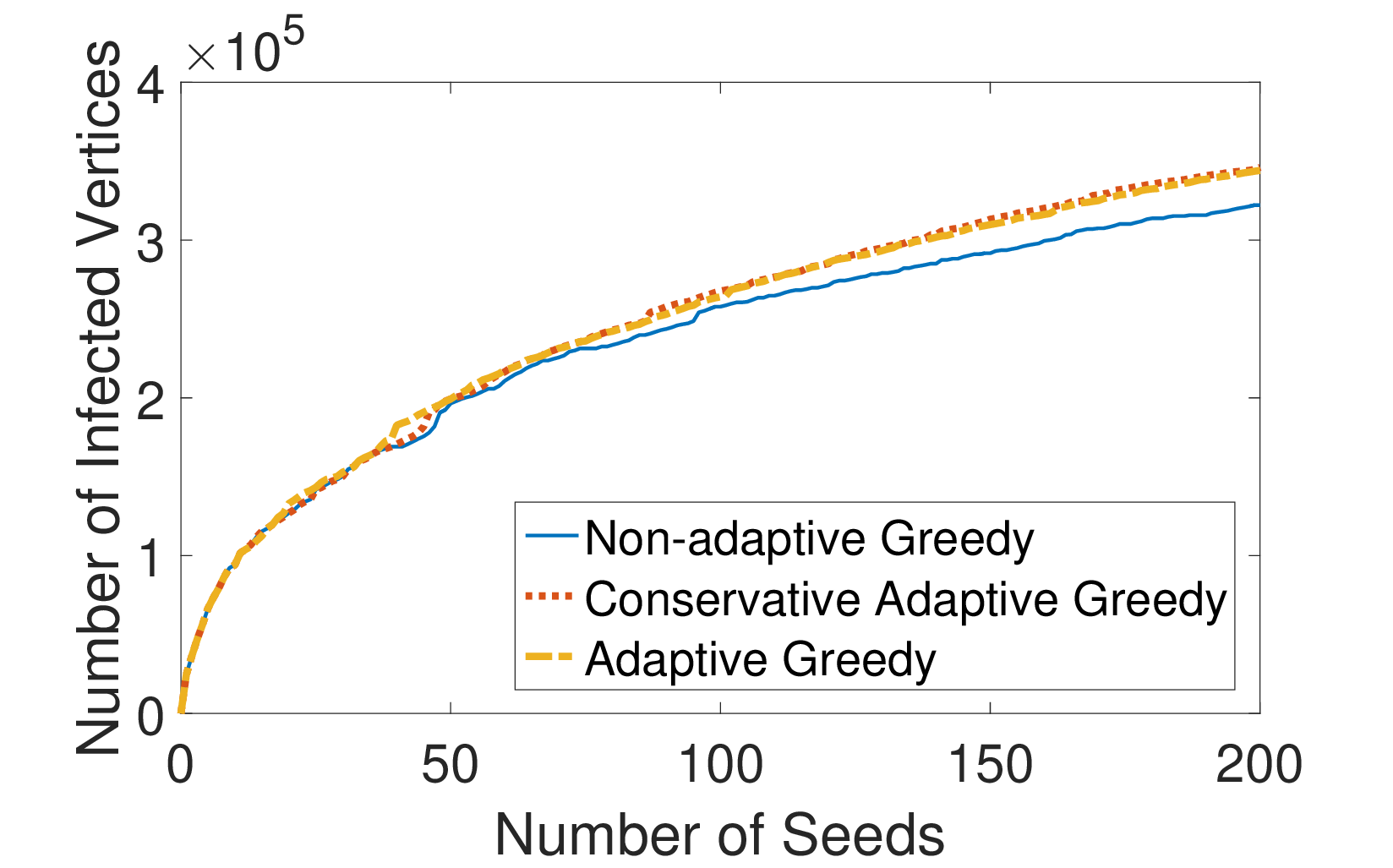}
    \includegraphics[width=0.45\textwidth]{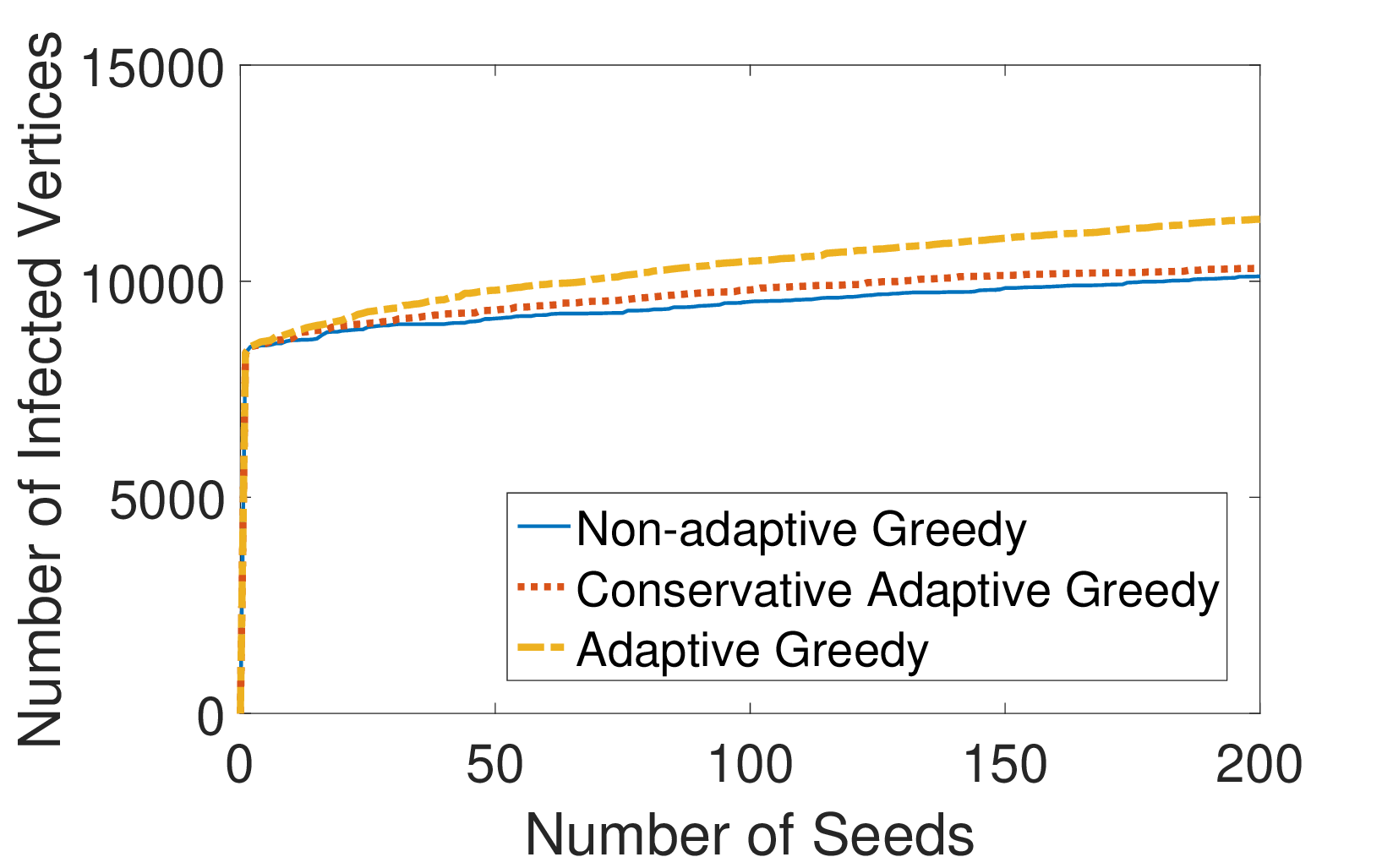}
    \includegraphics[width=0.45\textwidth]{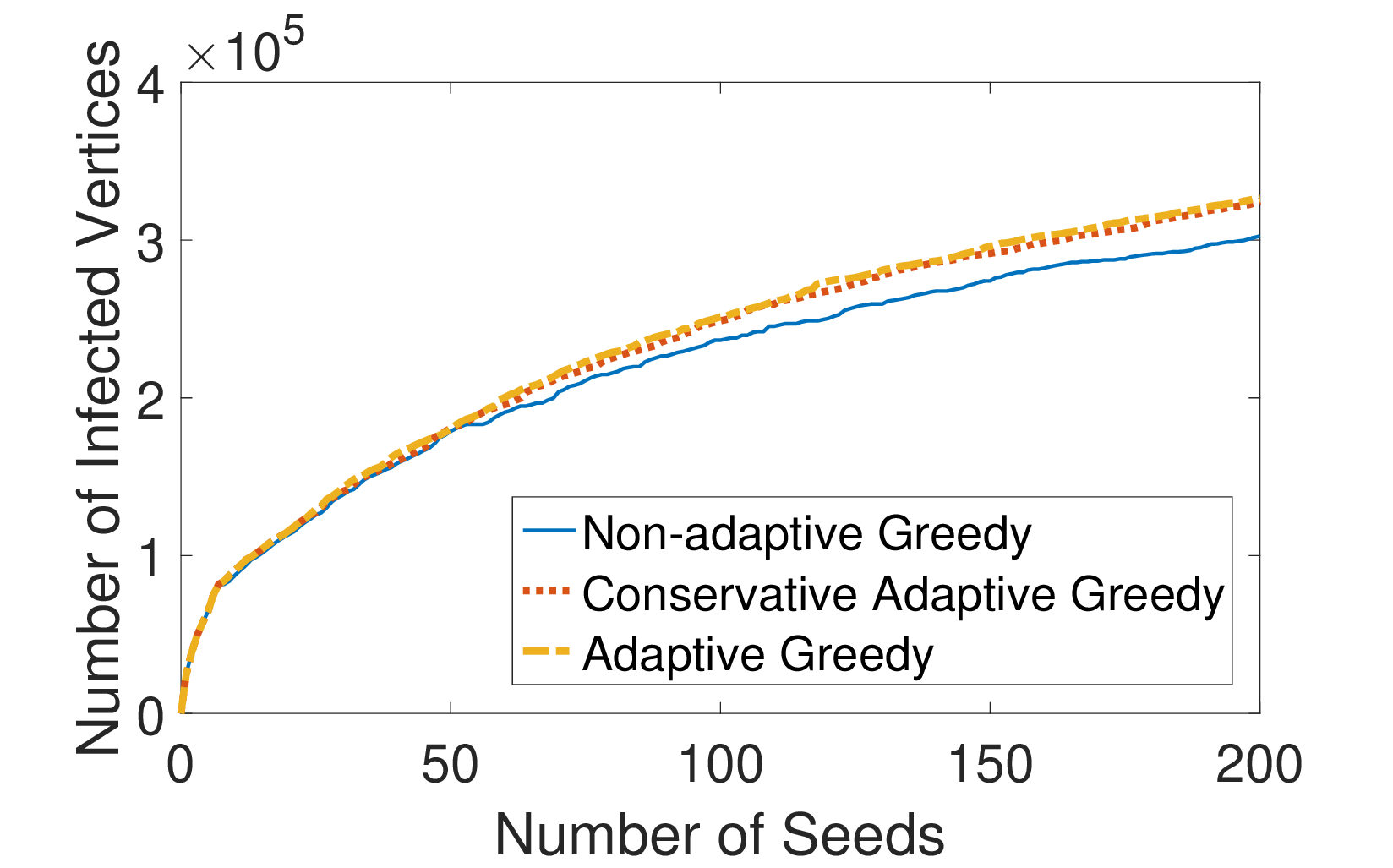}
    \caption{The results for the dataset com-YouTube. The three rows correspond to the three realizations $\phi_1,\phi_2,\phi_3$, the left column is for \ICM, and the right column is for \LTM.}%
    \label{fig:YouTube}
\end{figure}

\section{Conclusion and Open Problems}
\label{sect:conclusion}
We have seen that the infimum of the greedy adaptivity gap is exactly $(1-1/e)$ for \ICM, \LTM, and general triggering models with both the full-adoption feedback model and the myopic feedback model.
We have also seen that the supremum of this gap is infinity for the full-adoption feedback model.
One natural open problem is to find the supremum of the greedy adaptivity gap for the myopic feedback model.
Another natural open problem is to find the supremum of the greedy adaptivity gap for the more specific \ICM and \LTM.  

The greedy adaptivity gap studied in this paper is closely related to the adaptivity gap studied in the past.
Since the non-adaptive greedy algorithm is always a $(1-1/e)$-approximation of the non-adaptive optimal solution, a constant adaptivity gap implies a constant greedy adaptivity gap.
For example, the adaptivity gap for \ICM with myopic feedback is at most $4$ \shortcite{peng2019adaptive}, so the greedy adaptivity gap in the same setting is at most $\frac4{1-1/e}$.
In addition, the greedy adaptive policy is known to achieve a $(1-1/e)$-approximation to the adaptive optimal solution for \ICM with full-adoption feedback~\shortcite{golovin2011adaptive},
so the adaptivity gap and the greedy adaptivity gap could either be both constant or both unbounded for \ICM with full-adoption feedback model, but it remains open which case is true.
The adaptivity gap for \ICM with full-adoption feedback, as well as the adaptivity gap for \LTM with both feedback models, are all important open problems.
We believe these problems can be studied together with the greedy adaptivity gap.

\section*{Acknowledgements}
A significantly shorter version of this paper is available in AAAI'20:

https://ojs.aaai.org/index.php/AAAI/article/view/5398.

Grant Schoenebeck and Biaoshuai Tao are pleased to acknowledge the support of National Science Foundation AitF \#1535912 and CAREER \#1452915.
Biaoshuai Tao is pleased to acknowledge the support of National Natural Science Foundation of China Grant 62102252.

\appendix
\section{Original Definitions and Some Intuitions for \ICM and \LTM}
\label{append:originaldefinition}
In the original definition, \ICM is defined such that each vertex $u$ attempts only once to infect each of its not-yet-infected out-neighbor $v$ with probability $w(u,v)$.

\begin{definition}
The \emph{independent cascade model} $IC_G$ is defined by a directed edge-weighted graph $G=(V,E,w)$ such that $w(u,v)\leq1$ for each $(u,v)\in E$.
On input seed set $S\subseteq V$, $IC_G(S)$ outputs a set of infected vertices as follows:
\begin{enumerate}
    \item Initially, only vertices in $S$ are infected.
    \item In each subsequent round, each vertex $u$ infected in the previous round infects each (not yet infected) out-neighbor $v$ with probability $w(u,v)$ independently.
    \item After a round where there is no additional infected vertices, $IC_G(S)$ outputs the set of infected vertices.
\end{enumerate}
\end{definition}
It is straightforward to see that this definition is equivalent to Definition~\ref{def:ICM}.

The basic idea behind the original \LTM is that the influence from the in-neighbors of a vertex is additive.
\begin{definition}
The \emph{linear threshold model} $LT_G$ is defined by a directed edge-weighted graph $G=(V,E,w)$ such that $\sum_{u:u\in\Gamma(v)}w(u,v)\leq1$ for each $v\in V$.
On input seed set $S\subseteq V$, $LT_G(S)$ outputs a set of infected vertices as follows:
\begin{enumerate}
    \item Initially, only vertices in $S$ are infected, and for each vertex $v$ a \emph{threshold} $\theta_v$ is sampled uniformly at random from $[0,1]$ independently.
    \item In each subsequent round, a vertex $v$ becomes infected if 
    $$\sum_{u:u\in\Gamma(v)\text{ and }u\text{ is infected}}w(u,v)\geq\theta_v.$$
    \item After a round where there is no additional infected vertices, $LT_G(S)$ outputs the set of infected vertices.
\end{enumerate}
\end{definition}
\shortciteA{KempeKT03} showed that the definition above is equivalent to Definition~\ref{def:LTM}.
For an intuition of this, consider a not-yet-infected vertex $v$ and a set of its infected neighbors $\IN_v\subseteq\Gamma(v)$.
$v$ will be infected by vertices in $\IN_v$ with probability $\sum_{u:u\in\IN_v}w(u,v)$, as  $\Pr\left(\theta_v\leq\sum_{u:u\in \IN_v}w(u,v)\right)=\sum_{u:u\in \IN_v}w(u,v)$.
In the case where $v$ becomes infected, we can attribute its infection to exactly one of its infected neighbors.  The infection will be attributed to neighboring infected vertex $u$ with probability equal to $w(u, v)$ (in which case $T_v=\{u\}$).
Overall, the probability that $v$ includes an incoming edge from $\{(u,v):u\in\IN_v\}$ is exactly $\sum_{u:u\in\IN_v}w(u,v)$.

\section{On General Threshold Model}
\label{append:GTM}
In this section, we show that all our theoretical results in this paper hold for submodular general threshold model, a model that is more general than the triggering model.\footnote{In particular, a diffusion model that is captured by the general threshold model with submodular local influence functions but not the triggering model, named \emph{decreasing cascade model}, was discovered in the full version of \shortcite{KempeKT03}; this indicates that even the general threshold model with submodular local influence functions is strictly more general than the triggering model. \shortciteA{salek2010you} completely characterized the necessary and sufficient condition under which a general threshold model can be captured by a triggering model.}
In Section~\ref{append:GTM_definition}, we define the general threshold model, and we define the two feedback models, the full-adoption and the myopic, based on the general threshold model.
In Section~\ref{append:extension}, we justify that all our results in this paper hold for submodular general threshold model.
Notice that, however, our empirical results in Section~\ref{sect:experiments} depend on the reverse reachable set technique, which is only compatible with the triggering model.

\subsection{General Threshold Model and Feedback}
\label{append:GTM_definition}
\begin{definition}[\shortciteA{KempeKT03}]\label{def:GTM2}
  The \emph{general threshold model}, $I_{G,F}$, is defined by a graph $G=(V,E)$ and for each vertex $v$
  a monotone \emph{local influence function} $f_v:\{0, 1\}^{|\Gamma(v)|} \to[0,1]$ with $f_v(\emptyset) = 0$.
  Let $F=\{f_v\mid v\in V\}$.

  On an input seed set $S \subseteq V$,  $I_{G,F}(S)$ outputs a set of infected vertices as follows:
  \begin{enumerate}[noitemsep,nolistsep]
    \item[1.] Initially, only vertices in $S$ are infected, and for each vertex $v$ the threshold $\theta_v$ is sampled uniformly at random from the interval $(0,1]$ independently.
    \item[2.]  In each subsequent round, a vertex $v$ becomes infected if the influence of its infected in-neighbors, $\IN_v\subseteq\Gamma(v)$, exceeds its threshold: $f_v(\IN_v)\geq\theta_v$.
    \item[3.] After a round where no additional vertices are infected, the set of infected vertices is the output.
  \end{enumerate}
\end{definition}

$I_{G,F}$ in Definition~\ref{def:GTM2} can be viewed as a random function $I_{G,F}:\{0,1\}^{|V|}\to\{0,1\}^{|V|}$.
In addition, if the thresholds of all the vertices are fixed, this function becomes deterministic.
Correspondingly, we define a \emph{realization} of a graph $G=(V,E)$ as a function $\phi:V\to(0,1]$ which encodes the thresholds of all vertices.
Let $I_{G,F}^\phi:\{0,1\}^{|V|}\to\{0,1\}^{|V|}$ be the deterministic function corresponding to the general threshold model $I_{G,F}$ with vertices' thresholds following realization $\phi$.
We will interchangeably consider $\phi$ as a function defined above or a $|V|$ dimensional vector in $(0,1]^{|V|}$, and we write $\phi\sim(0,1]^{|V|}$ to mean a random realization is sampled such that each $\theta_v$ is sampled uniformly at random and independently as it is in Definition~\ref{def:GTM2}.

Like the triggering model, the general threshold model also captures the independent cascade and linear threshold models.
\begin{itemize}
    \item \ICM is a special case of the general threshold model $I_{G,F}$ where $G=(V,E,w)$ is an edge-weighted graph with $w(u,v)\in(0,1]$ for each $(u,v)\in E$ and $f_v(\IN_v)=1-\prod_{u\in \IN_v}(1-w(u,v))$ for each $f_v\in F$.
    \item \LTM is a special case of the general threshold model $I_{G,F}$ where $G=(V,E,w)$ is an edge-weighted graph with $w(u,v)>0$ for each $(u,v)\in E$ and $\sum_{u\in\Gamma(v)}w(u,v)\leq1$ for each $v\in V$ and $f_v(\IN_v)=\sum_{u\in \IN_v}w(u,v)$ for each $f_v\in F$.
\end{itemize}

Given a general threshold model $I_{G,F}$, the global influence function is then defined as $\sigma_{G,F}(S)=\E_{\phi\sim(0,1]^{|V|}}[|I_{G,F}^\phi(S)|]$.
\shortciteA{MosselR10} showed that $\sigma(\cdot)$ is monotone and submodular if each $f_v(\cdot)$ is monotone and submodular.
We normally say that a general threshold model $I_{G,F}$ is submodular if each $f_v\in F$ is submodular.
Notice that this implies $\sigma(\cdot)$ is submodular.

In the remaining part of this section, we define the \emph{full-adoption feedback model} and the \emph{myopic feedback model} corresponding to the general threshold model.

When the seed-picker sees that a vertex $v$ is not infected ($v$ may be a vertex adjacent to $I_{G,F}^\phi(S)$ in the full-adoption feedback model, or a vertex adjacent to $S$ in the myopic feedback model), the seed-picker has certain partial information about $v$'s threshold.
Specifically, let $\IN_v$ be $v$'s infected in-neighbors that are observed by the seed-picker.
By seeing that $v$ is not infected, the seed-picker knows that the threshold of $v$ is in the range $(f_v(\IN_v),1]$, and the posterior distribution of $\theta_v$ is the uniform distribution on this range.

Let the \emph{level} of a vertex $v$, denoted by $o_v$, be a value which either equals a character $\checkmark$ indicating that it is infected, or a real value $\vartheta_v\in[0,1]$ indicating that $\theta_v\in(\vartheta_v,1]$.
Let $O=\{\checkmark\}\cup[0,1]$ be the space of all possible levels.
A \emph{partial realization} $\varphi$ is a function specifying a level for each vertex: $\varphi:V\to O$.
We say that a partial realization $\varphi$ \emph{is consistent with} the full realization $\phi$, denoted by $\phi\simeq\varphi$, if $\phi(v)>\varphi(v)$ for any $v\in V$ such that $\varphi(v)\neq\checkmark$.

\begin{definition}
  Given a general threshold model $I_{G=(V,E),F}$ with a realization $\phi$, the \emph{full-adoption feedback} is a function $\Phi_{G,F,\phi}^{\f}$ mapping a seed set $S\subseteq V$ to a partial realization $\varphi$ such that
  \begin{itemize}
      \item $\varphi(v)=\checkmark$ for each $v\in I_{G,F}^\phi(S)$, and
      \item $\varphi(v)=f_v(I_{G,F}^\phi(S)\cap\Gamma(v))$ for each $v\notin I_{G,F}^\phi(S)$.
  \end{itemize}
\end{definition}

\begin{definition}
  Given a general threshold model $I_{G=(V,E),F}$ with a realization $\phi$, the \emph{myopic feedback} is a function $\Phi_{G,F,\phi}^\m$ mapping a seed set $S\subseteq V$ to a partial realization $\varphi$ such that
  \begin{itemize}
      \item $\varphi(v)=\checkmark$ for each $v\in S$, and
      \item for each $v\notin S$, $\varphi(v)=\checkmark$ if $f_v(S\cap\Gamma(v))\geq\phi(v)$, and $\varphi(v)=f_v(S\cap\Gamma(v))$ if $f_v(S\cap\Gamma(v))<\phi(v)$.
  \end{itemize}
\end{definition}

Notice that, in both definitions above, a vertex $v$ that does not have any infected neighbor (i.e., $v\notin S$ such that $I_{G,F}^\phi(S)\cap\Gamma(v)=\emptyset$ for the full-adoption feedback model or $S\cap\Gamma(v)=\emptyset$ for the myopic feedback model) always satisfies $\varphi(v)=0$, as $f_v(\emptyset)=0$ by Definition~\ref{def:GTM2}.

After properly defining the two feedback models, the definition of the adaptive policy $\pi$, as well as the definitions of the functions $\Seed^\f(\cdot,\cdot,\cdot),\Seed^\m(\cdot,\cdot,\cdot),\sigma^\f(\cdot,\cdot),\sigma^\m(\cdot,\cdot)$, are exactly the same as they are in Section~\ref{sect:Prelim_infmax}.
The definitions of the adaptivity gap and the greedy adaptivity gap are also the same as they are in Section~\ref{sect:prelim_AG}.

\subsection{Extending of Our Results to General Threshold Model}
\label{append:extension}
We will show in this section that all our results can be extended to the submodular general threshold model.
Recall that a general threshold model is submodular means that all the local influence functions $f_v$'s are submodular.
In this section, whenever we write $I_{G,F}$, we refer to the general threshold model in Definition~\ref{def:GTM2}, not the triggering model in Definition~\ref{def:GTM}.

\subsubsection{Infimum of Greedy Adaptivity Gap}
Theorem~\ref{thm:inf_gap} is extended as follows.
\begin{theorem}\label{thm:inf_gap2}
For the full-adoption feedback model,
$$\inf_{G,F,k:\mbox{ }I_{G,F}\text{ is \ICM}}\frac{\sigma^\f(\pi^g,k)}{\sigma(S^g(k))}=\inf_{G,F,k:\mbox{ }I_{G,F}\text{ is \LTM}}\frac{\sigma^\f(\pi^g,k)}{\sigma(S^g(k))}$$
$$\qquad=\inf_{G,F,k:\mbox{ }I_{G,F}\text{ is submodular}}\frac{\sigma^\f(\pi^g,k)}{\sigma(S^g(k))}=1-\frac1e.$$
The same result holds for the myopic feedback model.
\end{theorem}

Recall that Theorem~\ref{thm:inf_gap} can be easily implied by Lemma~\ref{lem:tightICM}, Lemma~\ref{lem:tightLTM} and Theorem~\ref{thm:lowerbound}.
Since Lemma~\ref{lem:tightICM} and Lemma~\ref{lem:tightLTM} are for specific models \ICM and \LTM which are compatible with both the triggering model and the general threshold model, their validity here is clear.
Following the same arguments, Theorem~\ref{thm:inf_gap2} can be implied by Lemma~\ref{lem:tightICM}, Lemma~\ref{lem:tightLTM} and the following theorem which is the counterpart to Theorem~\ref{thm:lowerbound}.

\begin{theorem}\label{thm:lowerbound2}
If $I_{G,F}$ is a submodular general threshold model, then we have both
$$\sigma^\f(\pi^g,k)\geq\left(1-\frac1e\right)\max_{S\subseteq V,|S|\leq k}\sigma(S)
\qquad\mbox{and}\qquad
\sigma^\m(\pi^g,k)\geq\left(1-\frac1e\right)\max_{S\subseteq V,|S|\leq k}\sigma(S).$$
\end{theorem}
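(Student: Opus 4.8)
The plan is to follow the exact three-step structure used to prove Theorem~\ref{thm:lowerbound}. The two combinatorial steps---the analogs of Proposition~\ref{prop:marginal} and Proposition~\ref{prop:generallowerbound}---are model-agnostic: they invoke only the greedy choice rule defining $\pi^g$, the submodularity of the conditioned influence function, and a telescoping/induction argument, none of which refers to the triggering structure. Hence those two steps carry over verbatim once the analog of Proposition~\ref{prop:partial_submodular} is established for the general threshold model. So the whole content of the proof reduces to showing: for a submodular general threshold model $I_{G,F}$, any $S\subseteq V$, either feedback model, and any valid partial realization $\varphi$ of $S$, the function $\mathcal{T}(X)=\E_{\phi\simeq\varphi}[|I_{G,F}^\phi(S\cup X)|]$ is submodular.

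To prove this I would condition the feedback away into a fresh submodular general threshold model, mirroring the coupling in Proposition~\ref{prop:partial_submodular} but replacing the live/blocked edge bookkeeping with threshold rescaling. Let $\overline{S}=\{v:\varphi(v)=\checkmark\}$ be the observed-infected set. For each uninfected vertex $v$ (with $\varphi(v)\neq\checkmark$) let $B_v\subseteq\Gamma(v)$ be the observed infected in-neighbors that determine the feedback level, so $B_v=\overline{S}\cap\Gamma(v)$ in the full-adoption model and $B_v=S\cap\Gamma(v)$ in the myopic model, and set $c_v=\varphi(v)=f_v(B_v)<1$ (strictly below $1$, since $v$ being observed uninfected forces $\theta_v\in(c_v,1]$, a nonempty set). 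I define $I_{G',F'}$ on the same vertex set by deleting the in-edges of $v$ coming from $B_v$, so that $\Gamma'(v)=\Gamma(v)\setminus B_v$, and by setting
$$f_v'(\IN)=\frac{f_v(\IN\cup B_v)-c_v}{1-c_v},\qquad \IN\subseteq\Gamma'(v).$$
One checks this is a legitimate submodular general threshold model: $f_v'(\emptyset)=0$, $f_v'$ is monotone with values in $[0,1]$ (using monotonicity and $f_v\leq1$), and $f_v'$ is submodular because $\IN\mapsto f_v(\IN\cup B_v)$ is a contraction of the submodular $f_v$ and the affine map $x\mapsto(x-c_v)/(1-c_v)$ has positive slope, so it preserves the diminishing-returns property.

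The heart of the argument is the coupling identity $\mathcal{T}(X)=\sigma_{G',F'}(\overline{S}\cup X)$. I would couple each realization $\phi\simeq\varphi$ to a realization $\phi'$ of $G'$ by the linear rescaling $\theta_v'=(\theta_v-c_v)/(1-c_v)$ for $v\notin\overline{S}$; since $\theta_v$ is conditionally uniform on $(c_v,1]$, the image $\theta_v'$ is uniform on $(0,1]$, matching the fresh sampling in $I_{G',F'}$. Because observed-infected vertices are genuinely infected by $S$ we have $\overline{S}\subseteq I_{G,F}^\phi(S)$, from which a short idempotence argument gives $I_{G,F}^\phi(S\cup X)=I_{G,F}^\phi(\overline{S}\cup X)$, so we may start the diffusion from $\overline{S}\cup X$ and treat $\overline{S}$ as seeds. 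An induction on diffusion rounds then shows the infection sets of $I_{G,F}^\phi(\overline{S}\cup X)$ and $I_{G',F'}^{\phi'}(\overline{S}\cup X)$ coincide: a not-yet-infected $v\notin\overline{S}$ with current infected in-neighbor set $\IN_v$ satisfies $\theta_v\leq f_v(\IN_v)$ iff $\theta_v'\leq f_v'(\IN_v\cap\Gamma'(v))$, since $(\IN_v\cap\Gamma'(v))\cup B_v=\IN_v$ (because $B_v\subseteq\overline{S}\subseteq\IN_v$) and the rescaling is designed precisely so the two threshold comparisons agree. Taking $\phi$-expectations yields the identity, and since $\sigma_{G',F'}$ is submodular by \citet{MosselR10}, $\mathcal{T}(X)=\sigma_{G',F'}(\overline{S}\cup X)$ is submodular by restricting to the coset $\overline{S}\cup X$, exactly as at the end of Proposition~\ref{prop:partial_submodular}.

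The main obstacle is this coupling: unlike the triggering model, the conditioning here acts on continuous thresholds, so the reduction hinges on the affine rescaling and on verifying that submodularity survives the contraction-plus-affine transformation defining $f_v'$. The second delicate point is correctly identifying $B_v$ (and hence which vertices become seeds in $I_{G',F'}$) in the two feedback models, so that the information revealed by $\varphi$ is captured exactly by restricting $\theta_v$ to $(c_v,1]$ while all unobserved in-neighbors stay freshly random; once $B_v$ is pinned down the round-by-round induction runs uniformly for full-adoption and myopic feedback, and Propositions~\ref{prop:marginal} and \ref{prop:generallowerbound} then deliver Theorem~\ref{thm:lowerbound2} without change.
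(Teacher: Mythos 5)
Your proposal is correct and follows essentially the same route as the paper's proof: reduce Theorem~\ref{thm:lowerbound2} to the general-threshold analog of Proposition~\ref{prop:partial_submodular} (with Propositions~\ref{prop:marginal} and~\ref{prop:generallowerbound} carrying over verbatim), then absorb the feedback into a fresh submodular general threshold model via the affine rescalings $f_v'(\cdot)=\bigl(f_v(\cdot\cup B_v)-\varphi(v)\bigr)/\bigl(1-\varphi(v)\bigr)$ and $\theta_v'=(\theta_v-\varphi(v))/(1-\varphi(v))$, run the round-by-round coupling, and invoke the submodularity result of \citet{MosselR10}. The only differences are cosmetic: the paper deletes the observed-infected set $T$ and states the identity as $\sigma_{G',F'}(X\setminus T)+|T|$ while you keep those vertices as seeds and write $\sigma_{G',F'}(\overline{S}\cup X)$, and your feedback-model-specific choice of $B_v$ (namely $S\cap\Gamma(v)$ under myopic feedback) is in fact slightly more careful than the paper's uniform use of $\Gamma(v)\cap T$.
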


Similar to the proof of Theorem~\ref{thm:lowerbound}, Theorem~\ref{thm:lowerbound2} can be proved by showing the three propositions: Proposition~\ref{prop:partial_submodular}, Proposition~\ref{prop:marginal} and Proposition~\ref{prop:generallowerbound}.
It is straightforward to check that Proposition~\ref{prop:marginal} and Proposition~\ref{prop:generallowerbound} hold for the general threshold model with exactly the same proofs.
Now, it remains to extend Proposition~\ref{prop:partial_submodular} to the general threshold model, which is restated and proved below.

\begin{proposition}
Given a submodular general threshold model $I_{G,F}$, any $S\subseteq V$, any feedback model (either full-adoption or myopic) and any partial realization $\varphi$ that is a valid feedback of $S$ (i.e., $\exists\phi:\varphi=\Phi_{G,F,\phi}^\f(S)$ or $\exists\phi:\varphi=\Phi_{G,F,\phi}^\m(S)$, depending on the feedback model considered), the function $\mathcal{T}:\{0,1\}^{|V|}\to\R_{\geq0}$ defined as $\mathcal{T}(X)=\E_{\phi\simeq\varphi}[|I_{G,F}^\phi(S\cup X)|]$ is submodular.
\end{proposition}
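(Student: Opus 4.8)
The plan is to reprise the coupling argument behind Proposition~\ref{prop:partial_submodular}, replacing the per-edge renormalization of triggering sets by a per-vertex renormalization of thresholds. First I would let $\overline{S}=\{v:\varphi(v)=\checkmark\}$ be the set of vertices reported infected by the feedback, and for each $v\notin\overline{S}$ let $B_v$ be the set of $v$'s infected in-neighbors recorded by $\varphi$ (so $B_v=\overline{S}\cap\Gamma(v)$ under full adoption and $B_v=S\cap\Gamma(v)$ under myopic feedback), and write $\beta_v=\varphi(v)=f_v(B_v)$. I would then build a new general threshold model $I_{G',F'}$ on the same vertex set, in which the edges into $v$ from $B_v$ are deleted (their effect is baked in) and the residual local influence function on the remaining in-neighbors is
\[
f_v'(D)=\frac{f_v(B_v\cup D)-f_v(B_v)}{1-f_v(B_v)},\qquad D\subseteq\Gamma(v)\setminus B_v ,
\]
which is exactly the conditional probability that $v$ fires, given $\theta_v>\beta_v$ and that its newly infected in-neighbors form the set $D$. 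The seed set in the new model will be $\overline{S}\cup X$.

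Next I would verify that each $f_v'$ is again a valid submodular local influence function: $f_v'(\emptyset)=0$, monotonicity, and the bound $f_v'\le 1$ follow from $f_v\le 1$ and monotonicity of $f_v$, while submodularity is inherited because $D\mapsto f_v(B_v\cup D)$ is submodular whenever $f_v$ is, a property preserved by subtracting a constant and dividing by the positive constant $1-f_v(B_v)$. Thus $I_{G',F'}$ is a submodular general threshold model, so by \citet{MosselR10} its global influence function $\sigma_{G',F'}$ is monotone and submodular.

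The crux is the identity $\mathcal{T}(X)=\sigma_{G',F'}(\overline{S}\cup X)$, which I would obtain from the substitution $\theta_v'=(\theta_v-\beta_v)/(1-\beta_v)$ that carries a threshold uniform on $(\beta_v,1]$ to one uniform on $(0,1]$; under this substitution the residual cascade of $I_{G,F}$ from $S\cup X$ matches vertex-by-vertex the cascade of $I_{G',F'}$ from $\overline{S}\cup X$. The step that makes this legitimate, and the one I expect to be the main obstacle, is showing that conditioned on the feedback the thresholds of the uninfected vertices are \emph{independent} and uniform on $(\beta_v,1]$. For full adoption I would prove this by factoring the event $\{I_{G,F}^\phi(S)=\overline{S}\}$ into a \emph{closure} condition, $\theta_v>f_v(\overline{S}\cap\Gamma(v))$ for every $v\notin\overline{S}$, together with an \emph{achievability} condition stating that the cascade from $S$ infects all of $\overline{S}$; the former constrains only the thresholds of $V\setminus\overline{S}$ and the latter only those of $\overline{S}$, so the posterior factorizes and leaves the uninfected thresholds independent uniform on $(\beta_v,1]$, while achievability together with monotonicity gives $\overline{S}\subseteq I_{G,F}^\phi(S\cup X)$. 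This is precisely where the general threshold model departs from the triggering model: revealing edge statuses pins down the infected set so that the coupling in Proposition~\ref{prop:partial_submodular} is essentially definitional, whereas here the feedback reveals only threshold lower bounds and the product structure of the posterior must be argued. For myopic feedback this factorization is immediate, since each revealed condition $\theta_v>f_v(S\cap\Gamma(v))$ constrains a single, independent threshold.

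Once the identity $\mathcal{T}(X)=\sigma_{G',F'}(\overline{S}\cup X)$ is established, submodularity of $\mathcal{T}$ follows exactly as in Proposition~\ref{prop:partial_submodular}: for any $A\subsetneq B$ and $u\notin B$, if $u\notin\overline{S}$ the desired inequality is the submodularity of $\sigma_{G',F'}$ applied to $\overline{S}\cup A\subsetneq\overline{S}\cup B$ with the element $u\notin\overline{S}\cup B$, and if $u\in\overline{S}$ both marginal gains vanish. Hence $\mathcal{T}$ is submodular in both feedback models.
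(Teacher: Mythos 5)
Your proposal is correct and follows essentially the same route as the paper's proof: the identical residual general threshold model with renormalized local influence functions $f_v'(D)=\bigl(f_v(B_v\cup D)-f_v(B_v)\bigr)/\bigl(1-f_v(B_v)\bigr)$, the identical threshold coupling $\theta_v'=(\theta_v-\beta_v)/(1-\beta_v)$, and the same conclusion via \citet{MosselR10}, with your choice to keep the observed-infected vertices as seeds (rather than deleting them and adding $|T|$ to the influence) being an immaterial difference. Beyond that, your explicit factorization of the full-adoption feedback event into a closure condition on uninfected vertices and an achievability condition on infected ones supplies rigor for the step the paper compresses into ``a simple coupling argument,'' and your per-feedback-model definition of $B_v$ (taking $B_v=S\cap\Gamma(v)$ under myopic feedback) is in fact more careful than the paper's uniform use of $\Gamma(v)\cap T$, which under myopic feedback need not satisfy $\varphi(v)=f_v(\Gamma(v)\cap T)$.
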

\begin{proof}
Fix a feedback model, $S\subseteq V$ and $\varphi$ that is a valid feedback of $S$.
Let $T=\{v\mid\varphi(v)=\checkmark\}$ be the set of infected vertices indicated by the feedback of $S$.
We consider a new general threshold model $I_{G',F'}$ defined as follows:
\begin{itemize}
    \item $G'$ is obtained by removing vertices in $T$ from $G$ (and the edges connecting from/to vertices in $T$ are also removed);
    \item For any $v\in V'=V\setminus T$, $\Gamma(v)\cap T$ is the set of in-neighbors of $v$ that are removed. Define $f_v'(Y)=\frac{f_v((\Gamma(v)\cap T)\cup Y)-\varphi(v)}{1-\varphi(v)}$ for each subset $Y$ of $v$'s in-neighbors in the new graph $G'$: $Y\subseteq \Gamma(v)\cap V'$.
\end{itemize}
Notice that $f_v'$ is a valid local influence function.
$f_v'$ is clearly monotone.
For each $v\in V'$, we have $\varphi(v)=f_v(\Gamma(v)\cap T)$, as this is exactly the feedback received from the fact that $v$ has not yet infected.
It is then easy to see that $f_v'$ is always non-negative and $f_v'(\emptyset)=0$.

A simple coupling argument can show that
\begin{equation}\label{eqn:coupling2}
    \E_{\phi\simeq\varphi}\left[\left|I_{G,F}^\phi(S\cup X)\right|\right]=
    \sigma_{G',F'}(X\setminus T)+|T|.
\end{equation}
To define the coupling, for each $v\in V'$, the threshold of $v$ in $G$, $\theta_v$, is coupled with the threshold of $v$ in $G'$ as $\theta_v'=\frac{\theta_v-\varphi(v)}{1-\varphi(v)}$.
This is a valid coupling: by $\phi\simeq\varphi$, we know that $\theta_v$ is sampled uniformly at random from $(\varphi(v),1]$, which indicates that the marginal distribution of $\theta_v'$ is the uniform distribution on $(0,1]$, which makes $I_{G',F'}$ a valid general threshold model.

Under this coupling, on the vertices $V'$, the cascade in $G$ with seeds $S\cup X$ and partial realization $\varphi$ is identical to the cascade in $G'$ with seeds $X\setminus T$.
To see this, consider an arbitrary vertex $v\in V'$ and let $\IN_v$ and $\IN_v'$ be $v$'s infected neighbors in $G$ and $G'$ respectively.
For induction hypothesis, suppose the two cascade processes before $v$'s infection are identical.
We have $\IN_v=\IN_v'\cup(\Gamma(v)\cap T)$ and $\IN_v'\cap(\Gamma(v)\cap T)=\emptyset$.
It is easy to see from our construction that $v$ is infected in $G$ if and only if $v$ is infected in $G'$:
$$f_v(\IN_v)\geq\theta_v\Leftrightarrow f_v'(\IN_v')=\frac{f_v(\IN_v)-\varphi(v)}{1-\varphi(v)}\geq\theta_v'.$$
This proves Equation~(\ref{eqn:coupling2}).

Finally, since each $f_v(\cdot)$ is assumed to be submodular, it is easy to see that each $f_v'(\cdot)$ is submodular by our definition.
Thus, $I_{G',F'}$ is a submodular model.
This, combined with Equation~(\ref{eqn:coupling2}), proves the proposition.
\end{proof}

\subsubsection{Supremum of Greedy Adaptivity Gap}
All the results in Section~\ref{sect:sup} about the supremum of the greedy adaptivity gap can be extended easily to the submodular general threshold model.
In particular, Lemma~\ref{lem:additive} and Lemma~\ref{lem:ltm_precribed} are under \LTM, which is compatible with the submodular general threshold model.
Theorem~\ref{thm:sup_gap} and Theorem~\ref{thm:adaptivityGap} are proved by providing an example with a diffusion model that is a combination of \ICM and \LTM, and the diffusion model constructed in Definition~\ref{def:LTMc} can be easily described in the formulation of the general threshold model, since both \ICM and \LTM can be described in the general threshold model.

\bibliography{reference}
\bibliographystyle{theapa}

\end{document}